\title{Coherent control and distinguishability of quantum channels via PBS-diagrams}
\author{Cyril Branciard}{Universit\'e Grenoble Alpes, CNRS, Grenoble INP, Institut N\'eel, F-38000 Grenoble, France}{cyril.branciard@neel.cnrs.fr}{https://orcid.org/0000-0001-9460-825X}{}
\author{Alexandre Cl\'ement}{Universit\'e de Lorraine, CNRS, Inria, LORIA, F-54000 Nancy, France}{alexandre.clement@loria.fr}{https://orcid.org/0000-0002-7958-5712}{}
\author{Mehdi Mhalla}{Universit\'e Grenoble Alpes, CNRS, Grenoble INP, LIG, F-38000 Grenoble, France}{mehdi.mhalla@univ-grenoble-alpes.fr}{https://orcid.org/0000-0003-4178-5396}{}
\author{Simon Perdrix}{Universit\'e de Lorraine, CNRS, Inria, LORIA, F-54000 Nancy, France}{simon.perdrix@loria.fr}{https://orcid.org/0000-0002-1808-2409}{}
\authorrunning{C. Branciard, A. Cl\'ement, M. Mhalla, and S. Perdrix} 
\keywords{Quantum Computing, Diagrammatic Language, Quantum Control, Polarising Beam Splitter, Categorical Quantum Mechanics, Quantum Switch.} 
\tikzstyle{diamant}=[diamond, fill=couleurdefond, draw=black]
\tikzstyle{newe}=[rectangle, fill={gray!15}, draw=black, tikzit shape=rectangle, inner sep=0.2em]
\tikzstyle{cercle}=[circle, fill=couleurdefond, draw=black]
\tikzstyle{cartouche}=[rounded rectangle, fill=couleurdefond, draw=black]
\tikzstyle{neg}=[rounded rectangle, fill=couleurdefond, draw=black, execute at end node={$\neg$}]
\tikzstyle{sneg}=[rounded rectangle, fill=couleurdefond, draw=black, execute at end node={$\neg$}, scale=0.8]
\tikzstyle{negserie}=[rounded rectangle, fill=couleurdefond, draw=black, execute at end node={\footnotesize$\star\star$}]
\tikzstyle{diagrammevide}=[rectangle, fill=couleurdefond, draw=black, inner sep=1.25em, borddiagrammevide, tikzit shape=rectangle]
\tikzstyle{mdiagrammevide}=[rectangle, fill=couleurdefond, draw=black, inner sep=0.75em, sborddiagrammevide, tikzit shape=rectangle]
\tikzstyle{msdiagrammevide}=[rectangle, fill=couleurdefond, draw=black, inner sep=0.7em, msborddiagrammevide, tikzit shape=rectangle]
\tikzstyle{sdiagrammevide}=[rectangle, fill=couleurdefond, draw=black, inner sep=0.5em, sborddiagrammevide, tikzit shape=rectangle]
\tikzstyle{xsdiagrammevide}=[rectangle, fill=couleurdefond, draw=black, inner sep=0.4em, xsborddiagrammevide, tikzit shape=rectangle]
\tikzstyle{bs}=[shape=beam, fill=couleurdefond, draw, inner sep=0.25em, thick, tikzit fill=white]
\tikzstyle{sbs}=[shape=beam, fill=couleurdefond, draw, inner sep=0.2em, thick, tikzit fill=white]
\tikzstyle{boite22}=[fill=white, draw=black, shape=rectangle, minimum height=1cm, minimum width=0.5cm]
\tikzstyle{boite15}=[fill=white, draw=black, shape=rectangle, minimum height=0.7cm, minimum width=0.5cm]
\tikzstyle{boite2}=[fill=white, draw=black, shape=rectangle, minimum height=0cm, minimum width=0cm]
\tikzstyle{snegpotentiel}=[fill=couleurdefond, draw=black, shape=rounded rectangle, inner sep=0.25em, tikzit fill={rgb,255: red,191; green,191; blue,191}, execute at end node={\footnotesize$\star$}]
\tikzstyle{negpotentiel}=[fill=couleurdefond, draw=black, shape=rounded rectangle, tikzit fill={rgb,255: red,191; green,191; blue,191}, execute at end node={$\star$}]
\tikzstyle{token}=[fill=black, draw=black, shape=circle, inner sep=0.1em]
\tikzstyle{whitetoken}=[fill=white, draw=black, shape=circle, inner sep=0.1em]
\tikzstyle{boitePBS}=[fill=white, draw=gray, thick, shape=rectangle, rounded corners=3pt, minimum height=0.6cm, inner sep=0.1em, minimum width=0.5cm]
\tikzstyle{boitePBS2}=[fill=white, draw=gray, thick, shape=rectangle, rounded corners=3pt, minimum height=0.55cm, inner sep=0.1em, minimum width=0.5cm]
\tikzstyle{new}=[-]
\tikzstyle{tirets}=[-, draw=black, dashed]
\tikzstyle{noire}=[-, draw=black]
\tikzstyle{ep}=[-, draw=black]
\tikzstyle{longdashed}=[-, dash pattern=on 5pt off 5pt]
\tikzstyle{pointilles}=[-, draw=black, dotted]
\tikzstyle{grise}=[-, draw={rgb,255: red,191; green,191; blue,191}]
\tikzstyle{borddiagrammevide}=[-, dash pattern=on 0.5em off 0.5em on 0.5em off 0.5em on 0.5em off 0em]
\tikzstyle{msborddiagrammevide}=[-, dash pattern=on 0.28em off 0.28em on 0.28em off 0.28em on 0.28em off 0em]
\tikzstyle{sborddiagrammevide}=[-, dash pattern=on 0.2em off 0.2em on 0.2em off 0.2em on 0.2em off 0em]
\tikzstyle{xsborddiagrammevide}=[-, dash pattern=on 0.1em off 0.1em on 0.15em off 0.1em on 0.1em off 0em]
\newcolumntype{C}{>{$}c<{$}}  
\newcolumntype{R}{>{$}r<{$}}  
\newcolumntype{L}{>{$}l<{$}}  
\DeclareFontShape{OMX}{cmex}{m}{n}{
  <-7.5> cmex7
  <7.5-8.5> cmex8
  <8.5-9.5> cmex9
  <9.5-> cmex10
}{}
\newcommand{\interp}[1]{\left\llbracket #1 \right\rrbracket}
\newenvironment{disarray}%
 {\everymath{\displaystyle\everymath{}}\array}%
 {\endarray}
\newenvironment{scriptarray}%
 {\everymath{\scriptstyle\everymath{}}\array}%
 {\endarray}
\renewcommand\H{\mathcal H} 
\newcommand\Hin{\mathcal H_{\textup{in}}} 
\newcommand\Hout{\mathcal H_{\textup{out}}}
\newcommand\Lin{\mathcal L} 
\newcommand\E{\mathcal E}
\renewcommand\S{\mathcal S}
\newcommand\N{\mathbb N}
\newcommand\C{\mathcal{C}}
\newcommand\CC{\mathbb C}
\newcommand\F{\mathcal F}
\newcommand\G{\mathcal G}
\renewcommand\L{\mathcal L}
\newcommand\hv{\{\rightarrow,\uparrow\}}
\newlength{\negph@wd}
\DeclareRobustCommand{\negphantom}[1]{%
  \ifmmode
    \mathpalette\negph@math{#1}%
  \else
    \negph@do{#1}%
  \fi
}
\newcommand{\negph@math}[2]{\negph@do{$\m@th#1#2$}}
\newcommand{\negph@do}[1]{%
  \settowidth{\negph@wd}{#1}%
  \hspace*{-\negph@wd}%
}
\newlength{\halfnegph@wd}
\DeclareRobustCommand{\halfnegphantom}[1]{%
  \ifmmode
    \mathpalette\halfnegph@math{#1}%
  \else
    \halfnegph@do{#1}%
  \fi
}
\newcommand{\halfnegph@math}[2]{\halfnegph@do{$\m@th#1#2$}}
\newcommand{\halfnegph@do}[1]{%
  \settowidth{\halfnegph@wd}{#1}%
  \hspace*{-0.5\halfnegph@wd}%
}
\newlength{\halfph@wd}
\DeclareRobustCommand{\halfphantom}[1]{%
  \ifmmode
    \mathpalette\halfph@math{#1}%
  \else
    \halfph@do{#1}%
  \fi
}
\newcommand{\halfph@math}[2]{\halfph@do{$\m@th#1#2$}}
\newcommand{\halfph@do}[1]{%
  \settowidth{\halfph@wd}{#1}%
  \hspace*{0.5\halfph@wd}%
}
\newcommand\changelargeur[2]{#1\negphantom{#1}\phantom{#2}}
\newcommand\changelargeursicentre[2]{\halfnegphantom{#1}\halfphantom{#2}#1\halfnegphantom{#1}\halfphantom{#2}}
\newlength\traitsdiagrammevide
\newcommand\echellefils{0.35}
\newcommand{\urlalt}[2]{\href{#2}{\nolinkurl{#1}}}
\renewcommand{\ket}[1]{|{#1}\rangle}
\renewcommand{\bra}[1]{\langle{#1}|}
\newcommand{\typ}[3][\mathcal H]{{#1}^{(#3)}}
\begin{document}

\maketitle

\begin{abstract}
    
    We introduce a graphical language for coherent control of general quantum channels inspired by practical quantum optical setups involving polarising beam splitters (PBS). As  standard completely positive trace preserving maps  are known not to be appropriate to represent coherently controlled quantum channels, we propose to instead use purified channels, an extension of Stinespring's dilation. We characterise the observational equivalence of purified channels in various coherent-control contexts, paving the way towards a faithful representation of quantum channels under coherent control. 
    
\end{abstract}

\section{Introduction}

 Unlike the usual sequential and parallel compositions, coherent control allows one to perform two or more quantum evolutions in superposition. It is fairly easy with quantum optics---an important player in the development of quantum technologies---to construct setups that perform some coherent control. A polarising beam splitter (PBS) precisely allows one to do that: by reflecting for instance horizontally polarised particles and transmitting vertically polarised ones, it lets the polarisation control the path, and thereby the physical devices encountered, 
in a coherent way~\cite{friis14,rambo16}. This finds some interesting applications for quantum information processing (e.g., for error filtration~\cite{gisin05}), including the ability to perform some operations in an indefinite causal order, as for instance in the so-called quantum switch~\cite{chiribella13,araujo14,goswami18}, where the order in which two quantum operations are applied is controlled by the state of a qubit.

General quantum evolutions---a.k.a. quantum channels---are commonly represented as completely positive trace preserving (CPTP) maps. CPTP maps can naturally be composed in sequence and in parallel. However, it has been realised that the description of quantum channels in terms of CPTP maps is not appropriate for some particular setups involving coherent control~\cite{oi03,Abbott2020communication,chiribella19,kristjansson19}. One indeed needs some more information about their practical implementation to unambiguously determine the behaviour of such setups, and it was recently proposed to complete the description of channels by so-called transformation matrices~\cite{Abbott2020communication}, or vacuum extensions~\cite{chiribella19,kristjansson19}.

Here we consider a general class of setups involving PBS, and study how these can be used to coherently control quantum channels. We build upon the graphical language of PBS-diagrams introduced in~\cite{clement2020pbs}, in which the controlled operations were ``pure'' (typically, unitary), and extend it to allow for the control of more general quantum channels. As the description of channels as CPTP maps is inadequate here, we propose to work with \emph{purified channels} based on a unitary extension of Stinespring's dilation~\cite{Stinespring55Positive}. 

We address the question of the observational equivalence of purified channels, and show that different purified channels can be indistinguishable. To do so, we use PBS-diagrams to formalise three kinds of \emph{contexts}: when the context is PBS-free, we recover that two purified channels are indistinguishable if and only if they lead to the same CPTP map. When the context allows for PBS but no polarisation flips, we recover the characterisation in terms of superoperators and transformation matrices which was introduced for a particular setup~\cite{Abbott2020communication}. When we allow for arbitrary contexts, we obtain a characterisation of observational equivalence involving  ``second-level'' superoperators and transformation matrices. We finally open the discussion to more general coherent-control settings, and propose a refined equivalence relation as a candidate for characterising channel (in)distinguishabilty in such scenarios.

\section{PBS-diagrams}

\textup{PBS}-diagrams were introduced in~\cite{clement2020pbs} as a language for coherent control of ``pure'' quantum evolutions.
They aim at describing practical scenarios where a flying particle goes through an experimental setup, and is routed via polarising beam splitters. In addition to its polarisation, the particle carries some ``data'' register, whose state is described in some Hilbert space $\H$, and on which a number ``pure'' linear (typically, unitary) operators are applied.

Here  we shall enrich the pure PBS-diagram language so as to incorporate the coherent control of more general quantum channels. 
To this purpose, we start by defining an abstract version of \textup{PBS}-diagrams that we call \emph{bare diagrams}, and which we equip with a word path semantics describing the trajectory and change of polarisation of a particle that enters the diagram through some given input wire: the  word path semantics gives its new polarisation and position at the output of the diagram, together with  a word over some alphabet describing the sequence of \emph{bare gates}---where the quantum channels we want to control are located---crossed.
Subscribing to the idea that any general quantum operation can be seen as a unitary evolution of the system under consideration and its environment, we then define \emph{purified channels}, which can be coherently controlled in a similar way to the PBS-diagrams of~\cite{clement2020pbs}.
Replacing bare gates with purified channels, we obtain an extension%
\footnote{Strictly speaking, the PBS-diagrams of~\cite{clement2020pbs} did not require the operations inside the gates to be unitary, while here we impose such a restriction \emph{a priori}. One could however also consider non-unitary operations in our framework here, although one would lose our motivation based on the unitary extension of Stinespring's dilation.}
of the graphical language of~\cite{clement2020pbs}, which we call \emph{extended \textup{PBS}-diagrams} and which we equip with a quantum semantics obtained after discarding the (inaccessible) environments of all gates.

\subsection{Bare PBS-diagrams}
\subsubsection{Syntax}

A \emph{bare $\textup{PBS}$-diagram} is made of polarising beam splitters $\tikzfig{beamsplitter-xs}$, polarisation flips $\tikzfig{neg-xs}$, and bare gates $\tikzfig{bare-QC-Ueps-indx}$.
Every bare gate is indexed by a unique label (here, $a$) used to identify the gate in the diagram.
These building blocks are connected via wires represented using the identity $\tikzfig{filcourt-s}$ or the swap $\tikzfig{swap-xs}$.
The empty diagram is denoted by $\tikzfig{diagrammevide-s}$. Diagrams can  be combined by means of sequential composition $ \circ $, parallel composition $\oplus$,\footnote{Denoted $\otimes$ in~\cite{clement2020pbs}. Here we change the notation to reflect how the parallel composition affects the structure of the Hilbert space describing the position of the particle (see Section~\ref{subsec:extendedPBS}).} and trace $Tr(\cdot)$, which represents a feedback loop.

We define a  typing judgement $\Gamma \vdash D : n$, where $\Gamma$ is the alphabet containing all gate indices,%
\footnote{We may write simply $D : n$, or even just $D$, when $\Gamma$ is not relevant or is clear from the context.}
to guarantee that the diagrams are well-formed---in particular, that the gate indices are unique---using a linear typing discipline:
\begin{definition}[Bare PBS-diagram]
 \label{def:barePBS-diagrams}
A  bare \textup{PBS}-diagram ${\Gamma} \vdash D:n$ (with $n \in{ \mathbb N}$) is inductively defined as: 
$$
\emptyset\vdash
\tikzfig{diagrammevide-s}:0\qquad
\emptyset\vdash
\tikzfig{filcourt-s}:1\qquad
\emptyset\vdash
\tikzfig{neg-xs}:1\qquad
\emptyset\vdash\tikzfig{swap-xs}:2\qquad
\emptyset\vdash
\tikzfig{beamsplitter-xs}:2\qquad
\{a\} \vdash \tikzfig{bare-QC-Ueps-indx}:{1}
$$
$$
\dfrac{
\Gamma_1\!\vdash\!D_1\!:\!n \ \ \
\Gamma_2\!\vdash\!D_2\!:\!n \ \ \
\Gamma_1\!\cap\!\Gamma_2\!=\!\emptyset}{
{\Gamma _1} \cup {\Gamma _2} \vdash
D_2\circ D_1:{n}}
\quad
\dfrac{
{\Gamma _1} \!\vdash\!
D_1\!:\!{n_1}  \ \ \
{\Gamma _2} \!\vdash \!
D_2 \!:\!{n_2}
\ \ \ \Gamma_1\!\cap\! \Gamma_2\!=\!\emptyset}{
{\Gamma _1} \cup {\Gamma _2} \vdash 
D_1\oplus D_2:{n_1+n_2}}
\quad
\dfrac{
{\Gamma} \vdash 
D:{n+1}}{
{\Gamma} \vdash     
Tr(D):{n}}
$$
\end{definition}

\subparagraph*{Graphical representation.} PBS-diagrams  form a graphical language: compositions and trace are respectively depicted as follows {(for diagrams generically depicted as $\tikzfig{fig_generic_D}$)}: \\ \centerline{${\tikzfig{figCyril_generic_D2} \circ \tikzfig{figCyril_generic_D1} =} \tikzfig{composition-s}\qquad {\tikzfig{figCyril_generic_D1} \oplus \tikzfig{figCyril_generic_D2} =} \tikzfig{produittensoriel-s}\qquad {Tr\left(\tikzfig{figCyril_generic_Dn1}\right) = }\tikzfig{trace-s}$}

\medskip

Examples of bare PBS-diagrams are given in Fig.~\ref{fig:barePBS} below. Note that two \emph{a priori} distinct  constructions, like for instance $Tr(\tikzfig{bare-QC-Ueps-indx} \oplus \tikzfig{beamsplitter-xs})$ and $\tikzfig{bare-QC-Ueps-indx} \oplus Tr(\tikzfig{beamsplitter-xs})$, can lead to the same graphical representation $\tikzfig{bareasurtrbs_}$. To avoid ambiguity, we define diagrams modulo a structural congruence detailed in Appendix~\ref{appendix_congruence}. Roughly speaking, the structural congruence guarantees that (\emph{i}) two constructions leading to the same graphical representation are equivalent, and (\emph{ii}) a diagram can be deformed at will (without changing its topology), e.g.:
{\tikzset{tikzfig/.style={baseline=-0.25em,scale=\echellefils,every node/.style={scale=0.7}}}
\[
\!\!\!\!\tikzfig{swapswap-xs}=\tikzfig{filsparalleleslongs-xs}\qquad \tikzfig{natswap1}=\tikzfig{natswap2}\qquad \tikzfig{yankingvariantecentresurfil-xs}=\tikzfig{filmoyen-xs}\qquad \tikzfig{tracegrandfgb-xs}=\tikzfig{tracegbgrandf-xs}\]}

Note in particular that the length of the wires does not matter. Physically, if these diagrams were to be realised in practical setups, this would mean that the experiment should be insensible to the time at which the particle would go through the various elements; if needed one could always add (possibly polarisation-dependent) delay lines (e.g., $\tikzfig{delayline}$) to correct for a possible time mismatch between different paths.

\subsubsection{Word path semantics}

The word path semantics  describes the trajectory of a particle which enters a bare PBS-diagram $\Gamma\vdash D:n$ with a polarisation in the standard basis state $c\in \hv$ (horizontal or vertical) and from a definite position $p \in [n] \coloneqq \{0,\ldots,n-1\}$. Because of the polarising beam splitters, the trajectory of the particle depends on its polarisation: we take it to be reflected when the polarisation is horizontal, and transmitted when the polarisation is vertical. The ``negation'' $\tikzfig{neg-xs}$ flips the polarisation, while the gates do not act on the polarisation. The word path semantics of a diagram describes, given an initial polarisation and position, the final polarisation and position together with the sequence of gates---represented by a word over $\Gamma$---that the particle goes through:

\begin{definition}[Word path semantics]\label{wordpathsemantics} Given a bare $\textup{PBS}$-diagram $\Gamma \vdash D:n$, a polarisation $c\in \hv$ and a position $p\in [n]$, let $(D,c,p)\xRightarrow w (c',p')$ with $w\in \Gamma^*$ a word over $\Gamma$ (or just $(D,c,p)\Rightarrow(c',p')$ for the empty word {$w = \epsilon$}) be inductively defined as follows:
\[
\left(\tikzfig{filcourt-s},c,0\right)\Rightarrow (c,0)
\qquad
\left(\tikzfig{neg-xs},\uparrow,0\right)\Rightarrow(\rightarrow,0)
\qquad
\left(\tikzfig{neg-xs},\rightarrow,0\right)\Rightarrow(\uparrow,0)
\]
\[
\left(\tikzfig{swap-xs},\,c,\,p\right)\Rightarrow(c,1-p)
\qquad
\left(\tikzfig{beamsplitter-xs},\rightarrow,p\right)\Rightarrow(\rightarrow,p)
\qquad
\left(\tikzfig{beamsplitter-xs},\,\uparrow\,,p\right)\Rightarrow(\uparrow,1-p)
\]
\[
\left(\tikzfig{bare-QC-Ueps-indx},c,0\right)\xRightarrow{a} (c,0)
\qquad
\dfrac{(D_1,c,p)\xRightarrow{w_1}(c',p')\qquad(D_2,c',p')\xRightarrow{w_2}(c'',p'')}{(D_2\circ D_1,c,p)\xRightarrow{w_1w_2} (c'',p'')}(\circ)
\]
\[
\dfrac{D_1:n_1\quad p<n_1\quad (D_1,c,p)\xRightarrow{w}(c',p')}{(D_1\oplus D_2,c,p)\xRightarrow{w}(c',p')}(\oplus_1)
\quad
\dfrac{D_1:n_1\quad p\geq n_1\quad (D_2,c,p{-}n_1)\xRightarrow{w}(c',p')}{(D_1\oplus D_2,c,p)\xRightarrow{w}(c',p'{+}n)}(\oplus_2)
\]
\[
\dfrac{D:n+1\qquad \forall i\in\{0,\ldots,k\}, (D,c_i,p_i)\xRightarrow{w_i\,}(c_{i+1},p_{i+1})\qquad(p_{i+1}=n){\Leftrightarrow}(i<k)}{(Tr(D),c_0,p_0)\xRightarrow{w_0\cdots w_k\,}(c_{k+1},p_{k+1})}(\mathsf T_k)
\]
  \noindent with $k = 0$, $1$, and $2$.
  
We denote by  $w^D_{c,p} \in \Gamma^*$  the word, $c^D_{c,p}\in \{\uparrow, \rightarrow\}$ the polarisation, and $p^D_{c,p}\in [n]$ the position s.t. $(D,c,p)\xRightarrow{w^D_{c,p}} (c^D_{c,p},p^D_{c,p})$.

\end{definition}
  
\begin{figure}
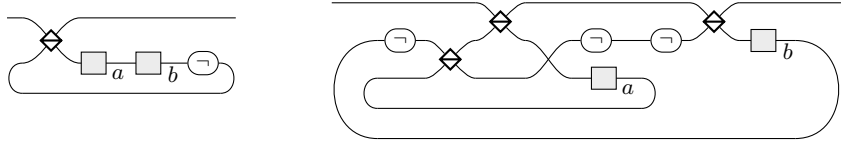

\centerline{$\tikzfig{diagrammesimpleabab}\quad\quad\quad\tikzfig{diagrammecompliqueabab}$}
\caption{\label{fig:barePBS}Two examples of bare \textup{PBS}-diagrams, with the same word path semantics: $(D,\uparrow,0)\xRightarrow{abab}(\uparrow,0)$ and $(D,\rightarrow,0)\xRightarrow{\epsilon}(\rightarrow,0)$. }
\end{figure}

The word path semantics is invariant modulo structural congruence (i.e.,~diagram deformation). Moreover, note that despite the traces which form feedback loops, the word path semantics is well-defined.%
\footnote{Definition~\ref{wordpathsemantics} does not provide any word path semantics for diagrams of type $D:0$. In fact, no word path semantics needs to be defined for such diagrams, as there is no position $p$ defining any input wire. Note also that for diagrams $D:n$ containing fully closed subdiagrams (e.g., of the form $D = D_1 \oplus D_2$ with $D_2:0$), the semantics does not depend on these fully closed subdiagrams.} Indeed, a particle entering the diagram through some input wire cannot go through a feedback loop (or any other part of the diagram) twice with the same polarisation, which justifies that $k$ only needs to go up to $2$ in Rule $(\mathsf T_k)$ above.
Intuitively, if a particle goes twice in a feedback loop with the same polarisation then it will loop forever; but because of time symmetry this also means that the particle went though the feedback loop infinitely many times in the past, which contradicts the fact that it entered through an input wire.
See \cref{appendix:semanticsdeform} for details about the formal proofs of these facts.

For similar reasons, each gate cannot appear more than twice along any path, or even in the family of all the possible paths of a diagram: 

\begin{restatable}{proposition}{proppasplusdedeuxfois}
\label{pasplusdedeuxfois}
Given a bare \textup{PBS}-diagram ${\Gamma} \vdash D:n$, $\forall a\in\Gamma$, one has $\sum\limits_{c\in\{\rightarrow,\uparrow\}, p\in [n]} |w_{c,p}^D|_a\le 2$, where $|w|_a$ denotes the number of occurrences of $a$ in the word $w$. Moreover, if $D$ is $\tikzfig{neg-xs}$-free then for any $c$ one has $\sum_{p\in[n]}|w^D_{c,p}|_a\leq1$.
\end{restatable}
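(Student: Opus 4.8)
The plan is to argue by structural induction on the typing derivation of $\Gamma\vdash D:n$, after first isolating the one structural fact that drives the bound: the \emph{reversibility} of the dynamics. Concretely, I would record as a preliminary lemma that the input/output map $\sigma_D\colon\hv\times[n]\to\hv\times[n]$, $(c,p)\mapsto(c^D_{c,p},p^D_{c,p})$, is a bijection, and that it fixes the polarisation whenever $D$ is negation-free. This is itself a short induction: every generator acts as a bijection of $\hv\times[n]$ (the \textup{PBS} and the swap permute positions, the negation permutes polarisations, the gate and identity act trivially); bijectivity is preserved under $\circ$ and $\oplus$; and for $Tr(\cdot)$ it follows from the well-definedness of the word path semantics already granted in the text (the feedback loop terminates, by the time-symmetry argument), since the feedback of a bijection of a finite set is again a bijection. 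The polarisation-preservation claim holds because every generator other than the negation leaves $c$ unchanged.

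With this in hand, the base cases and the two compositional cases are immediate. Among the generators only the gate $\{a\}\vdash\ \cdot\ :1$ contributes, and there $\sum_{c,p}|w|_a=1+1=2$ with each polarisation contributing exactly $1$, matching both bounds. For $D=D_2\circ D_1$ and $D=D_1\oplus D_2$ the alphabets $\Gamma_1,\Gamma_2$ are disjoint, so a fixed $a$ lies in exactly one factor; since the word emitted by $D$ is the concatenation (respectively the selected one) of the words emitted by $D_1$ and $D_2$, and $a$ never occurs in letters coming from the other factor, $\sum_{c,p}|w^D_{c,p}|_a$ reduces to the corresponding sum for that factor, and the induction hypothesis closes the case (the same bookkeeping works polarisation by polarisation for the refined bound).

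The real work is the trace $D=Tr(D')$ with $D':n+1$. Rule $(\mathsf T_k)$ decomposes a trajectory from an input $(c_0,p_0)$ with $p_0<n$ into $k+1\le 3$ successive traversals of $D'$, entered at the configurations $(c_0,p_0),(c_1,n),\dots,(c_k,n)$ --- the initial entry at a position $<n$ and the re-entries all at the looped position $n$ --- whence $|w^D_{c_0,p_0}|_a=\sum_{i=0}^{k}|w^{D'}_{c_i,p_i}|_a$. Summing over all $2n$ inputs, I would show that the collection of all entry configurations occurring across all input trajectories has no repetitions: the initial entries are pairwise distinct and sit at positions $<n$, whereas two coinciding re-entries would, on applying $\sigma_{D'}^{-1}$ step by step back along the two loops, force either two initial entries to coincide or an initial entry to sit at position $n$ --- both impossible. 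Since the entry configurations are therefore pairwise distinct elements of $\hv\times[n+1]$, we get $\sum_{c_0,p_0}|w^D_{c_0,p_0}|_a\le\sum_{(c,p)\in\hv\times[n+1]}|w^{D'}_{c,p}|_a\le 2$ by the induction hypothesis. For the refined bound, $D'$ inherits negation-freeness and $\sigma_{D'}$ preserves polarisation, so fixing $c$ confines the analysis to polarisation $c$ (with a single possible re-entry $(c,n)$), and the same distinctness argument yields $\sum_{p}|w^D_{c,p}|_a\le\sum_{p\in[n+1]}|w^{D'}_{c,p}|_a\le 1$.

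I expect the trace case to be the main obstacle, and within it the crux is precisely the reversibility lemma: the non-repetition of re-entry configurations is the formal counterpart of the informal ``a particle cannot loop forever, as it would then have been looping in the past'' argument sketched in the text, and it rests entirely on the injectivity of $\sigma_{D'}$. The only other point requiring care is treating the two bounds uniformly: in the negation-free case I would carry the refined per-polarisation statement as the primary induction invariant, while in the general case the $\le 2$ bound comes from the distinctness of entry configurations rather than from the refinement.
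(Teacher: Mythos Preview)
Your proposal is correct and follows essentially the same approach as the paper's proof: structural induction on $D$, using the bijectivity of $(c,p)\mapsto(c^D_{c,p},p^D_{c,p})$ (and its polarisation-preservation in the negation-free case) to handle sequential composition and, crucially, to show non-repetition of the entry configurations in the trace case so that the sum is bounded by the corresponding sum for $D'$ over $\hv\times[n{+}1]$. The only cosmetic difference is that you isolate the bijectivity/polarisation-preservation as an upfront lemma and spell out the backtracking via $\sigma_{D'}^{-1}$, whereas the paper invokes the bijection inline; the content is the same.
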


The converse is also true: 
\begin{restatable}{proposition}{proprecipdeuxfois}
\label{recip2fois}
For any family of words $\{w_{c,p}\}_{(c,p)\in \{\rightarrow,\uparrow\} \times [n]}$ such that every letter appears at most twice  in the whole family, there exists a bare \textup{PBS}-diagram $D:n$ such that $w_{c,p}=w^D_{c,p}$ for all $c,p$. Furthermore if for any $c\in \{\rightarrow,\uparrow\}$, every letter appears at most once in $\{w_{c,p}\}_{p\in [n]}$, the bare \textup{PBS}-diagram $D$ can be chosen 
$\tikzfig{neg-xs}$-free.
\end{restatable}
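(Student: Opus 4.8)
The plan is to give an explicit construction, assembling $D$ from elementary single-gate \emph{gadgets} wired together with $\oplus$, $\circ$ and $Tr$ so as to enact the prescribed visiting sequences. The starting observation is that, forgetting the recorded letters, every generator acts as a bijection on $\hv\times[n]$ (the PBS $\tikzfig{beamsplitter-xs}$ is an involution, $\tikzfig{neg-xs}$ flips polarisation, swaps and wires permute positions), so the induced map $(c,p)\mapsto(c^D_{c,p},p^D_{c,p})$ is always a bijection that gates never influence; this gives complete freedom in the output behaviour, only the words $w^D_{c,p}$ being constrained. Writing $a$ also for the one-wire diagram carrying the gate labelled $a$, the first building block is a horizontal-only gadget $A^a_\rightarrow\coloneqq Tr\!\big(\tikzfig{beamsplitter-xs}\circ(a\oplus\mathrm{id})\circ\tikzfig{beamsplitter-xs}\big)$ of type $1$: unfolding \cref{wordpathsemantics} one checks $(A^a_\rightarrow,\rightarrow,0)\xRightarrow{a}(\rightarrow,0)$ and $(A^a_\rightarrow,\uparrow,0)\Rightarrow(\uparrow,0)$, so it applies $a$ to the horizontal particle only; placing the gate on the other wire gives the symmetric $A^a_\uparrow$, both $\tikzfig{neg-xs}$-free. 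The second building block is the merging gadget $\tikzfig{beamsplitter-xs}\circ(a\oplus\mathrm{id})\circ\tikzfig{beamsplitter-xs}$ of type $2$, which applies $a$ exactly to $(\rightarrow,0)$ and $(\uparrow,1)$ while leaving $(\rightarrow,1),(\uparrow,0)$ and all positions untouched; conjugating it by swap networks bringing any two wires to positions $0,1$ yields $G^a_{p_1,p_2}$, applying $a$ to $(\rightarrow,p_1)$ and $(\uparrow,p_2)$ only, again $\tikzfig{neg-xs}$-free.

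I would first treat the $\tikzfig{neg-xs}$-free case, where by hypothesis each letter occurs at most once among $\{w_{\rightarrow,p}\}_p$ and at most once among $\{w_{\uparrow,p}\}_p$. Each letter is then of exactly one of three kinds — in a single horizontal word, in a single vertical word, or once in a horizontal and once in a vertical word — realised respectively by an $A^a_\rightarrow$, an $A^a_\uparrow$, or a $G^a_{p_1,p_2}$ gadget (padded with $\oplus\,\mathrm{id}$ to act on the correct positions); since only these $\tikzfig{neg-xs}$-free gadgets and $\circ,\oplus,Tr$, swaps occur, the resulting $D$ is $\tikzfig{neg-xs}$-free, which establishes the ``furthermore'' clause. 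It then remains to compose the gadgets with $\circ$ in an order for which every particle meets its gates in the prescribed sequence. When a single linear order of the gadgets compatible with all $2n$ words exists, plain sequential composition suffices; the difficulty is that it need not — the switch $w_{\rightarrow,0}=ab$, $w_{\uparrow,0}=ba$ already forces the two particles to cross the shared gates $a,b$ in opposite orders, which no loop-free (left-to-right) layout can produce. This is exactly where the trace is essential: I route one of the two particles through a feedback loop so that it re-enters the relevant gadget block and revisits the shared gates in reversed order. By \cref{pasplusdedeuxfois} each gate is crossed at most twice, so these loops are traversed boundedly often and, by the termination argument underlying the well-definedness of \cref{wordpathsemantics}, always exit on an output wire.

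For the general statement the only new feature is a letter occurring twice in the \emph{same} polarisation (e.g.\ the two $a$'s of $w_{\uparrow,0}=abab$ in \cref{fig:barePBS}), which by \cref{pasplusdedeuxfois} is precisely the situation that cannot be $\tikzfig{neg-xs}$-free. I would handle such a double crossing by wrapping a merging gadget in $\tikzfig{neg-xs}$ gates inside a feedback loop, so that the particle is temporarily flipped into the complementary polarisation class and thus takes the gated branch a \emph{second} time, producing the same letter twice on the same polarisation, exactly as realised by the diagrams of \cref{fig:barePBS}; the sequencing of these gadgets is then carried out as above. Formally I would package everything as an induction on the total number of occurrences $\sum_{c,p}|w_{c,p}|$, each step inserting one gadget for the chosen gate and invoking the hypothesis on the family with that occurrence removed.

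The main obstacle is the global \emph{ordering} step: in general there is no linearisation of the gadgets realising all $2n$ prescribed sequences at once, so the construction must introduce feedback loops to reverse the order seen by selected particles, and one must prove (i) that the required loops can be inserted for \emph{any} family meeting the at-most-twice hypothesis, (ii) that no particle picks up a spurious letter or skips a required one while passing through the shared blocks — which rests on the clean bypass behaviour of the PBS-sandwich gadgets on their non-gated branch — and (iii) that every loop terminates, which follows from the time-symmetry/termination argument behind \cref{wordpathsemantics} together with the bound of \cref{pasplusdedeuxfois}. Once the wiring is fixed, checking that the diagram has exactly the target word path semantics is a routine, if tedious, induction on its structure.
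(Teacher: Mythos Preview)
Your gadgets $A^a_c$ and $G^a_{p_1,p_2}$ are correct, and assembling the diagram from such pieces is a natural idea. The genuine gap is exactly where you place ``the main obstacle'': the ordering step is not solved, and the mechanism you propose does not work as described. A trace sends a particle that exits on the last wire back to the \emph{input} side of the block; on its second pass it still traverses the gadgets in the same left-to-right order, never ``in reversed order'', and if it is on the gated branch of some $G^a$ on both passes it records $a$ twice. In your own example $w_{\rightarrow,0}=ab$, $w_{\uparrow,0}=ba$ on one wire, any sequential order of $G^a_{0,0}$ and $G^b_{0,0}$ gives both polarisations the same word, and looping the $\uparrow$-particle through the block a second time produces $abab$ or $baba$, not $ba$. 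The ``clean bypass on the non-gated branch'' is irrelevant here, since both particles must take the gated branch of both gadgets exactly once. Items~(i)--(iii) in your last paragraph are therefore not residual checks on a construction already given; they \emph{are} the construction, which is missing. Your closing induction (``insert one gadget and remove that occurrence'') is also ill-posed for doubly occurring letters: removing one occurrence of $a$ leaves a family in which the remaining $a$ must still be served by the \emph{same} bare gate (labels are unique), whereas each of your gadgets carries its own fresh gate.

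The paper's proof sidesteps the ordering problem with a different inductive scheme. It also inducts on $\sum_{c,p}|w_{c,p}|$, but never tries to linearise gadgets on the given $n$ wires. Each step peels off the last letter $a$ of some nonempty word; when $a$ occurs twice, say $w_{c_0,p_0}=ua$ and $w_{c_1,p_1}=vaw$, the induction hypothesis is invoked on a family over $n{+}1$ wires, with the pieces $u$, $v$, $w$ (and~$\epsilon$) redistributed onto the old wires and the fresh one. A single gate $a$ is then placed once on that fresh wire, inside a small explicit PBS/trace harness (one picture per case, according to whether $p_0=p_1$ and whether $c_0=c_1$). The harness---not a generic ``re-enter the block'' loop---arranges the two visits to $a$ at the required moments, and because the conflicting subwords now live on different wires, no global ordering conflict ever arises. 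Negations appear only in the two sub-cases with $c_0=c_1$, which is precisely the hypothesis under which the negation-free clause is not claimed.
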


The proofs are given in Appendices~\ref{app:pasplusdedeuxfois} and~\ref{app:recip2fois}.
Note in particular that the proof of Proposition~\ref{recip2fois} is constructive. For instance, the family $\{w_{\uparrow,0}=abab, w_{\rightarrow,0}=\epsilon\}$ can be obtained from the diagram of Fig.~\ref{fig:barePBS}~(Right). The solution is not unique in general and there is actually a simpler diagram, see Fig.~\ref{fig:barePBS}~(Left),  with the same word path semantics.

\subsection{Extended PBS-diagrams}
\label{subsec:extendedPBS}

We will now introduce extended \textup{PBS}-diagrams by filling every bare gate  with the description of a quantum channel. As recalled in the introduction, however, defining the coherent control of general channels (as we wish to do with PBS-diagrams) in an unambiguous way is not trivial. Here we propose to do so through the notion of purified channels, which are an extension of Stinespring's dilation of quantum channels~\cite{Stinespring55Positive}.

\subsubsection{Purified channels}
\label{sec:purif} 

A standard paradigm for quantum channels acting on a Hilbert space $\H$ is to describe them as CPTP maps, or superoperators  $\L(\H)\to\L(\H)$,%
\footnote{As this is the case of interest in PBS-diagrams (with $\H$ corresponding to the data register), we consider here channels with the same input and output Hilbert spaces.}
 where $\L(\H)$ denotes the set of linear operators on $\H$. As exemplified e.g. in~\cite{oi03,Abbott2020communication}, this representation is however ambiguous when it comes to describing quantum coherent control: two quantum channels with the same superoperator can behave differently in a coherent-control setting.

A possible way to overcome this issue is to ``go to the Church of the larger Hilbert space'', according to which any quantum channel can be interpreted as a pure quantum operation acting on both the quantum system and an environment. Mathematically, this corresponds to Stinespring's dilation theorem~\cite{Stinespring55Positive}, which states that any CPTP map acting on a Hilbert space $\H$ can be implemented with an isometry $V:\H\to \H\otimes \E$, where $\E$ denotes the Hilbert space attached to the environment, followed by a partial trace of the latter. Note that in this representation, the isometry $V$ can be understood as encoding both the creation of the environment $\E$ and the evolution of the joint system $\H\otimes \E$. Indeed, $V$ can always be decomposed into an environment initialisation $\ket {\varepsilon}\in \E$ and a unitary evolution $U:\H\otimes \E\to \H\otimes \E$ such that $V= U (I_\H\otimes \ket{\varepsilon})$, where $I_\H$ denotes the identity operator over $\H$.
In our approach to defining coherent control for quantum channels, we will precisely abide by this description in terms of unitary purifications, which we formalise as follows:

\begin{definition}[Purified channel]
Given a Hilbert space $\H$, a \emph{purified $\H$-channel} (or simply \emph{purified channel}, for short) is a triplet $[U,\ket \varepsilon, \E]$, where $\E$ is the local environment Hilbert space, $\ket \varepsilon \in \E$ is the environment initial state, and $U: \H\otimes \E \to \H\otimes \E$ is a unitary operator representing the evolution of the joint system.
We denote the set of purified ${\mathcal H}$-channels by $\mathfrak{C}({\mathcal H})$.
\end{definition}

As seen above, it directly follows from Stinespring's dilation theorem that any CPTP map $\L(\H) \to \L(\H)$ can be represented by a purified ${\mathcal H}$-channel, which is however not unique. Reciprocally, with any  purified $\H$-channel $[U,\ket \varepsilon, \E]$, we naturally associate the CPTP map 
$\S^{(1)}_{[U,\ket{\varepsilon},\E]}: \L(\H)\to\L(\H) = \rho \mapsto \textup{Tr}_{\E}\big(U (\rho \otimes \ket{\varepsilon}\!\bra{\varepsilon}) U^\dagger \big)$, where $\textup{Tr}_{\E}$ denotes the partial trace over $\E$, and which we shall represent graphically, using the circuit notations of Appendix~\ref{appendix_circuit_notations},%
\footnote{To manipulate unitary operations  and CPTP maps, it is convenient to use such circuit-like graphical representations, which correspond to standard circuit notations for ``pure'' operations, supplemented with a ground symbol $\tikzfig{ground}$ for the case of CPTP maps; see Appendix~\ref{appendix_circuit_notations} for details.}
as follows: $\S^{(1)}_{[U,\ket{\varepsilon},\E]} = \tikzfig{FU_D_p}$. 

One may however not trace out the environment straight away. In fact, decomposing Stinespring's dilation into an environment state initialisation and a unitary evolution of the joint system, as we did above, allows one to apply the same channel several times in a coherent manner if a particle goes through a gate several times. In that case we will consider that the same unitary is applied each time, without re-initialising the environment state (which we assume to not evolve between two applications of the channel).

\subsubsection{From bare to extended PBS-diagrams}

We are now in a position to define extended \textup{PBS}-diagrams of type $\mathcal H^{(n)}$, which are essentially bare PBS-diagrams of type $n$, where the gate indices are replaced by purified $\H$-channels.
Hence, instead of bare gates $\tikzfig{bare-QC-Ueps-indx}$, an extended \textup{PBS}-diagram contains gates of the form $\tikzfig{QC-Ueps}$, parametrised by a purified channel $[U,\ket \varepsilon,\E] \in \mathfrak{C}({\mathcal H})$ (where the Hilbert space $\E$ is not represented explicitly, in order not to overload the diagrams). 

This leads to the following inductive definition:

\begin{definition}[Extended PBS-diagram]
 \label{def:extendedPBS-diagrams}
An extended   \textup{PBS}-diagram $  D:\typ{\Gamma}n$ (with $n \in{ \mathbb N}$) is inductively defined as:
\[
\tikzfig{diagrammevide-s}\!:\!\typ{.}0\qquad 
\tikzfig{filcourt-s}\!:\!\typ{.}1\qquad
\tikzfig{neg-xs}\!:\!\typ{.}1\qquad
\tikzfig{swap-xs}\!:\!\typ{.}2\qquad
\tikzfig{beamsplitter-xs}\!:\!\typ{.}2 \qquad
\dfrac{[U,\ket \varepsilon,\E] \in \mathfrak{C}({\mathcal H})}{
\tikzfig{QC-Ueps}:\typ{\{x\}}{1}}
\]
\[
\dfrac{
D_1:\typ{\Gamma_1}n \quad
D_2 : \typ{\Gamma_2}n}{
D_2\circ D_1:\typ{\Gamma_1\cup\Gamma_2}{n}}\qquad
\dfrac{
D_1:\typ{\Gamma_1}{n_1}  \quad
D_2 : \typ{\Gamma_2}{n_2}
}{
D_1\oplus D_2:\typ{\Gamma_1\cup\Gamma_2}{n_1+n_2}}\qquad
 \dfrac{
D:\typ{\Gamma}{n+1}}{
Tr(D):\typ{\Gamma}{n}}
\]
\end{definition}

Extended PBS-diagrams are defined up to the same structural congruence as for bare PBS-diagrams.
It is convenient to explicitly define the map which, given a family of purified channels, transforms a bare diagram into the corresponding extended PBS-diagram:%
\footnote{To clarify which kind of diagram we are dealing with, in this subsection we use primed names (e.g., $D'$) when referring to bare PBS-diagrams, and nonprimed names for extended PBS-diagrams.}

\begin{definition}\label{def:map_bare_to_extended}
Given a bare \textup{PBS}-diagram $\Gamma\vdash D':n$ and a family of purified $\H$-channels $\mathcal G = ([U_a, \ket {\varepsilon_a}, \E_a])_{a\in \Gamma}$  indexed by elements of  $\Gamma$, let $[D']_{\mathcal G}:\mathcal \H^{(n)}$ be the extended \textup{PBS}-diagram inductively defined as $[\tikzfig{bare-QC-Ueps-indx}]_{([U_a, \ket {\varepsilon_a}, \E_a])} = \tikzfig{QC-Uepsa}$, $\forall g\in \{\tikzfig{diagrammevide-s},
\tikzfig{filcourt-s},
\tikzfig{neg-xs},
\tikzfig{swap-xs},
\tikzfig{beamsplitter-xs}\}$, $[g]_{\emptyset}=g$, $[D_2'\circ  D_1']_{\mathcal G_1\uplus\mathcal G_2} = [D_2']_{\mathcal G_2}\circ [D_1']_{\mathcal G_1}$, $[D_1'\oplus D_2']_{\mathcal G_1\uplus\mathcal G_2} = [D_1']_{\mathcal G_1}\oplus [D_2']_{\mathcal G_2}$  and $[Tr(D')]_{\mathcal G} = Tr([D']_{\mathcal G})$, where $\uplus$ is the disjoint union. 
\end{definition}

For any extended \textup{PBS}-diagram $D:\H^{(n)}$, there exists a bare diagram $\Gamma\vdash D':n$ and an indexed family of purified $\H$-channels  $\mathcal G$ s.t. $[D']_{\mathcal G} = D$. We call $D'$ an \emph{underlying bare diagram} of $D$ (which is unique, up to relabelling of the gates).

\subsubsection{Quantum semantics}

We now equip the extended PBS-diagrams with a quantum semantics, which is a CPTP map acting on the complete state of the particle that goes through it,~i.e.,~its joint polarisation, position and data state. To describe the quantum semantics of an extended PBS-diagram {$D:\typ{}n$}, it is convenient to rely on an underlying bare diagram $\Gamma\vdash D':n$ and a family of purified channels $\mathcal G$ s.t. $[D']_{\mathcal G} = D$ (so as to keep track of the environment spaces and be able to identify them via the bare gate indices). 
 
As we defined them, every purified channel comes with its local environment and a unitary evolution acting on both the data register and its local environment.  In order to define the overall evolution of the diagram, we consider the global environment as the tensor product of these local environments, and extend every unitary transformation to a global  transformation acting on the data register and the global environment:

\begin{definition}
Given  an indexed family of purified $\H$-channels  $\mathcal G = ([U_a, \ket {\varepsilon_a}, \E_a])_{a\in \Gamma}$, let $\E_{\mathcal G} \coloneqq \bigotimes_{a\in\Gamma} \E_a$, $\ket{\varepsilon_{\mathcal G}} \coloneqq \bigotimes_{a\in \Gamma}  \ket{\varepsilon_a} \in \E_{\mathcal G} $, and $\forall\,a\in \Gamma$, let $V_a^{\mathcal G} \coloneqq U_a \bigotimes_{x\in\Gamma\backslash\{a\}} I_{\E_x} \in \L(\mathcal H\otimes \mathcal E_{\mathcal G})$. 
\end{definition}

If a particle enters an extended PBS-diagram $D$ with a definite polarisation and position in some basis states $\ket{c} \in \CC^{\hv}$ and $\ket{p} \in \CC^{[n]}$, respectively, the sequence of transformations applied to the particle and the global environment when the particle goes through the diagram can be deduced from the word path semantics of the underlying bare diagram $D'$:
\[
\ket{c} \otimes \ket{p} \otimes \ket{\psi} \otimes \ket{\varepsilon_{\mathcal G}} \mapsto \ket{c^{D'}_{c,p}} \otimes \ket{p^{D'}_{c,p}} \otimes V^{\mathcal G}_{w^{D'}_{c,p}} (\ket{\psi} \otimes \ket{\varepsilon_{\mathcal G}})
\]
where $w^{D'}_{c,p}$, $c^{D'}_{c,p}$, and $p^{D'}_{c,p}$ are given by the word path semantics, i.e., $(D',c,p)\xRightarrow{w^{D'}_{c,p}} (c^{D'}_{c,p},p^{D'}_{c,p})$, and $V^{\mathcal G}_{w}$ is inductively defined as $V^{\mathcal G}_{\epsilon} \coloneqq I_{\H\otimes \E}$ and $\forall a\in \Gamma, \forall w\in \Gamma^*$, $V^{\mathcal G}_{aw}\coloneqq V^{\mathcal G}_{w}V^{\mathcal G}_{a}$.

One can actually consider inputting a particle in an arbitrary initial state (i.e., including superpositions of polarisation and position); the transformation applied by the diagram is then obtained from the one above, by linearity.
This leads us to define the following:
\begin{definition}
Given a bare \textup{PBS}-diagram $\Gamma \vdash D':n$ and a family of purified $\H$-channels $\mathcal G$ indexed with $\Gamma$,  let $$U_{D'}^{\mathcal G} \coloneqq \sum_{c\in\{\rightarrow,\uparrow\}, p\in [n]} \ket{c_{c,p}^{D'}}\bra{c}\otimes\ket{p_{c,p}^{D'}}\bra{p} \otimes V^{\mathcal G}_{w^{D'}_{c,p}}$$
\end{definition}

The triplet $[U_{D'}^{\mathcal G},\ket{\varepsilon_{\mathcal G}}, \E_{\mathcal G}]$ is nothing but a purified $(\CC^{\hv}\otimes\CC^{[n]}\otimes\H)$-channel, which describes the action of the corresponding extended PBS-diagram on the complete state of the particle.
Once the particle exits the diagram, the environments of all purified channels are not accessible anymore. 
As is well-known, the statistics of any ``input/output test'', which consists in preparing an arbitrary input state of the particle and measuring the output in an arbitrary basis, then only depend on the CPTP map (the superoperator) induced by $U_{D'}^{\mathcal G}$ above, with all environments initially prepared in the global state $\ket{\varepsilon_{\mathcal G}}$, and after tracing out all environment spaces---i.e., using circuit-like notations: $\tikzfig{FU_DG}\, $. This superoperator thus precisely captures input/output (in)distinguishability: two quantum channels have the same superoperator if and only if they are indistinguishable in any input/output test.
This provides the ground for our definition of the following quantum semantics:

\begin{definition}[Quantum Semantics]
\label{def:quantum_semantics}
Given an extended $\textup{PBS}$-diagram $D:\typ{}n$, let $\interp D: \L(\CC^{\hv}\otimes\CC^{[n]}\otimes\H) \to \L(\CC^{\hv}\otimes\CC^{[n]}\otimes\H)$ be the superoperator defined as $$ \interp D \ \coloneqq \ \rho \mapsto \textup{Tr}_{\mathcal E_{\G}} (U_{D'}^{\mathcal G}(\rho\otimes \ket{\varepsilon_{\mathcal G}}\bra{\varepsilon_{\mathcal G}}){U^{\mathcal G}_{D'}}^\dagger) \quad = \quad \tikzfig{FU_DG}$$ where $\Gamma \vdash D':n$ is an underlying bare diagram and $\mathcal G$ is an indexed family of purified $\H$-channels s.t. $[D']_{\mathcal G}= D$.
\end{definition}

Note that the quantum semantics is preserved by the `only topology matters' structural congruence on diagrams. Indeed, it is defined using only the family $\G$ and the word path semantics of its underlying bare diagram $D'$, which is invariant modulo diagram deformation. It is clear that when deforming $D$ we do not have to change $D'$ and $\G$, since it suffices to deform $D'$ accordingly.

\section{Observational equivalence of purified channels}
\label{sec:obs_equiv}

In this section we address the problem of deciding whether  two  purified channels $[U,\ket{\varepsilon},\E]$ and $[U',\ket{\varepsilon'},\E']$ can be distinguished in an experiment involving coherent control, within the framework of PBS-diagrams just established.
We introduce for that the notion of \emph{contexts}, which are extended PBS-diagrams with a ``hole'':
 if for  any context, filling its hole with $[U,\ket{\varepsilon},\E]$ or $[U',\ket{\varepsilon'},\E']$ leads to diagrams with the same quantum semantics, then the two purified channels $[U,\ket{\varepsilon},\E]$ and $[U',\ket{\varepsilon'},\E']$ are indistinguishable within our framework, even with the help of the coherent control provided by extended PBS-diagrams.

\subsection{Contexts}

A context is an extended \textup{PBS}-diagram with a \emph{hole}, i.e., a (unique) particular empty gate, without any purified channel specified \emph{a priori}. Equivalently a context can be seen as a bare \textup{PBS}-diagram partially filled: all but one gate are filled with purified channels. Formally:
\begin{definition}[Context]\label{def:context}
A context $C[\cdot] \!:\!\typ {\Gamma}n$ (with $n \!\in{ \!\mathbb N}$) is inductively defined as follows: 
\begin{itemize} 
\item The hole gate $\tikzfig{QC-Ueps-context}:\typ {\Gamma}1$ is a context;
\item If $C[\cdot]:\typ {\Gamma}n$ is a context and $D:\typ {\Gamma}n$ is an extended \textup{PBS}-diagram then
$ D\circ C[\cdot]:\typ {\Gamma}n$ and $C[\cdot]\circ D:\typ {\Gamma}n$ are contexts;
\item If $C[\cdot]:\typ {\Gamma}n$ is a context and $D:\typ {\Gamma}m$ is an extended \textup{PBS}-diagram then
$D \oplus C[\cdot]:\typ {\Gamma}{m+n}$ and $C[\cdot] \oplus D:\typ {\Gamma}{n+m}$ are contexts;
\item If $C[\cdot]:\typ {\Gamma}{n+1}$ is a context then $Tr(C[\cdot]) :\typ {\Gamma}{n}$ is a context.
\end{itemize}
\end{definition}

Like bare and extended PBS-diagrams, contexts are defined up to structural congruence.

\begin{definition}[Substitution]\label{def:substitution}
For any context $C[\cdot]:\typ{\Gamma}n$ and any purified $\H$-channel $[U,\ket{\varepsilon},\mathcal E]$, let $C[U,\ket {\varepsilon},\E]:\typ{\Gamma}n$ be the extended \textup{PBS}-diagram obtained by replacing the single hole $\tikzfig{QC-Ueps-context}$ in $C[\cdot]$ by the purified channel $\tikzfig{QC-Ueps}$.
\end{definition}

After some purified channel is plugged in, contexts allow one to compare the quantum semantics $\interp{C[U, \ket \varepsilon,\E]}$ and $ \interp{C[U',\ket{\varepsilon'},\E']}$ induced by different purified channels $[U,\ket{\varepsilon},\E]$ and $[U',\ket{\varepsilon'},\E']$.
We consider in the following three subclasses of contexts, depending on the kind of coherent control one may allow to distinguish purified channels: whether we exclude the use of PBS ($\tikzfig{beamsplitter-xs}$), of polarisation flips (``negations'' $\tikzfig{neg-xs}$), or whether we allow both.  
This leads us to define the following equivalence relations:

\begin{definition} [Observational equivalences]
\label{def_obs_equiv}
Given two purified $\H$-channels $[U,\ket{\varepsilon},\E]$ and $[U',\ket{\varepsilon'},\E']$, we consider the three following refinements of observational equivalences (for $i\in\{0,1,2\}$): 
 ${[U,\ket{\varepsilon},\E] \approx_i [U',\ket{\varepsilon'},\E']}$ if $\forall C[\cdot]\in \mathcal C_i$, $\interp{C[U, \ket \varepsilon,\E]} = \interp{C[U',\ket{\varepsilon'},\E']}$, where:
 \begin{itemize}
 \item $\mathcal C_0$ is the set of $\tikzfig{beamsplitter-xs}$-free contexts $C[\cdot]:\mathcal H^{(1)}$;
 \item $\mathcal C_1$ is the set of $\tikzfig{neg-xs}$-free contexts $C[\cdot]:\mathcal H^{(1)}$;
 \item $\mathcal C_2$ is the set of all contexts $C[\cdot]:\mathcal H^{(1)}$.
 \end{itemize}
\end{definition}

Note that contexts in $\mathcal C_0$ do not perform any coherent control; these consist in just a linear sequence of gates and negations, possibly composed in parallel with closed loops (i.e., traces of such sequences), including a hole gate somewhere.
It is clear, by deformation of diagrams, that more general contexts can always be described as follows: 

\begin{proposition} For any context $C[\cdot]\in \mathcal C_2$  there exists an extended \textup{PBS}-diagram ${D}$ such that  
$C[\cdot]=\tikzfig{contextedeformetrou}$. Moreover if $C[\cdot]\in \mathcal C_1$ then $D$ can be chosen  $\tikzfig{neg-xs}$-free.
\end{proposition}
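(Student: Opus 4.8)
The plan is to prove the statement by structural induction on the context $C[\cdot]$, following Definition~\ref{def:context}, showing that any context equals (modulo the structural congruence) the canonical form $Tr\big((\mathrm{Id}_n\oplus\tikzfig{QC-Ueps-context})\circ D\big)$ for some extended PBS-diagram $D:\typ{}{n+1}$, where $\mathrm{Id}_n$ denotes the $n$-wire identity diagram; this is exactly the picture $\tikzfig{contextedeformetrou}$, in which the hole sits on a single feedback wire closed by the (last-wire) trace while all remaining structure is absorbed into $D$. The only ingredients needed are the axioms of the traced symmetric monoidal structure already present in the structural congruence (Appendix~\ref{appendix_congruence}): naturality (tightening/sliding) of the trace, the superposing axiom relating $\oplus$ and $Tr$, the nesting of two traces, naturality of the trace under swaps, and yanking.

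For the base case, the bare hole $\tikzfig{QC-Ueps-context}:\typ{}{1}$ is already in canonical form with $D=\tikzfig{swap-xs}$: routing the input onto the feedback wire through the hole and back by the swap, then closing with the trace, reduces to the bare hole by yanking. For the inductive step I assume $C[\cdot]=Tr\big((\mathrm{Id}_n\oplus\tikzfig{QC-Ueps-context})\circ D\big)$ and treat each construction. The sequential cases are immediate from naturality of the trace: since $D'$ and the hole act on disjoint wires, $D'\circ C[\cdot]=Tr\big((\mathrm{Id}_n\oplus\tikzfig{QC-Ueps-context})\circ\tilde D\big)$ with $\tilde D=(D'\oplus\mathrm{Id}_1)\circ D$, and symmetrically $\tilde D=D\circ(D'\oplus\mathrm{Id}_1)$ for $C[\cdot]\circ D'$. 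The left-parallel case $D'\oplus C[\cdot]$ follows from the superposing axiom, pulling $D'$ under the trace to get $\tilde D=D'\oplus D:\typ{}{m+n+1}$, so that the traced hole-wire stays in last position.

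The two genuinely delicate cases are $C[\cdot]\oplus D'$ and $Tr(C[\cdot])$, and this is where I expect the real work to lie. In $C[\cdot]\oplus D'$ the feedback wire carrying the hole no longer occupies the last position after tensoring $D'$ on the right, so I must first apply naturality of the trace under the swaps that permute this wire back to the canonical outermost position, conjugating the body accordingly and absorbing the permutation into $\tilde D$. In $Tr(C[\cdot])$ I obtain a doubly-traced diagram; using the nesting axiom I merge the two traces, then reorder the traced wires (again via swap-naturality) so that the hole's wire becomes the single outermost traced wire, and finally reabsorb the remaining trace into the body by setting $\tilde D$ to be a trace of $D$ composed with the reordering swaps. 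The main obstacle throughout is thus the bookkeeping of wire positions together with the careful application of the swap-naturality (dinaturality) of the trace; no new phenomenon appears beyond these rewrites.

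Finally, for the refinement to $\mathcal C_1$, I observe that every manipulation above only inserts swaps and traces and reuses the ambient diagrams $D'$ and the inductively obtained $D$, none of which creates a $\tikzfig{neg-xs}$. Hence if $C[\cdot]$ is $\tikzfig{neg-xs}$-free then each $D'$ is $\tikzfig{neg-xs}$-free and, by the induction hypothesis, $D$ (and therefore $\tilde D$) can be chosen $\tikzfig{neg-xs}$-free, yielding the \emph{moreover} part.
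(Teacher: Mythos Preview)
Your proposal is correct and is precisely a formalisation of what the paper leaves implicit: the paper does not give a proof at all, stating only that the claim ``is clear, by deformation of diagrams'' before the proposition. Your structural induction on the context syntax, using the traced-PROP axioms of Appendix~\ref{appendix_congruence} (yanking for the base case, naturality in input/output for sequential composition, superposing for $\oplus$, and swap-naturality/nesting for $C[\cdot]\oplus D'$ and $Tr(C[\cdot])$), is exactly the rigorous unpacking of that phrase. The only comment is that your induction necessarily ranges over contexts of arbitrary type $\H^{(n)}$ (since the inductive definition passes through larger types via $\oplus$ and back down via $Tr$), whereas the proposition only states the $n=1$ case; this is harmless, but worth making explicit. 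The $\tikzfig{neg-xs}$-free refinement is handled correctly.
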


\begin{remark}  
In Definition~\ref{def_obs_equiv} we only consider contexts with a single input/output wire. This is because we intend to use contexts to distinguish purified channels; now, if one can distinguish two purified channels with a context of type $\H^{(n)}$ but no context of type $\H^{(1)}$, then intuitively this means that the extra power comes from the preparation of the initial state and/or some particular measurement, which are not represented in the context. Actually, except in the  $\mathcal C_0$ case, allowing multiple input/output wires does not increase the distinguishability power of the contexts (see \cref{multiwirecontextsnegfree,multiwirecontexts} in \cref{appendix_obs_equiv}).
\end{remark}

\subsection{Observational equivalence using PBS-free contexts}

Let us start by characterising which purified channels are indistinguishable by $\tikzfig{beamsplitter-xs}$-free contexts in $\mathcal C_0$. Not surprisingly, we recover the usual indistinguishability by input/output tests, which is captured by the fact that the two purified channels lead to the same superoperator:\footnote{In other words, if two purified channels can be distinguished using a $\tikzfig{beamsplitter-xs}$-free context, then they could already be distinguished with simply an input/output test (or with a trivial context $\tikzfig{QC-Ueps-context}$).}

\begin{definition}[(First-level) Superoperator] Given a purified $\H$-channel ${[U,\ket{\varepsilon},\E]}$, let $\S^{(1)}_{[U,\ket{\varepsilon},\E]}: \ \mathcal L(\H) \to \mathcal L(\H) = \rho \mapsto \textup{Tr}_{\E}\big(U (\rho \otimes \ket{\varepsilon}\!\bra{\varepsilon}) U^\dagger \big)$ be the (``first-level'') superoperator of  ${[U,\ket{\varepsilon},\E]}$. Graphically, 
$$\S^{(1)}_{[U,\ket{\varepsilon},\E]} \coloneqq \tikzfig{FU_L}$$
\end{definition}

\begin{restatable}{theorem}{thmcptpchannels} \label{thm:cptpchannels}
Given two purified $\H$-channels ${[U,\ket{\varepsilon},\E]}$ and ${[U',\ket{\varepsilon'},\E']}$, ${[U,\ket{\varepsilon},\E]} \approx_0 {[U',\ket{\varepsilon'},\E']}$ iff they have the same (first-level) superoperator.
 Graphically,
 \begin{equation}
\hspace{15mm} {[U,\ket{\varepsilon},\E]} \approx_0 {[U',\ket{\varepsilon'},\E']} \qquad\ \text{iff} \qquad\quad \tikzfig{FU_L}\quad = \quad\tikzfig{FU_R} \tag{S1}
\end{equation}
\end{restatable}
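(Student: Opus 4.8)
The plan is to prove the two implications separately, the forward (``only if'') direction being immediate and the backward (``if'') direction requiring a structural analysis of PBS-free contexts. For the ``only if'' direction I would simply test the two channels against the trivial context $C[\cdot] = \tikzfig{QC-Ueps-context}$, which belongs to $\mathcal C_0$. Since a single gate on a single wire leaves polarisation and position (the latter living in the one-dimensional space $\CC^{[1]}$) untouched, one computes $U_{D'}^{\mathcal G} = I_{\CC^{\hv}}\otimes I_{\CC^{[1]}}\otimes U$, so that its quantum semantics is, up to these trivial tensor factors, exactly the first-level superoperator of the plugged-in channel: $\interp{C[U,\ket{\varepsilon},\E]} = \S^{(1)}_{[U,\ket{\varepsilon},\E]}$. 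Hence $[U,\ket{\varepsilon},\E]\approx_0[U',\ket{\varepsilon'},\E']$ forces $\S^{(1)}_{[U,\ket{\varepsilon},\E]} = \S^{(1)}_{[U',\ket{\varepsilon'},\E']}$.

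For the ``if'' direction I would first exploit the fact that contexts in $\mathcal C_0$ perform no coherent control: being PBS-free, the routing induced by their swaps and traces is deterministic and independent of the polarisation, so (as already noted for $\mathcal C_0$) by diagram deformation any $C[\cdot]\in\mathcal C_0$ of type $\H^{(1)}$ reduces to a single linear sequence of gates and negations—the path actually followed by the particle—possibly placed in parallel with fully closed loops that, as observed after Definition~\ref{wordpathsemantics}, do not contribute to the type-$\H^{(1)}$ semantics. The crucial structural consequence, which I would isolate as a lemma, is that along this deterministic polarisation-independent trajectory each gate, in particular the hole, is traversed \emph{at most once}: a repeated wire segment would turn the trajectory into a cycle, i.e. the particle would loop forever, contradicting that it entered through and must leave by the single input/output wire.

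Given this, I would compute $\interp{C[U,\ket{\varepsilon},\E]}$ explicitly. The particle meets a fixed finite list of gates, each carrying its own environment and used at most once, interspersed with negations. Tracing out the environments can then be performed gate by gate: because each unitary $V_a^{\mathcal G}$ acts only on its own environment, and each environment is fresh and subsequently discarded, the standard identity $\textup{Tr}_{\E_2}\!\big(U_2(\textup{Tr}_{\E_1}(\cdots)\otimes\ket{\varepsilon_2}\!\bra{\varepsilon_2})U_2^\dagger\big)$ lets the whole semantics factor as the ordered composition of the individual first-level superoperators $\S^{(1)}$ of the gates, the negations merely inserting polarisation-flip unitaries that commute with the data-register channels. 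In this composition the hole appears through its superoperator $\S^{(1)}$ exactly once (or not at all). Replacing $[U,\ket{\varepsilon},\E]$ by $[U',\ket{\varepsilon'},\E']$ alters only this single factor, so equality of the first-level superoperators yields identical compositions, i.e. $\interp{C[U,\ket{\varepsilon},\E]} = \interp{C[U',\ket{\varepsilon'},\E']}$ for every $C[\cdot]\in\mathcal C_0$, which is precisely $\approx_0$.

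The main obstacle I anticipate is the ``at most once'' lemma together with the independence of the environments that underlies the gate-by-gate tracing. One must rule out any scenario in which the hole is revisited—exactly what coherent control via PBS would enable, and what the absence of PBS forbids—since a second pass would reuse the same, now-entangled environment and make the contribution of the hole depend on strictly more than its superoperator. Making precise that PBS-freeness forces a simple (acyclic) trajectory, and that the environments therefore remain mutually uncorrelated until they are all discarded, is where the real content of the argument lies; the ensuing composition of superoperators is then routine.
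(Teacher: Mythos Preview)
Your proposal is correct and follows essentially the same route as the paper: the forward direction via the trivial context, and the backward direction via the structural reduction of a PBS-free single-wire context to a linear chain of gates and negations (with the hole either on that chain or in a disconnected closed subdiagram), after which the semantics factors as a composition of first-level superoperators with the hole contributing through $\S^{(1)}$ at most once. The ``at most once'' point you single out as the crux is exactly what the paper's case split (hole on the chain vs.\ hole in a closed $\H^{(0)}$ piece) delivers, so no additional lemma is really needed beyond the deformation already invoked in the paper.
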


The proof is given in Appendix \ref{appendix:PBSfree}.

\subsection{Observational equivalence using negation-free contexts}

Allowing contexts with PBS significantly increases their power to distinguish purified channels.
In~\cite{Abbott2020communication}, a particular kind of coherent control---namely, the \emph{``first half of a quantum switch''}~\cite{chiribella13,araujo14,goswami18}---has been considered, which can be rephrased using contexts of the form: 
$$\tikzfig{half_QS}$$
The authors proved that with these particular contexts, two purified channels leading to the same (first-level) superoperator are indistinguishable if and only if they also have the same \emph{(first-level) transformation matrix}, which is defined as follows%
\footnote{Originally, in~\cite{Abbott2020communication}, the transformation matrix was defined for a given unitary purification of a CPTP map $\mathcal S: \L(\H) \to \L(\H)$ in the form 
 $U: \ket{\psi}_\H\otimes\ket{\varepsilon} \mapsto \sum_i K_i \ket{\psi}_\H\otimes\ket{i}_\E$
  (where the $K_i$'s are Kraus operators of $\mathcal S$, and where an environment space $\E$ was introduced, with an orthonormal basis $\{\ket{i}_\E\}_i$ and an initial state $\ket{\varepsilon}$),
   as $T \coloneqq \sum_i \bra{\varepsilon}i\rangle_\E \, K_i$. This is indeed consistent with our Definition~\ref{def_T1} here, as with these notations $U (I_\H\otimes \ket{\varepsilon}) = \sum_i K_i \otimes\ket{i}_\E$, so that $(I_\H\otimes \bra{\varepsilon}) U (I_\H\otimes \ket{\varepsilon}) = \sum_i \bra{\varepsilon}i\rangle_\E \, K_i = T$.}
   
\begin{definition}[(First-level) Transformation Matrix] \label{def_T1} Given a purified $\H$-channel ${[U,\ket{\varepsilon},\E]}$, let $T^{(1)}_{[U,\ket{\varepsilon},\E]} \coloneqq (I_\H\otimes \bra{\varepsilon}) U (I_\H\otimes \ket{\varepsilon})  \in \mathcal L(\H)$ be the (``first-level'') transformation matrix of  ${[U,\ket{\varepsilon},\E]}$. Graphically, 
$$T^{(1)}_{[U,\ket{\varepsilon},\E]} \coloneqq \tikzfig{TM1_L}$$
\end{definition}

We extend this result to any  $\tikzfig{neg-xs}$-free context. 

\begin{restatable}{theorem}{thmcaracobsequivnegfree}\label{caracobsequivnegfree}
Given two purified $\H$-channels ${[U,\ket{\varepsilon},\E]}$ and ${[U',\ket{\varepsilon'},\E']}$, ${[U,\ket{\varepsilon},\E]} \approx_1 {[U',\ket{\varepsilon'},\E']}$ iff
 they have the same (first-level) superoperator and the same (first-level) transformation matrix.
 Graphically,
  \begin{subequations}
    \begin{empheq}[left={{[U,\ket{\varepsilon},\E]} \approx_1 {[U',\ket{\varepsilon'},\E']} \qquad \text{iff} \qquad \empheqlbrace \ }]{align}
      \tikzfig{FU_L}\quad &= \quad\tikzfig{FU_R} \label{SP1} \tag{S1} \\[3mm]
      \tikzfig{TM1_L} \quad &=\quad\tikzfig{TM1_R} \label{TM1_v0} \tag{T1}
    \end{empheq}
  \end{subequations}
\end{restatable}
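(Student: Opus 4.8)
The plan is to prove both directions of the iff, recalling that $\approx_1$ quantifies over all $\tikzfig{neg-xs}$-free contexts $C[\cdot]\in\mathcal C_1$. For the \emph{only if} direction I would exhibit two specific negation-free contexts that together force conditions (S1) and (T1). Condition (S1) is easy: the trivial hole context $\tikzfig{QC-Ueps-context}$ lies in $\mathcal C_1$, and plugging in a purified channel yields exactly its first-level superoperator $\S^{(1)}$; hence $\approx_1$ implies equality of the $\S^{(1)}$'s (this is just the $\mathcal C_0\subseteq$ argument specialised, and indeed any $\tikzfig{beamsplitter-xs}$-free context lying in $\mathcal C_1$ already gives (S1) by Theorem~\ref{thm:cptpchannels}). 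For (T1), I would use the ``first half of a quantum switch'' context displayed just before the theorem, which is manifestly negation-free. The key computation is to work out $\interp{C[U,\ket\varepsilon,\E]}$ for that context: by the word path semantics of the underlying bare diagram, the horizontally-polarised branch sends the particle through the gate once while the vertically-polarised branch bypasses it (or vice versa), so the induced global unitary $U^{\mathcal G}_{D'}$ acts as a controlled application of $U$. Tracing out the environment and reading off the coherences between the two polarisation branches, the off-diagonal block of the output superoperator is governed precisely by $(I_\H\otimes\bra\varepsilon)U(I_\H\otimes\ket\varepsilon)=T^{(1)}$. Thus equality of quantum semantics for this single context forces $T^{(1)}_{[U,\ket\varepsilon,\E]}=T^{(1)}_{[U',\ket{\varepsilon'},\E']}$, giving (T1).

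For the \emph{if} direction I need the converse: assuming both purified channels share the same $\S^{(1)}$ and the same $T^{(1)}$, show that $\interp{C[U,\ket\varepsilon,\E]}=\interp{C[U',\ket{\varepsilon'},\E']}$ for \emph{every} negation-free context $C[\cdot]\in\mathcal C_1$. The strategy is structural: fix an underlying bare diagram $\Gamma\vdash D':n$ for the context (with the hole carrying some label $x$) together with the family $\mathcal G$ of purified channels filling the non-hole gates. Because the context is $\tikzfig{neg-xs}$-free, Proposition~\ref{pasplusdedeuxfois} guarantees that for each polarisation $c$, the hole gate $x$ appears \emph{at most once} in the family $\{w^{D'}_{c,p}\}_{p\in[n]}$. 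This is the crucial structural simplification: along any trajectory the hole channel is applied at most once per polarisation, so the environment $\E$ (resp.\ $\E'$) of the plugged-in channel is never revisited, and therefore only its first-order behaviour can ever be probed.

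The core of the argument is then to show that, under this ``at most once per polarisation'' constraint, the superoperator $\interp{C[U,\ket\varepsilon,\E]}$ depends on $[U,\ket\varepsilon,\E]$ only through $\S^{(1)}$ and $T^{(1)}$. I would expand $U^{\mathcal G}_{D'}$ using $V^{\mathcal G}_{w^{D'}_{c,p}}$ and split the words $w^{D'}_{c,p}$ according to whether the letter $x$ occurs. On the output density matrix, diagonal terms (where the hole letter appears in the same position on both the ket and bra side, or not at all) contribute via the completely-positive action of $U$ on the hole environment $\E$ — and after tracing out $\E$ this is exactly $\S^{(1)}$, which is shared by hypothesis. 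Cross terms, where $x$ occurs on one side but the corresponding branch does not pass through the hole (because of a different polarisation), leave a single dangling application of $U$ partially contracted against $\ket\varepsilon$ and $\bra\varepsilon$, yielding precisely a factor $(I_\H\otimes\bra\varepsilon)U(I_\H\otimes\ket\varepsilon)=T^{(1)}$ (or its adjoint), again shared by hypothesis. Since the ``at most once per polarisation'' property rules out any term requiring two applications of $U$ without an intervening reset, no higher-order data (anything genuinely depending on $U$ beyond $\S^{(1)}$ and $T^{(1)}$) can appear. Collecting the terms, every coefficient in $\interp{C[U,\ket\varepsilon,\E]}$ is a polynomial in entries of $\S^{(1)}$ and $T^{(1)}$ (and the fixed data of $\mathcal G$, which is identical in both diagrams), so the two semantics coincide.

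I expect the main obstacle to be the careful bookkeeping in this last step: one must rigorously track, for each pair $(c,p)$ and $(c',p')$ indexing the ket and bra of the output, exactly how many times the hole letter $x$ appears and in what relative position, and verify that every resulting contraction against $\ket\varepsilon\!\bra\varepsilon$ collapses to either $\S^{(1)}$ or $T^{(1)}$ (never to an irreducible two-application expression). Formalising ``the contribution factors through $\S^{(1)}$ and $T^{(1)}$'' cleanly — rather than drowning in index notation — is the delicate part; the cleanest route is probably to isolate the hole gate, write the global unitary as an explicit controlled operation depending on whether the word contains $x$, and then invoke the single-occurrence bound from Proposition~\ref{pasplusdedeuxfois} to argue that the environment $\E$ enters the trace-out in only the two canonical ways. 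The negation-free hypothesis is used precisely and only here, through that single-occurrence bound.
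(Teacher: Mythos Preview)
Your proposal is correct and follows essentially the same approach as the paper's proof: necessity via the trivial context for (S1) and a one-branch-through/one-branch-bypass context for (T1), and sufficiency via the single-occurrence bound from Proposition~\ref{pasplusdedeuxfois} followed by a case analysis on $(|w^{D'}_{c,p}|_x,|w^{D'}_{c',p'}|_x)\in\{0,1\}^2$, reducing each case to $\S^{(1)}$, $T^{(1)}$, or its adjoint. One small point: the context the paper actually uses to extract $T^{(1)}$ is not the half-switch but the simpler loop-and-bypass context $\tikzfig{boucletraversetrou}$ (evaluated on the off-diagonal input $\ket{\uparrow,0}\!\bra{\rightarrow,0}\otimes I_\H$), which matches the computation you describe; the half-switch would also work but introduces an extra gate you would have to unwind.
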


The proof is given in Appendix~\ref{appendix_obs_equiv_negfree}, and shows at the same time that allowing multiple input/output wires does not increase the power of $\tikzfig{neg-xs}$-free contexts.

One can illustrate how the transformation matrices enter the game by considering for example the following context : $\tikzfig{boucletraversetrou}$. 
By plugging in ${[U,\ket{\varepsilon},\E]}$, the extended PBS-diagram maps a pure input state $\frac{\ket{\rightarrow}+\ket{\uparrow}}{\sqrt{2}}\otimes \ket{\psi} \in \CC^{\hv} \otimes \H$ (together with the environment initial state $\ket\varepsilon\in\E$) to the state $\frac{1}{\sqrt{2}}\ket{\rightarrow}\otimes\ket{\psi}\otimes\ket{\varepsilon} + \frac{1}{\sqrt{2}}\ket{\uparrow}\otimes U(\ket{\psi}\otimes\ket{\varepsilon})$, so that after tracing out the environment a cross term $\frac{1}{2}\ket{\uparrow}\!\bra{\rightarrow}\otimes \textup{Tr}_\E \big[ U(\ket{\psi}\!\bra{\psi}\otimes\ket{\varepsilon}\!\bra{\varepsilon}) \big] = \frac{1}{2}\ket{\uparrow}\!\bra{\rightarrow}\otimes T^{(1)}_{[U,\ket{\varepsilon},\E]} \ket{\psi}\!\bra{\psi}$ appears.

We note also that the two conditions~\eqref{SP1} and~\eqref{TM1_v0} are nonredundant, i.e., one does not imply the other. Indeed, there exist cases where $\S^{(1)}_{[U,\ket{\varepsilon},\E]} = \S^{(1)}_{[U',\ket{\varepsilon'},\E']}$ but $T^{(1)}_{[U,\ket{\varepsilon},\E]} \neq T^{(1)}_{[U',\ket{\varepsilon'},\E']}$ (e.g., given any $\H$, $\E=\E'=\CC$, $U = I_\H, U' = -I_\H$ and $\ket{\varepsilon} = \ket{\varepsilon'} = 1$), and cases where $\S^{(1)}_{[U,\ket{\varepsilon},\E]} \neq \S^{(1)}_{[U',\ket{\varepsilon'},\E']}$ but $T^{(1)}_{[U,\ket{\varepsilon},\E]} = T^{(1)}_{[U',\ket{\varepsilon'},\E']}$ (e.g., $\H=\E=\E'=\CC^2$, $U = I_\H \otimes X, U' = X \otimes X \text{ and } \ket{\varepsilon} = \ket{\varepsilon'} = \ket{0}$).\footnote{Where $X=\begin{pmatrix}0&1\\1&0\end{pmatrix}$.}

\subsection{Observational equivalence using general contexts}
\label{subsec:obs_equiv_general}

We will now see that allowing  negations ($\tikzfig{neg-xs}$) increases the power of contexts to distinguish purified channels. To characterise the indistinguishability of purified channels with arbitrary contexts, we introduce \emph{second-level} superoperators and  \emph{second-level} transformation matrices: 

\begin{definition}[Second-level Superoperator and  Transformation Matrix]\label{def:S2T2} Given a purified $\H$-channel ${[U,\ket{\varepsilon},\E]}$, let $\S^{(2)}_{[U,\ket{\varepsilon},\E]}: \mathcal L(\H^{\otimes 2}) \to \mathcal L(\H^{\otimes 2}) = \rho \mapsto \textup{Tr}_{\E}\big( U^{(2)} (\rho \otimes \ket{\varepsilon}\!\bra{\varepsilon}) U^{(2)\dagger} \big)$ be the ``second-level'' superoperator and $T^{(2)}_{[U,\ket{\varepsilon},\E]} \coloneqq (I_{\H^{\otimes 2}}\otimes \bra{\varepsilon}) U^{(2)} (I_{\H^{\otimes 2}}\otimes \ket{\varepsilon}) \in \mathcal L(\H^{\otimes 2})$ be the ``second-level'' transformation matrix of $[U,\ket{\varepsilon},\E]$, where $U^{(2)} \coloneqq (I_\H \otimes U) (\mathfrak{S} \otimes I_\E) (I_\H \otimes U)$ and $\mathfrak{S}\coloneqq\ket{\psi_1}\otimes \ket{\psi_2} \mapsto \ket{\psi_2}\otimes \ket{\psi_1}$ is the swap operator.
Graphically, $U^{(2)} = \tikzfig{U2}\,$,
$$\S^{(2)}_{[U,\ket{\varepsilon},\E]} \coloneqq \tikzfig{SP2_L-ground} \qquad \text{and} \qquad T^{(2)}_{[U,\ket{\varepsilon},\E]} \coloneqq \tikzfig{TM2_L}$$
\end{definition}

\begin{restatable}{theorem}{thmcaracobsequivleveltwo}\label{caracobsequivlevel2}
Given two purified $\H$-channels $[U,\ket{\varepsilon},\E]$ and $[U',\ket{\varepsilon'},\E']$, $[U,\ket{\varepsilon},\E] \approx_2 [U',\ket{\varepsilon'},\E']$ iff
they have the same (first level) transformation matrix, the same second level superoperator and the same second level transformation matrix.
 Graphically,
  \begin{subequations}
    \begin{empheq}[left={{[U,\ket{\varepsilon},\E]} \approx_2 {[U',\ket{\varepsilon'},\E']} \quad \text{iff} \ \ \empheqlbrace \ }]{align}
      \tikzfig{TM1_L}\ &= \ \tikzfig{TM1_R} \label{TM1} \tag{T1} \\[3mm]
      \tikzfig{SP2_L-ground}\ &= \ \tikzfig{SP2_R-ground} \label{SP2} \tag{S2} \\[3mm]
      \tikzfig{TM2_L}\ &= \ \tikzfig{TM2_R} \label{TM2} \tag{T2}
    \end{empheq}
  \end{subequations}
\end{restatable}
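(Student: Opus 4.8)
I would prove both implications of the ``iff''. By the remark preceding this subsection (and the multi-wire lemmas in the appendix) it suffices to work with single-wire contexts, so fix a context $C[\cdot]:\H^{(1)}$, take an underlying bare context $C'$ whose hole is labelled $x$ and whose other gates are filled by a fixed family, and recall that for $n=1$ the word path semantics only involves the two paths $(\rightarrow,0)$ and $(\uparrow,0)$, with words $w_\rightarrow,w_\uparrow\in\Gamma^*$. The whole proof is organised around \cref{pasplusdedeuxfois}, which gives $|w_\rightarrow|_x+|w_\uparrow|_x\le 2$: the letter $x$ is crossed at most twice in total. Writing $U^{\mathcal G}_{C'}=\ket{c'_\rightarrow}\bra{\rightarrow}\otimes V^{\mathcal G}_{w_\rightarrow}+\ket{c'_\uparrow}\bra{\uparrow}\otimes V^{\mathcal G}_{w_\uparrow}$, I would expand the traced semantics $\interp{C[U,\ket\varepsilon,\E]}(\rho)=\textup{Tr}_{\E_{\mathcal G}}\!\big(U^{\mathcal G}_{C'}(\rho\otimes\ket{\varepsilon_{\mathcal G}}\!\bra{\varepsilon_{\mathcal G}})U^{\mathcal G\dagger}_{C'}\big)$ into the four polarisation terms $(c_1,c_2)$ and classify each by the pair $(|w_{c_1}|_x,|w_{c_2}|_x)$ counting how many times the hole gate acts on the ket/bra side. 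Since all other gates act trivially on the hole's environment $\E$, I would partial-trace out $\E$ first: each term becomes the surrounding (fixed) operations sandwiching a $U$-dependent block that acts only through $\E$.

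\textbf{Completeness ($\Leftarrow$).} Assume \eqref{TM1}, \eqref{SP2}, \eqref{TM2}. First I would record that \eqref{SP2} implies the first-level condition~(S1): feeding a product input into $\S^{(2)}$ and tracing out the first tensor factor recovers $\S^{(1)}$ (a short computation using $U^{(2)}=(I_\H\otimes U)(\mathfrak S\otimes I_\E)(I_\H\otimes U)$), so (S1) need not be assumed. Then I would evaluate the $\E$-traced block in each case, using the key observation that an intervening operation $A$ between two crossings acts trivially on $\E$: the block with one $U$ and no $U^\dagger$ equals $(T^{(1)}\otimes I)A$, hence is fixed by \eqref{TM1} (and symmetrically $T^{(1)\dagger}$); a block with one $U$ on each side equals $(\S^{(1)}\otimes\mathrm{id})(\,\cdot\,)$, fixed via (S1); a diagonal block with two $U$'s and two $U^\dagger$'s around an arbitrary $A$ is exactly a two-sided sandwich $A\mapsto\textup{Tr}_\E\big(U(A\otimes I)U(\,\cdot\,)U^\dagger(A^\dagger\otimes I)U^\dagger\big)$, and a block with two $U$'s on one side only is the one-sided sandwich $A\mapsto(I_\H\otimes\bra\varepsilon)U(A\otimes I_\E)U(I_\H\otimes\ket\varepsilon)$. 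The central lemma is that $\S^{(2)}$ and $T^{(2)}$ are precisely the Choi representations of these two sandwiching maps (the swap $\mathfrak S$ in $U^{(2)}$ being the Choi ``insertion slot''), so that matching $\S^{(2)}$ and $T^{(2)}$ matches these blocks for \emph{every} intermediate $A$ — even one acting on further environments, since $U$ is trivial there. By \cref{pasplusdedeuxfois} no other configuration can occur (a pair such as $(2,1)$ would need three crossings), so every term of $\interp{C[U,\ket\varepsilon,\E]}$ agrees with that of $\interp{C[U',\ket{\varepsilon'},\E']}$, giving $\approx_2$.

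\textbf{Soundness ($\Rightarrow$).} Assume $\approx_2$. Since $\mathcal C_1\subseteq\mathcal C_2$ we have $\approx_2\,\Rightarrow\,\approx_1$, so \cref{caracobsequivnegfree} yields \eqref{TM1} (and (S1)). For \eqref{SP2} and \eqref{TM2} I would exhibit explicit contexts that use a polarising beam splitter together with a polarisation flip to route the particle through the hole gate \emph{twice}, with an arbitrary fixed gate $A$ inserted between the two crossings. Letting $A$ range over a tomographically complete subset of $\L(\H)$ and reading the semantics on suitable inputs/outputs, the diagonal (same final polarisation) contribution reproduces the two-sided sandwich of $[U,\ket\varepsilon,\E]$ and hence, by the Choi correspondence above, the full $\S^{(2)}$, while the off-diagonal (cross) contribution reproduces the one-sided sandwich and hence $T^{(2)}$ (mirroring how the loop context after \cref{caracobsequivnegfree} produced a cross term proportional to $T^{(1)}$). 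Matching the semantics of this whole family of contexts therefore forces \eqref{SP2} and \eqref{TM2}.

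\textbf{Main obstacle.} The completeness direction is a careful but essentially bookkeeping case analysis, made finite by the ``at most twice'' bound of \cref{pasplusdedeuxfois}. The real work is the Choi-correspondence lemma identifying $\S^{(2)}$ and $T^{(2)}$ with the two-sided and one-sided sandwiching maps, and, in the soundness direction, designing the explicit two-crossing contexts and checking that varying the inserted gate $A$ together with the single-wire input/output probes \emph{all} matrix elements of the second-level objects (so that no information is lost despite $\S^{(2)},T^{(2)}$ living on $\H^{\otimes 2}$ while the context carries a single data register $\H$). Verifying that the extra environments of the surrounding gates never interfere — because $U$ acts as identity on them — is what makes both the lemma and the constructions go through.
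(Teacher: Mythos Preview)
Your overall architecture matches the paper's: sufficiency via a case split on $(|w_{c}|_x,|w_{c'}|_x)$ governed by \cref{pasplusdedeuxfois}, and necessity via explicit contexts. The Choi-style identification of $T^{(2)}$ and $\S^{(2)}$ with the one- and two-sided sandwich maps is exactly what the paper's circuit manipulations encode, and your treatment of \eqref{TM2}-necessity (vary a unitary between the two crossings and read the cross term) is precisely the paper's argument via its auxiliary lemma. One cosmetic point: you invoke ``the multi-wire lemmas in the appendix'' to restrict to single-wire contexts, but in the paper that proposition is proved \emph{together with} the theorem, not before it; since you only need the single-wire direction anyway this is harmless.

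The one place where your sketch is genuinely incomplete is the necessity of \eqref{SP2}. You propose to insert a gate $A$ ranging over ``a tomographically complete subset of $\L(\H)$'' between the two crossings and read the diagonal term $\rho\mapsto\textup{Tr}_\E\big[U(A\otimes I)U(\rho\otimes|\varepsilon\rangle\langle\varepsilon|)U^\dagger(A^\dagger\otimes I)U^\dagger\big]$. But in a context the inserted gate is a purified channel, so with trivial environment $A$ must be \emph{unitary}; and since this expression is sesquilinear rather than linear in $A$, the fact that unitaries span $\L(\H)$ does not by itself let you polarise to recover all components $(\langle i|\otimes I_\H)\,\S^{(2)}(|j\rangle\langle l|\otimes\rho)\,(|k\rangle\otimes I_\H)$. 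More fundamentally, $\S^{(2)}$ lives on $\L(\H^{\otimes 2})$ while your probe has a single-$\H$ input and a single-$\H$ output, and you do not explain how the missing copy of $\H$ is accessed. The paper does not solve this with the intermediate gate at all: it places auxiliary gates \emph{before and after} the hole, each carrying an environment $\H\otimes\CC^2$, where the $\H$ factor stores a second data register across the two passes and the $\CC^2$ factor is a pass counter toggled by~$X$. This lets the context effectively prepare an arbitrary $|\varphi\rangle\in\H^{\otimes2}$ at the input of $\S^{(2)}$ and post-process its $\H^{\otimes2}$ output via an arbitrary unitary, which is what forces \eqref{SP2}. Without either this construction or a separate argument that unitary $A$'s alone suffice, your soundness step for \eqref{SP2} has a gap.
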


 The proof is given in Appendix~\ref{appendix_obs_equiv_general} and has the same structure as that of Theorem~\ref{caracobsequivnegfree}.

The contexts used in the proof to show that the constraints~\eqref{SP2} and~\eqref{TM2} are required are of the form $\tikzfig{contexteV0V1V0V1variante}$ and $\tikzfig{boucletraversetrouV,1trou_shorter}$, respectively, for some specific choices of purified channels $[V_0,\ket{\eta_0},\H\otimes\CC^2]$, $[V_1,\ket{\eta_1},\H\otimes\CC^2]$ and $[V,1,\CC]$. Hence, if either the second level superoperators or the second level transformation matrices of two purified channels differ, then the channels can be distinguished by using such contexts.

One may have expected the condition~\eqref{SP1}---i.e., that the two channels have the same first-level superoperator---to also appear in Theorem~\ref{caracobsequivlevel2} (as it did in the previous two cases).
This would however have been redundant, as can be seen from the following remark (also proven in Appendix~\ref{appendix_obs_equiv_general}):

\begin{restatable}{remark}{remSPtwoimpliesSPone}\label{SP2impliesSP1}
Two purified channels $[U,\ket{\varepsilon},\E]$ and $[U',\ket{\varepsilon'},\E']$ having  the same second level  superoperator also have the same first level superoperator, i.e.,
Condition \eqref{SP2} implies \eqref{SP1}. 
\end{restatable}

We note, on the other hand, that the three remaining conditions~\eqref{TM1}, \eqref{SP2} and~\eqref{TM2} are nonredundant. I.e., for each of the three there exist cases where only this condition is not satisfied, and where $[U,\ket{\varepsilon},\E]$ and $[U',\ket{\varepsilon}',\E']$ can be distinguished. E.g., with $\E=\E'=\CC$, $U = I_\H, U' = -I_\H, \ket{\varepsilon} = \ket{\varepsilon'} = 1$, only~\eqref{TM1} fails to hold; with $\H=\E=\E'=\CC^2$, $U = \textsc{Cnot}, U' = (\sqrt{Z} \otimes Z) \textsc{Cnot}, \ket{\varepsilon} = \ket{\varepsilon'} = \ket{0}$, only~\eqref{SP2} fails to hold; and with $\H=\E=\E'=\CC^2$, $U = I_\H \otimes X, U' = I_\H \otimes ZX, \ket{\varepsilon} = \ket{\varepsilon'} = \ket{0}$, only~\eqref{TM2} fails to be satisfied.\footnote{Where $Z=\begin{pmatrix}1&0\\0&-1\end{pmatrix}$, $\sqrt{Z}=\begin{pmatrix}1&0\\0&\mathrm{i}\end{pmatrix}$ and $\textsc{Cnot}=\begin{pmatrix}1&0&0&0\\0&1&0&0\\0&0&0&1\\0&0&1&0\end{pmatrix}$.}

\section{Observational equivalence beyond PBS-diagrams}

In this section, we define a new equivalence relation, inspired by the uniqueness (up to an isometry) of Stinespring's dilations, which subsumes the observational equivalences defined so far.
For that let us first introduce an isometry-based preorder over purified channels:

\begin{definition} Given two purified $\H$-channels $[U,\ket{\varepsilon},\E]$ and $[U'\!\!,\ket{\varepsilon'},\E']$, one has $[U,\ket{\varepsilon},\E] \triangleleft_{iso} [U',\ket{\varepsilon'},\E']$ if there exists  an isometry $W\!\! :\! \E \!\to\! \E'$ s.t. $W\ket{\varepsilon} \!=\! \ket{\varepsilon'}$ and $ (I_\H\otimes W)U\!=\! U' (I_\H\otimes W) $.  
In pictures:
$$\tikzfig{epsilonW-HEEprime}\ =\ \ket{\varepsilon'}\qquad\qquad\qquad\tikzfig{UWb-HEEprime}\ =\ \tikzfig{WbUprime-HEEprime}$$
\end{definition}

Note that $\triangleleft_{iso}$ is not an equivalence relation. It is not symmetric; moreover, its symmetric closure is not transitive.%
\footnote{Taking $\H = \CC$, one has $[1,1,\mathbb C]\triangleleft_{iso} [I_{\CC^2},\ket 0,\mathbb C^2]$ (with $W=\ket 0$) but $\neg ([{I_{\CC^2}},\ket 0,\mathbb C^2]\ \triangleleft_{iso}\ [1,1,\mathbb C])$ (as there is no isometry from $\mathbb C^2$ to $\mathbb C$). With the Pauli operator $Z=\begin{psmallmatrix}1&0\\[1mm]0&-1\end{psmallmatrix}$ one also has $[1,1,\mathbb C]\triangleleft_{iso} [Z,\ket 0,\mathbb C^2]$ (again with $W = \ket 0$), but $[I_{\CC^2},\ket 0,\mathbb C^2]$ and $[Z,\ket 0,\mathbb C^2]$ are not in relation since there is no unitary $W$ such that $W I_{\CC^2} = ZW$ (as $I_{\CC^2}$ and $Z$ have distinct eigenvalues).}
This leads us to consider the following:

\begin{definition}[Iso-equivalence]
The \emph{iso-equivalence} of purified channels is defined as  the symmetric and transitive closure of  $\triangleleft_{iso}$: $\approx_{iso}:= \triangleleft_{iso}^*$. 
\end{definition}

The iso-equivalence is a candidate for characterising indistinguishability of purified channels in more general coherent-control settings. Actually, if $[U,\ket{\varepsilon},\E]$ and $[U',\ket{\varepsilon'},\E']$ are two iso-equivalent purified channels, then intuitively, in any coherent-control setting, $[U,\ket{\varepsilon},\E]$ can be replaced by $[U',\ket{\varepsilon'},\E']$ without changing the global behaviour. Indeed, the evolution of the environment associated with the purified channel is roughly speaking the same (up to the isometry $W$): initialised in the state $W\ket{\varepsilon}$(and with the data register in the state $\ket{\phi}$), the application of $U'$ leads to the state $U'(I_\H\otimes W) (\ket \phi \otimes \ket {\varepsilon})$, which is equal to $(I_\H\otimes W) U( \ket \phi \otimes \ket {\varepsilon})$. So applying $U'$ somehow first cancels the application of $W$, then  applies $U$, and finally applies  $W$ again---which will be cancelled again by the next application of $U'$, and so on. The last application of $W$ is absorbed when the environment is traced out.
In pictures:
{\[\tikzfig{iso_eq_1-}\ =\ \tikzfig{iso_eq_2-}\ =\ \tikzfig{iso_eq_3-}\]\smallskip
\[\hspace{1.5cm}\ =~\ldots ~=\ \tikzfig{iso_eq_4-}\ =\ \tikzfig{iso_eq_5-}\]}

In the framework of PBS-diagrams, one can actually show that the iso-equivalence subsumes, but does not coincide with the $\approx_2$-equivalence (which in turn subsumes the $\approx_1$- and $\approx_0$-equivalences; we provide the proof in Appendix~\ref{appendix:equiv}).

\begin{restatable}{proposition}{propdiff}\label{prop:diff}  $\approx_{iso} ~\subsetneq~ \approx_2  ~\subsetneq~ \approx_1~\subsetneq~ \approx_0 $.
\end{restatable}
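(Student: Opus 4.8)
The plan is to treat the three inclusions separately, getting the non-strict parts from the characterisation theorems and the strict parts from explicit pairs of purified channels. For the non-strict inclusions I would simply read off the defining conditions: by Theorem~\ref{thm:cptpchannels}, $\approx_0$ is characterised by (S1); by Theorem~\ref{caracobsequivnegfree}, $\approx_1$ by (S1) and (T1); and by Theorem~\ref{caracobsequivlevel2}, $\approx_2$ by (T1), (S2) and (T2). Using Remark~\ref{SP2impliesSP1} (that (S2) implies (S1)), the condition set of $\approx_2$ entails that of $\approx_1$, which entails that of $\approx_0$; hence $\approx_2\subseteq\approx_1\subseteq\approx_0$ as relations. (Alternatively, $\approx_2\subseteq\approx_1$ is immediate from $\mathcal C_1\subseteq\mathcal C_2$.)

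For the two rightmost strict inclusions I would reuse the explicit witnesses already exhibited in the text. For $\approx_1\subsetneq\approx_0$, take $\E=\E'=\CC$, $U=I_\H$, $U'=-I_\H$, $\ket\varepsilon=\ket{\varepsilon'}=1$: both induce the identity channel (so (S1) holds and they are $\approx_0$-equivalent), but their first-level transformation matrices are $I_\H$ and $-I_\H$, so (T1) fails and they are not $\approx_1$-equivalent. For $\approx_2\subsetneq\approx_1$, take $\H=\E=\E'=\CC^2$, $U=\textsc{Cnot}$, $U'=(\sqrt Z\otimes Z)\textsc{Cnot}$, $\ket\varepsilon=\ket{\varepsilon'}=\ket0$: a short computation shows both realise the dephasing channel (same (S1)) with first-level transformation matrix $\ket0\!\bra0$ (same (T1)), so they are $\approx_1$-equivalent; but (S2) fails, so they are not $\approx_2$-equivalent.

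The inclusion $\approx_{iso}\subseteq\approx_2$ I would prove by showing $\triangleleft_{iso}\subseteq\,\approx_2$ and invoking that $\approx_2$ is an equivalence relation containing $\triangleleft_{iso}$, hence also contains its symmetric–transitive closure $\approx_{iso}$. Concretely, if $W$ witnesses $[U,\ket\varepsilon,\E]\triangleleft_{iso}[U',\ket{\varepsilon'},\E']$, then $W^\dagger W=I_\E$, $W\ket\varepsilon=\ket{\varepsilon'}$ and $(I_\H\otimes W)U=U'(I_\H\otimes W)$; substituting $\ket{\varepsilon'}=W\ket\varepsilon$ and using $W^\dagger W=I_\E$ gives $T^{(1)}_{[U',\ket{\varepsilon'},\E']}=T^{(1)}_{[U,\ket\varepsilon,\E]}$, so (T1) holds. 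The key step is to lift the intertwining to level two: from $U^{(2)}=(I_\H\otimes U)(\mathfrak S\otimes I_\E)(I_\H\otimes U)$ and the facts that $I_{\H^{\otimes2}}\otimes W$ commutes with $\mathfrak S$ and intertwines $I_\H\otimes U$ with $I_\H\otimes U'$, one gets $(I_{\H^{\otimes2}}\otimes W)U^{(2)}=U'^{(2)}(I_{\H^{\otimes2}}\otimes W)$, with $W$ still an isometry fixing the environment state. This is exactly an iso-relation at the $U^{(2)}$ level, so the same substitution computation as for (T1) yields (T2), and the identity $\textup{Tr}_{\E'}\!\big((I\otimes W)X(I\otimes W^\dagger)\big)=\textup{Tr}_\E(X)$ (valid since $W$ is an isometry) yields (S2).

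The main obstacle is the strictness $\approx_{iso}\subsetneq\approx_2$: I must produce channels that are $\approx_2$-equivalent yet not iso-equivalent, so I need an invariant of $\approx_{iso}$ not controlled by (T1), (S2), (T2). Such an invariant is already visible for trivial data $\H=\CC$, where a purified channel is just $[U,\ket\varepsilon,\E]$ with $U\in\L(\E)$: here (S2) is automatic and $U^{(2)}=U^2$, so by Theorem~\ref{caracobsequivlevel2} being $\approx_2$-equivalent amounts to matching the scalars $\bra\varepsilon U\ket\varepsilon$ and $\bra\varepsilon U^2\ket\varepsilon$; on the other hand a single $\triangleleft_{iso}$ step gives $WU=U'W$ with $W^\dagger W=I$ and $W\ket\varepsilon=\ket{\varepsilon'}$, whence $\bra{\varepsilon'}U'^{k}\ket{\varepsilon'}=\bra\varepsilon U^{k}\ket\varepsilon$ for every $k$, so $\approx_{iso}$ preserves all moments $\bra\varepsilon U^{k}\ket\varepsilon$. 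I would then take $\E=\E'=\CC^3$, $\ket\varepsilon=\ket{\varepsilon'}=\tfrac1{\sqrt3}(\ket0+\ket1+\ket2)$, $U=\mathrm{diag}(1,\omega,\omega^2)$ with $\omega=e^{2\mathrm{i}\pi/3}$, and $U'=-U$: then $\bra\varepsilon U\ket\varepsilon=\bra\varepsilon U^2\ket\varepsilon=0$ and likewise for $U'$ (so they are $\approx_2$-equivalent), whereas $\bra\varepsilon U^3\ket\varepsilon=1\neq-1=\bra{\varepsilon'}U'^{3}\ket{\varepsilon'}$. Since iso-equivalence preserves this third moment, the two channels are not iso-equivalent, which establishes the last strict inclusion and completes the proof.
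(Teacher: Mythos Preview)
Your proof is correct and follows essentially the same approach as the paper: the non-strict inclusions come from the characterisation theorems together with Remark~\ref{SP2impliesSP1}, the two rightmost strictnesses from the same explicit witnesses, the inclusion $\approx_{iso}\subseteq\approx_2$ via $\triangleleft_{iso}\subseteq\approx_2$, and the strictness $\approx_{iso}\subsetneq\approx_2$ via the invariant $\bra\varepsilon U^k\ket\varepsilon$ distinguishing at $k=3$. The only cosmetic differences are that you spell out the verification of (T1), (S2), (T2) from an intertwiner $W$ more explicitly than the paper, and you use the diagonal pair $U=\mathrm{diag}(1,\omega,\omega^2)$, $U'=-U$ with $\ket\varepsilon$ the uniform superposition rather than the paper's cyclic shift $X$ and $XN$ with $\ket\varepsilon=\ket0$; these are the same example up to conjugation by the qutrit Fourier transform.
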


Although for PBS-diagrams, the $\approx_2$-equivalence characterises the observational equivalence of purified channels, it could thus be that more general coherent-control settings may distinguish $\approx_2$-equivalent channels. For instance one can imagine including nonpolarising beam splitters, or more general rotations of the polarisation than just the negation, or even settings with ``higher-dimensional polarisations'', which would allow a particle to go more than twice through each gate. Such a setting would be able for instance to distinguish the pair of purified channels used in the proof of Proposition \ref{prop:diff}.
 
We conjecture that two purified channels are not  iso-equivalent if and only if they can be distinguished by some coherently-controlled quantum computation. Here, the notion of coherently-controlled quantum computation is left loosely defined, and   corresponds intuitively to some generalisation of PBS-diagrams allowing a particle to go through a gate an arbitrary number of times.

\section{Discussion}

In this work, we have extended the PBS-diagrams framework of~\cite{clement2020pbs} to allow for the coherent control of more general quantum channels, described as purified channels. 
By defining observational equivalence relations, we have characterised which purified channels  are distinguishable depending on the class of contexts allowed (defined as PBS-diagrams with a hole). 
We also proposed the more refined iso-equivalence, which appears as a candidate for channel indistinguishability in more general coherent-control setups than PBS-diagrams. However, unlike the previous equivalence relations that can be verified with simple criteria---by comparing superoperators and transformation matrices---the iso-equivalence, defined as a transitive closure, is \emph{a priori} not as easy to check in general.

The framework of PBS-diagrams considered here has a number of limitations, which could be lifted in future works. For instance, it would be of practical interest to allow for nonpolarising beam splitters and more general operations on the polarisation; to consider using higher-dimensional control systems, with generalised PBS; or to consider several particles going through the diagrams, possibly correlating the different local environments for future uses of the diagrams, and/or inducing interference effects. We note also that in our description of purified channels, the state of the environment does not evolve by itself, except when the flying particle goes through the channel and the unitary $U$ is applied to the joint system. In fact, as long as each channel is used at most twice (as it was case in this paper), any free evolution of the environment between two uses could be included in $U$; however, introducing such an evolution could make a difference if the channels are used more than twice, and the evolution is different between different uses.

Other open questions raised by our work here include equipping extended PBS-diagrams with an equational theory, as was done in~\cite{clement2020pbs} for the case of ``pure'' PBS-diagrams; lifting our observational equivalences to the diagrams themselves; and investigating more general coherent-control settings, to check in particular whether our iso-equivalence is indeed the good definition for general distinguishability, and if it has a more operational characterisation.

\bibliography{ref}

\begin{thebibliography}{10}

\bibitem{Abbott2020communication}
Alastair~A. {Abbott}, Julian {Wechs}, Dominic {Horsman}, Mehdi {Mhalla}, and
  Cyril {Branciard}.
\newblock Communication through coherent control of quantum channels.
\newblock {\em {Quantum}}, 4:333, September 2020.
\newblock \href {http://arxiv.org/abs/1810.09826} {\path{arXiv:1810.09826}},
  \href {https://doi.org/10.22331/q-2020-09-24-333}
  {\path{doi:10.22331/q-2020-09-24-333}}.

\bibitem{araujo14}
Mateus {Ara{\'u}jo}, Fabio {Costa}, and {\v{C}}aslav {Brukner}.
\newblock {Computational Advantage from Quantum-Controlled Ordering of Gates}.
\newblock {\em Physical Review Letters}, 113(25):250402, December 2014.
\newblock \href {http://arxiv.org/abs/1401.8127} {\path{arXiv:1401.8127}},
  \href {https://doi.org/10.1103/PhysRevLett.113.250402}
  {\path{doi:10.1103/PhysRevLett.113.250402}}.

\bibitem{CJPV-ICALP19}
Titouan Carette, Simon Perdrix, Renaud Vilmart, and Emmanuel Jeandel.
\newblock {Completeness of Graphical Languages for Mixed States Quantum
  Mechanics}.
\newblock In {\em 46th International Colloquium on Automata, Languages, and
  Programming (ICALP 2019)}, volume 132 of {\em Leibniz International
  Proceedings in Informatics (LIPIcs)}, Patras, Greece, 2019.
\newblock \href {http://arxiv.org/abs/1902.07143} {\path{arXiv:1902.07143}},
  \href {https://doi.org/10.4230/LIPIcs.ICALP.2019.108}
  {\path{doi:10.4230/LIPIcs.ICALP.2019.108}}.

\bibitem{chiribella13}
Giulio Chiribella, Giacomo~Mauro D'Ariano, Paolo Perinotti, and Beno{\^i}t
  Valiron.
\newblock Quantum computations without definite causal structure.
\newblock {\em Physical Review A}, 88:022318, August 2013.
\newblock \href {http://arxiv.org/abs/0912.0195} {\path{arXiv:0912.0195}},
  \href {https://doi.org/10.1103/PhysRevA.88.022318}
  {\path{doi:10.1103/PhysRevA.88.022318}}.

\bibitem{chiribella19}
Giulio Chiribella and Hl{\'{e}}r Kristj{\'{a}}nsson.
\newblock Quantum shannon theory with superpositions of trajectories.
\newblock {\em Proceedings of the Royal Society A}, 475:20180903, May 2019.
\newblock \href {http://arxiv.org/abs/1812.05292} {\path{arXiv:1812.05292}},
  \href {https://doi.org/10.1098/rspa.2018.0903}
  {\path{doi:10.1098/rspa.2018.0903}}.

\bibitem{clement2020pbs}
Alexandre Cl{\'e}ment and Simon Perdrix.
\newblock {PBS}-calculus: A graphical language for coherent control of quantum
  computations.
\newblock In {\em 45th International Symposium on Mathematical Foundations of
  Computer Science (MFCS 2020)}. Schloss Dagstuhl-Leibniz-Zentrum f{\"u}r
  Informatik, 2020.
\newblock \href {http://arxiv.org/abs/2002.09387} {\path{arXiv:2002.09387}},
  \href {https://doi.org/10.4230/LIPIcs.MFCS.2020.24}
  {\path{doi:10.4230/LIPIcs.MFCS.2020.24}}.

\bibitem{CP10-env}
Bob Coecke and Simon Perdrix.
\newblock {Environment and classical channels in categorical quantum
  mechanics}.
\newblock {\em {Logical Methods in Computer Science}}, {Volume 8, Issue 4},
  November 2012.
\newblock \href {http://arxiv.org/abs/1004.1598} {\path{arXiv:1004.1598}},
  \href {https://doi.org/10.2168/LMCS-8(4:14)2012}
  {\path{doi:10.2168/LMCS-8(4:14)2012}}.

\bibitem{friis14}
Nicolai Friis, Vedran Dunjko, Wolfgang D{\"{u}}r, and Hans~J. Briegel.
\newblock Implementing quantum control for unkown subroutines.
\newblock {\em Physical Review A}, 89:030303(R), March 2014.
\newblock \href {http://arxiv.org/abs/1401.8128} {\path{arXiv:1401.8128}},
  \href {https://doi.org/10.1103/PhysRevA.89.030303}
  {\path{doi:10.1103/PhysRevA.89.030303}}.

\bibitem{gisin05}
Nicolas Gisin, Noah Linden, Serge Massar, and Sandu Popescu.
\newblock Error filtration and entanglement purification for quantum
  communication.
\newblock {\em Physical Review A}, 72:012338, July 2005.
\newblock \href {http://arxiv.org/abs/quant-ph/0407021}
  {\path{arXiv:quant-ph/0407021}}, \href
  {https://doi.org/10.1103/PhysRevA.72.012338}
  {\path{doi:10.1103/PhysRevA.72.012338}}.

\bibitem{goswami18}
Kaumudibikash Goswami, Christina Giarmatzi, Michael Kewming, Fabio Costa, Cyril
  Branciard, Jacquiline Romero, and Andrew~G. White.
\newblock Indefinite causal order in a quantum switch.
\newblock {\em Physical Review Letters}, 121:090503, August 2018.
\newblock \href {http://arxiv.org/abs/1803.04302} {\path{arXiv:1803.04302}},
  \href {https://doi.org/10.1103/PhysRevLett.121.090503}
  {\path{doi:10.1103/PhysRevLett.121.090503}}.

\bibitem{kristjansson19}
Hl{\'{e}}r Kristj{\'{a}}nsson, Giulio Chiribella, Sina Salek, Daniel Ebler, and
  Matthew Wilson.
\newblock Resource theories of communication.
\newblock {\em New Journal of Physics}, 22(7):073014, July 2020.
\newblock \href {http://arxiv.org/abs/1910.08197} {\path{arXiv:1910.08197}},
  \href {https://doi.org/10.1088/1367-2630/ab8ef7}
  {\path{doi:10.1088/1367-2630/ab8ef7}}.

\bibitem{MacLane1965}
Saunders~Mac Lane.
\newblock Categorical algebra.
\newblock {\em Bulletin of the American Mathematical Society}, 71:40--106,
  1965.
\newblock \href {https://doi.org/10.1090/S0002-9904-1965-11234-4}
  {\path{doi:10.1090/S0002-9904-1965-11234-4}}.

\bibitem{oi03}
Daniel K.~L. Oi.
\newblock Interference of quantum channels.
\newblock {\em Physical Review Letters}, 91:067902, August 2003.
\newblock \href {http://arxiv.org/abs/quant-ph/0303178}
  {\path{arXiv:quant-ph/0303178}}, \href
  {https://doi.org/10.1103/PhysRevLett.91.067902}
  {\path{doi:10.1103/PhysRevLett.91.067902}}.

\bibitem{rambo16}
Timothy~M. Rambo, Joseph~B. Altepeter, Prem Kumar, and Giacomo~Mauro D'Ariano.
\newblock Functional quantum computing: An optical approach.
\newblock {\em Physical Review A}, 93:052321, May 2016.
\newblock \href {http://arxiv.org/abs/1211.1257} {\path{arXiv:1211.1257}},
  \href {https://doi.org/10.1103/PhysRevA.93.052321}
  {\path{doi:10.1103/PhysRevA.93.052321}}.

\bibitem{Stinespring55Positive}
W.~Forrest Stinespring.
\newblock Positive functions on {$C^*$}-algebras.
\newblock {\em Proceedings of the American Mathematical Society}, 6:211--216,
  1955.
\newblock \href {https://doi.org/10.1090/S0002-9939-1955-0069403-4}
  {\path{doi:10.1090/S0002-9939-1955-0069403-4}}.

\bibitem{zanasi2015tel}
Fabio Zanasi.
\newblock {\em Interacting Hopf Algebras- the Theory of Linear Systems}.
\newblock PhD thesis, {Ecole normale sup{\'e}rieure de lyon - ENS LYON},
  October 2015.
\newblock URL: \url{https://tel.archives-ouvertes.fr/tel-01218015}, \href
  {http://arxiv.org/abs/1805.03032} {\path{arXiv:1805.03032}}.

\end{thebibliography}

\appendix

\section{Structural congruence of PBS-diagrams}
\label{appendix_congruence}

Bare \textup{PBS}-diagrams, extended \textup{PBS}-diagrams and contexts are defined up to the congruence generated by the following equalities (with all $n,m,k\geq0$), where $I_n$ is the ``identity diagram'' $I_n := \oplus^n (\tikzfig{filcourt-s})$ (graphically: $I_n = \tikzfig{Inaccoladeg}$, with $I_0 = \tikzfig{diagrammevide-s}$); $\sigma_{1,n}$ is the ``first-wire-goes-last diagram'' defined inductively by $\sigma_{1,0}:=\tikzfig{filcourt-s}$ and $\sigma_{1,n+1}:=(I_n\oplus\tikzfig{swap-xs})\circ(\sigma_{1,n}\oplus\tikzfig{filcourt-s})$ (graphically: $\sigma_{1,n} = \tikzfig{swap1naccoladeg}$); and $D:n$ denotes here either a bare \textup{PBS}-diagram $D:n$, an extended \textup{PBS}-diagram $D:\typ{}n$, or a context $C[\cdot]:\typ{}n$:

\begin{itemize}

\item \emph{Neutrality of the identity:} for any $D:n$,
\[\begin{array}{rcccl}D\circ I_n&=&D&=&I_n\circ D\\\\
\tikzfig{idDmultifils}&=&\tikzfig{fig_generic_D}&=&\tikzfig{Didmultifils}
\\&\end{array}\]

\item \emph{Neutrality of the empty diagram:} for any $D:n$, 
\[\begin{array}{rcccl}\tikzfig{diagrammevide-s}\oplus D&=&D&=&D\oplus\tikzfig{diagrammevide-s}\\\\
\tikzfig{diagrammevidesurD-s}&=&\tikzfig{fig_generic_D}&=&\tikzfig{Dsurdiagrammevide-s}\\&\end{array}\]

\item \emph{Associativity of the sequential composition:} for any $D_1,D_2,D_3:n$, 
\[\begin{array}{rcl}(D_3\circ D_2)\circ D_1&=&D_3\circ(D_2\circ D_1)\\\\
\tikzfig{D1puisD2D3}&=&\tikzfig{D1D2puisD3}\\&\end{array}\]

\item \emph{Associativity of the parallel composition:} for any $D_1:n,D_2:m\text{ and }D_3:k$, 
\[\begin{array}{rcl}(D_1\oplus D_2)\oplus D_3&=&D_1\oplus(D_2\oplus D_3)\\\\
\tikzfig{D1surD2puissurD3}&=&\tikzfig{D1puissurD2surD3}\\&\end{array}\]

\item \emph{Compatibility of the sequential and parallel compositions:}
for any $D_1,D_2:n$ and $D_3,D_4:m$,
\[\begin{array}{rcl}(D_2\circ D_1)\oplus (D_4\circ D_3)&=&(D_2\oplus D_4)\circ(D_1\oplus D_3)\\\\
\tikzfig{D1D2surD3D4}&=&\tikzfig{D1surD3puisD2surD4}\\&\end{array}\]

\item \emph{Naturality of the swap:}
for any $D:n$,
\[
\begin{array}{rcl}\sigma_{1,n}\circ(\tikzfig{filcourt-s}\oplus D)&=&(D\oplus \tikzfig{filcourt-s})\circ\sigma_{1,n}\\\\
\tikzfig{Dswap1n}&=&\tikzfig{swap1nD}\\&\end{array}\]

\item \emph{Inverse law:}
\[
\begin{array}{rcl}\tikzfig{swap-xs}\circ\tikzfig{swap-xs}&=&I_{2}\\\\
\tikzfig{swapswap-xxs}&=&\tikzfig{filsparalleleslongs-xxs}\\&\end{array}\]

\item \emph{Naturality in the input:} 
for any $D_1:n$ and $D_2:n+1$,
\[\begin{array}{rcl}Tr(D_2\circ(D_1\oplus\tikzfig{filcourt-s}))&=&Tr(D_2)\circ D_1\\\\
\tikzfig{traceD1hD2}&=&\tikzfig{D1puistraceD2}\\&\end{array}\]

\item \emph{Naturality in the output:} 
for any $D_1:n+1$ and $D_2:n$,
\[\begin{array}{rcl}Tr((D_2\oplus\tikzfig{filcourt-s})\circ D_1)&=&D_2\circ Tr(D_1)\\\\
\tikzfig{traceD1D2h}&=&\tikzfig{traceD1puisD2}\\&\end{array}\]

\item \emph{Dinaturality:} for any $D_1:n+m$ and $D_2:m$,
\[\begin{array}{rcl}Tr^m((I_n\oplus D_2)\circ D_1)&=&Tr^m(D_1\circ(I_n\oplus D_2))\\\\
\tikzfig{traceD1D2b}&=&\tikzfig{traceD2bD1}\end{array}\]
where $Tr^m$ denotes the $m^\text{th}$ power of the trace operation.\bigskip

\item \emph{Superposing:} for any $D_1:n$ and $D_2:m+1$,
\[\begin{array}{rcl}Tr(D_1\oplus D_2)&=&D_1\oplus Tr(D_2)\\\\
\tikzfig{traceD1surD2}&=&\tikzfig{D1surtraceD2}\\&\end{array}\]

\item \emph{Yanking:}
\[\begin{array}{rcl}Tr(\tikzfig{swap-xs})&=&\tikzfig{filcourt-s}\\\\
\tikzfig{yankingcentresurfil-s}&=&\tikzfig{filcourt-s3}\\&\end{array}\]

\end{itemize}

These equalities are the coherence axioms of a traced PROP, that is, a PROP that is also a traced symmetric monoidal category. An explicit definition of the concept of traced PROP is given in~\cite{clement2020pbs}. See also~\cite{MacLane1965} and~\cite{zanasi2015tel} for a definition of PROPs and further details about them.

\section{Properties of the word path semantics}
\label{sec:appendix:pathsem}

\subsection{Well-definedness and compatibility with the structural congruence}
\label{appendix:semanticsdeform}

It can be proved in the same way as for Propositions~5 and~6 in~\cite{clement2020pbs}, that the word path semantics is well-defined despite the restriction that $k\leq2$ in Rule $(\mathsf T_k)$, that it is deterministic (i.e., that for any bare diagram $D:n$, polarisation $c\in\hv$ and position $p\in[n]$, there exist some unique $c'$, $p'$ and $w$ such that $(D,c,p)\xRightarrow{w}(c',p')$ --- which allows us to define $c^D_{c,p}$, $p^D_{c,p}$ and $w^D_{c,p}$), and that conversely, for any target polarisation $c'$ and position $p'$, there exist $c$ and $p$ such that $(D,c,p)\xRightarrow{w}(c',p')$ for some $w$ (in other words, the map $(c,p)\mapsto(c^{D}_{c,p},p^{D}_{c,p})$ is a bijection). We give here some additional details about the fact that it is invariant modulo diagram deformation:

\begin{proposition}
The word path semantics is invariant modulo diagram deformation.
\end{proposition}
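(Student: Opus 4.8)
The plan is to prove that the word path semantics is invariant under each generating equality of the structural congruence listed in Appendix~\ref{appendix_congruence}, and then conclude by induction that it is invariant under the full congruence (since the congruence is generated by these equalities together with the closure under the three operations $\circ$, $\oplus$, $Tr$). Concretely, the statement to establish is: if $D_1 \equiv D_2$ (both of type $n$), then for every polarisation $c \in \hv$ and position $p \in [n]$, one has $(D_1,c,p) \xRightarrow{w} (c',p')$ iff $(D_2,c,p) \xRightarrow{w} (c',p')$, i.e.\ $w^{D_1}_{c,p} = w^{D_2}_{c,p}$, $c^{D_1}_{c,p} = c^{D_2}_{c,p}$ and $p^{D_1}_{c,p} = p^{D_2}_{c,p}$. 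I would first record two structural facts that make the induction go through: (\emph{i}) the word path semantics is \emph{deterministic} (each $(c,p)$ has a unique derivation), which is stated just above as already proved in the style of~\cite{clement2020pbs}; and (\emph{ii}) it is \emph{compositional}, in the sense that the rules $(\circ)$, $(\oplus_1)$, $(\oplus_2)$ and $(\mathsf T_k)$ completely determine the semantics of a composite from that of its parts. By (\emph{ii}), it suffices to check invariance on the generating equalities themselves, since congruence closure under the operations then preserves invariance automatically.

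First I would dispatch the ``bureaucratic'' axioms---neutrality of $I_n$ and of the empty diagram, associativity of $\circ$ and $\oplus$, compatibility of $\circ$ and $\oplus$, and the inverse law $\tikzfig{swap-xs}\circ\tikzfig{swap-xs}=I_2$---each by directly unfolding the derivation rules on both sides and matching the resulting triples $(w,c',p')$. For example, for the inverse law one computes $(\tikzfig{swap-xs},c,p)\Rightarrow(c,1-p)$ and then $(\tikzfig{swap-xs},c,1-p)\Rightarrow(c,p)$, giving $(\tikzfig{swap-xs}\circ\tikzfig{swap-xs},c,p)\xRightarrow{\epsilon}(c,p)=(I_2,c,p)$; neutrality and associativity are even more immediate from the shape of $(\oplus_1)/(\oplus_2)$ and $(\circ)$. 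The naturality of the swap and the yanking axiom $Tr(\tikzfig{swap-xs})=\tikzfig{filcourt-s}$ are slightly more involved but still a finite case check: for yanking one applies Rule $(\mathsf T_k)$ to $\tikzfig{swap-xs}:2$, observing that entering at position $0$ sends the particle to position $1=n$ (the traced wire), so the trace reroutes it back to position $0$ with the same polarisation and empty word, matching $\tikzfig{filcourt-s}$; here one must check that $k=1$ suffices, which follows from the already-established bound on how often a wire is traversed.

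The main obstacle, and where I would spend the most care, is the collection of trace axioms involving nested or interacting traces---\emph{naturality in the input}, \emph{naturality in the output}, \emph{dinaturality}, and \emph{superposing}. These require reasoning about Rule $(\mathsf T_k)$ with $k$ possibly equal to $1$ or $2$, and the two sides of an equality may \emph{split} a single feedback traversal differently (e.g.\ dinaturality moves a subdiagram $D_2$ from one side of the loop to the other, so a path that crosses $D_2$ before the loop on the left-hand side crosses it after on the right-hand side, yet must yield the same word $w$ and the same endpoints). The key lemma I would isolate is that the \emph{trajectory through a trace is determined by iterating the one-step semantics of $D$ until the position leaves the traced wire}, together with the crucial finiteness fact---already justified in the excerpt by the time-symmetry / no-infinite-loop argument---that each wire (hence the traced wire) is traversed at most twice, once per polarisation. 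This bounds the number of loop iterations and reduces dinaturality and the naturality laws to a finite comparison of how the two diagrams sequence the same atomic transitions. For \emph{superposing}, $Tr(D_1\oplus D_2)=D_1\oplus Tr(D_2)$, the point is that a particle entering the $D_1$ block never reaches the traced wire (which lives in the $D_2$ block), so the trace acts as the identity on those positions, and one checks the position-index bookkeeping between $[n+m+1]$ and $[n+m]$ matches on both sides. Assembling these, invariance under every generator holds, and the inductive closure argument of step~(\emph{ii}) finishes the proof.
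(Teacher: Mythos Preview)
Your strategy coincides with the paper's: check each generating equality of the congruence and close under $\circ$, $\oplus$, $Tr$; the paper likewise declares every axiom except dinaturality to be a routine unfolding. The gap is that you treat dinaturality as a single-trace law, whereas it is stated for the $m$-fold trace $Tr^m((I_n\oplus D_2)\circ D_1)=Tr^m(D_1\circ(I_n\oplus D_2))$ with $D_2:m$ arbitrary. Your finiteness bound $k\le 2$ for Rule~$(\mathsf T_k)$ controls one traced wire; with $m$ traced wires the particle may re-enter the looped region up to $2m$ times, and the inductive semantics only supplies $(\mathsf T_k)$ one trace at a time. The paper's remedy is to first derive, by induction on $m$, a rule $(\mathsf T_k^m)$ computing the semantics of $Tr^m(D)$ directly for all $k\ge 0$: iterate the one-step semantics of $D$ until the output position first drops below $n$. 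Only with that rule in hand can one compare the two sides of dinaturality and see that both produce the same alternating concatenation $w'_0 w''_1 w'_1\cdots w''_k w'_k$ of passes through $D_1$ and $D_2$, the two sides differing merely in how this sequence is grouped around the loop boundary. Your ``iterate until the position leaves the traced wire'' idea is exactly what is needed, but it must be lifted to the multi-wire setting before the dinaturality case goes through.
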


\begin{proof}
One has to check, for each of the equalities given in \cref{appendix_congruence}, that the two sides have the same word path semantics. This is straightforward in each case except for dinaturality. In this case we first prove that Rule $(\mathsf T_k^m)$ below follows from those of \cref{wordpathsemantics}:
\[
\dfrac{D:n+m\qquad \forall i\in\{0,\ldots,k\}, (D,c_i,p_i)\xRightarrow{w_i\,}(c_{i+1},p_{i+1})\qquad(p_{i+1}\geq n){\Leftrightarrow}(i<k)}{(Tr^m(D),c_0,p_0)\xRightarrow{w_0\cdots w_k\,}(c_{k+1},p_{k+1})}(\mathsf T_k^m)
\]
for all $k,m\in\N$.

To prove this, we proceed by induction on $m$. The case $m=0$ is trivial, and the case $m=1$ corresponds to Rule $(\mathsf T_k)$ of \cref{wordpathsemantics} (the rule follows even for $k\geq 3$ since it is then not possible to satisfy its premises).

Now, assume that Rule $(\mathsf T_k^m)$ follows from those of \cref{wordpathsemantics}. Let $D:n+m+1$. Let $c_0\in\hv$ and $p_0\in[n]$. Let $(c_1,p_1),\ldots,(c_{k+1},p_{k+1})$ be the (unique) sequence of couples such that $\forall i\in\{0,\ldots,k\}, (D,c_i,p_i)\xRightarrow{w_i\,}(c_{i+1},p_{i+1})$ and $(p_{i+1}\geq n){\Leftrightarrow}(i<k)$ (that is, $k+1$ is the first index after $0$ such that $p_{k+1}<n$). 
Let $(c_{i_0},p_{i_0}),\ldots,(c_{i_{k'+1}},p_{i_{k'+1}})$, with $0=i_0<i_1<\cdots<i_{k'}<i_{k'+1}=k+1$, be the subsequence of $(c_1,p_1),\ldots,(c_{k+1},p_{k+1})$ where all couples with $p_i=n+m$ have been removed. For each $j\in\{0,\ldots,k'\}$, by Rule $(\mathsf T_k)$ one has $(Tr(D),c_{i_j},p_{i_j})\xRightarrow{w_{i_j}\cdots w_{i_{j+1}-1}}(c_{i_{j+1}},p_{i_{j+1}})$. Additionally, one has $Tr(D):n+m$ and $(p_{i_{j+1}}\geq n)\Leftrightarrow(j<k')$, so that by Rule $(\mathsf T_k^m)$, one has $(Tr^{m+1}(D),c_0,p_0)\xRightarrow{w_0\cdots w_k\,}(c_{k+1},p_{k+1})$, which validates Rule $(\mathsf T_k^{m+1})$.

Given Rule $(\mathsf T_k^m)$ for all $k,m$, we check the compatibility of the word path semantics with dinaturality as follows: given any $D_1:n+m$ and $D_2:m$ with $n,m\geq0$, on the one hand one has
\[\begin{cases}
((I_n\oplus D_2)\circ D_1,c,p)\xRightarrow{w^{D_1}_{c,p}}(c^{D_1}_{c,p},p^{D_1}_{c,p})&\text{if $p^{D_1}_{c,p}<n$}\\
((I_n\oplus D_2)\circ D_1,c,p)\xRightarrow{w^{D_1}_{c,p}w^{D_2}_{\left(c^{D_1}_{c,p}\right),\left(p^{D_1}_{c,p}-n\right)}}(c^{D_2}_{\left(c^{D_1}_{c,p}\right),\left(p^{D_1}_{c,p}-n\right)},p^{D_2}_{\left(c^{D_1}_{c,p}\right),\left(p^{D_1}_{c,p}-n\right)}+n)&\text{if $p^{D_1}_{c,p}\geq n$}
\end{cases}\]
so that given $c_0\in\hv$ and $p_0\in[n]$, if one has a sequence $((I_n\oplus D_2)\circ D_1,c_0,p_0)\xRightarrow{w_0}(c_1,p_1),\ldots,((I_n\oplus D_2)\circ D_1,c_k,p_k)\xRightarrow{w_k}(c_{k+1},p_{k+1})$ with $(p_{i+1}\geq n)\Leftrightarrow(i<k)$, then one has a sequence $(D_1,c_0,p_0)\xRightarrow{w'_0}(c'_1,p'_1),(D_2,c'_1,p'_1-n)\xRightarrow{w''_1}(c_1,p_1-n),(D_1,c_1,p_1)\xRightarrow{w'_1}(c'_1,p'_1),\ldots,(D_1,c_{k-1},p_{k-1})\xRightarrow{w'_{k-1}}(c'_k,p'_k),(D_2,c'_k,p'_k-n)\xRightarrow{w''_{k}}(c_k,p_k-n),(D_1,c_{k},p_{k})\xRightarrow{w'_{k}}(c_{k+1},p_{k+1})$ with $\forall i\in\{0,\ldots,k-1\},w'_iw''_{i+1}=w_i$, and $w'_k=w_k$, so that $(Tr^m((I_n\oplus D_2)\circ D_1),c_0,p_0)\xRightarrow{w'_0w''_1\cdots w'_{k-1}w''_{k}w'_k}(c_{k+1},p_{k+1})$.

On the other hand, one has
\[\begin{cases}
(D_1\circ (I_n\oplus D_2),c,p)\xRightarrow{w^{D_1}_{c,p}}(c^{D_1}_{c,p},p^{D_1}_{c,p})&\text{if $p<n$}\smallskip\\
(D_1\!\circ\! (I_n\!\oplus\! D_2),c,p)\!\xRightarrow{\!\!\!\!w^{D_2}_{c,p-n}w^{D_1}_{\!(c^{D_2}_{c,p-n}),(p^{D_2}_{c,p-n}\!+n)}\!\!\!\!}(c^{D_1}_{\!(c^{D_2}_{c,p-n}),(p^{D_2}_{c,p-n}\!+n)},p^{D_1}_{\!(c^{D_2}_{c,p-n}),(p^{D_2}_{c,p-n}\!+n)})&\text{if $p\geq n$}
\end{cases}\]
so that given $c_0\in\hv$ and $p_0\in[n]$, if one has a sequence $(D_1\circ (I_n\oplus D_2),c_0,p_0)\xRightarrow{\tilde w_0}(c'_1,p'_1),\ldots,(D_1\circ (I_n\oplus D_2),c'_k,p'_k)\xRightarrow{\tilde w_k}(c'_{k+1},p'_{k+1})$ with $(p_{i+1}\geq n)\Leftrightarrow(i<k)$, then one has a sequence $(D_1,c_0,p_0)\xRightarrow{w'_0}(c'_1,p'_1),(D_2,c'_1,p'_1-n)\xRightarrow{w''_1}(c_1,p_1-n),(D_1,c_1,p_1)\xRightarrow{w'_1}(c'_1,p'_1),\ldots,(D_1,c_{k-1},p_{k-1})\xRightarrow{w'_{k-1}}(c'_k,p'_k),(D_2,c'_k,p'_k-n)\xRightarrow{w''_{k}}(c_k,p_k-n),(D_1,c_{k},p_{k})\xRightarrow{w'_{k}}(c_{k+1},p_{k+1})$ with $w'_0=\tilde w_0$ and $\forall i\in\{0,\ldots,k-1\},w''_iw'_i=\tilde w_i$, so that one has $(c'_{k+1},p'_{k+1})=(c_{k+1},p_{k+1})$ and $(Tr^m(D_1\circ (I_n\oplus D_2)),c_0,p_0)\xRightarrow{w'_0w''_1\cdots w'_{k-1}w''_{k}w'_k}(c_{k+1},p_{k+1})$.
This proves that the two sides of the equality have the same semantics.
\end{proof}

\subsection {Proof of Proposition \ref{pasplusdedeuxfois}}
\label{app:pasplusdedeuxfois}

\proppasplusdedeuxfois*

\begin{proof}

We proceed by stuctural induction on $D$.
\begin{itemize}

\item If $D=\tikzfig{diagrammevide-s},\tikzfig{filcourt-s},\tikzfig{neg-xs},\tikzfig{swap-xs}\text{ or }\tikzfig{beamsplitter-xs}$, then the sums are 0 (they are in particular empty for $D=\tikzfig{diagrammevide-s}$), so the result is trivially true.

\item If $D=\tikzfig{bare-QC-Ueps-indx}$, then one has $w^D_{\rightarrow,0}=w^D_{\uparrow,0}=a$, so the result holds.

\item If $D=D_2\circ D_1$ with $\Gamma_1 \vdash D_1:n, \Gamma_2 \vdash D_2:n, \Gamma_1 \cap \Gamma_2 = \emptyset$, then
\[\begin{disarray}{rcl}\sum_{\begin{scriptarray}{c}\vspace{-1.75em}\\\changelargeursicentre{c\in\{\rightarrow,\uparrow\}}{}\\[-0.4em]p\in[n]\vspace{-0.2em}\end{scriptarray}}|w^D_{c,p}|_{a}&=&\sum_{\begin{scriptarray}{c}\vspace{-1.75em}\\\changelargeursicentre{c\in\{\rightarrow,\uparrow\}}{}\\[-0.4em]p\in[n]\vspace{-0.2em}\end{scriptarray}}\Bigl|w^{D_1}_{c,p}w^{D_2}_{c^{D_1}_{c,p},p^{D_1}_{c,p}}\Bigr|_{a}
\quad = \quad \sum_{\begin{scriptarray}{c}\vspace{-1.75em}\\\changelargeursicentre{c\in\{\rightarrow,\uparrow\}}{}\\[-0.4em]p\in[n]\vspace{-0.2em}\end{scriptarray}}|w^{D_1}_{c,p}|_{a}+\sum_{\begin{scriptarray}{c}\vspace{-1.75em}\\\changelargeursicentre{c\in\{\rightarrow,\uparrow\}}{}\\[-0.4em]p\in[n]\vspace{-0.2em}\end{scriptarray}}\Bigl|w^{D_2}_{c^{D_1}_{c,p},p^{D_1}_{c,p}}\Bigr|_{a}\,.\end{disarray}\]
Since the map $(c,p)\mapsto(c^{D_1}_{c,p},p^{D_1}_{c,p})$ is a bijection, the sum above is equal to
\[\sum_{\begin{scriptarray}{c}\vspace{-1.75em}\\\changelargeursicentre{c\in\{\rightarrow,\uparrow\}}{}\\[-0.4em]p\in[n]\vspace{-0.2em}\end{scriptarray}}|w^{D_1}_{c,p}|_{a}+\sum_{\begin{scriptarray}{c}\vspace{-1.75em}\\\changelargeursicentre{c\in\{\rightarrow,\uparrow\}}{}\\[-0.4em]p\in[n]\vspace{-0.2em}\end{scriptarray}}|w^{D_2}_{c,p}|_{a}\,.\]
Since $D_1$ and $D_2$ have disjoint alphabets $\Gamma_1$ and $\Gamma_2$, at least one of the two sums is equal to 0, and by induction hypothesis, the other one is no greater than $2$.

Moreover, if $D$ is $\tikzfig{neg-xs}$-free then for any $c\in\hv$, \[\sum_{p\in[n]}|w^D_{c,p}|_{a}=\sum_{p\in[n]}|w^{D_1}_{c,p}|_{a}+\sum_{p\in[n]}\Bigl|w^{D_2}_{c^{D_1}_{c,p},p^{D_1}_{c,p}}\Bigr|_{a} \,.\]
It is easy to see that since $D_1$ is $\tikzfig{neg-xs}$-free, it cannot change the polarisation so that $c^{D_1}_{c,p}=c$. Moreover, the map $(c,p)\mapsto(c,p^{D_1}_{c,p})$ is again a bijection, so that the sum above is equal to
\[\sum_{p\in[n]}|w^{D_1}_{c,p}|_{a}+\sum_{p\in[n]}|w^{D_2}_{c,p}|_{a}\,.\]
Since $D_1$ and $D_2$ have disjoint alphabets, at least one of the two sums is equal to 0, and by induction hypothesis, the other one is no greater than $1$.

\item If $D=D_1\oplus D_2$ with $\Gamma_1 \vdash D_1:n_1, \Gamma_2 \vdash D_2:n_2$ such that $n_1 + n_2 = n, \Gamma_1 \cap \Gamma_2 = \emptyset$, then
\[\sum_{\begin{scriptarray}{c}\vspace{-1.75em}\\\changelargeursicentre{c\in\{\rightarrow,\uparrow\}}{}\\[-0.4em]p\in[n]\vspace{-0.2em}\end{scriptarray}}|w^D_{c,p}|_{a}\ \ = \ \ \sum_{\begin{scriptarray}{c}\vspace{-1.75em}\\\changelargeursicentre{c\in\{\rightarrow,\uparrow\}}{}\\[-0.4em]p\in[n_1]\vspace{-0.2em}\end{scriptarray}}|w^D_{c,p}|_{a}+\sum_{\begin{scriptarray}{c}\vspace{-1.75em}\\\changelargeursicentre{c\in\{\rightarrow,\uparrow\}}{}\\[-0.4em]\changelargeursicentre{n_1\leq p<n}{p\in[n_1]}\vspace{-0.2em}\end{scriptarray}}|w^D_{c,p}|_{a}
\ \ = \ \ \sum_{\begin{scriptarray}{c}\vspace{-1.75em}\\\changelargeursicentre{c\in\{\rightarrow,\uparrow\}}{}\\[-0.4em]p\in[n_1]\vspace{-0.2em}\end{scriptarray}}|w^{D_1}_{c,p}|_{a}+\sum_{\begin{scriptarray}{c}\vspace{-1.75em}\\\changelargeursicentre{c\in\{\rightarrow,\uparrow\}}{}\\[-0.4em]p\in[n_2]\vspace{-0.2em}\end{scriptarray}}|w^{D_2}_{c,p}|_{a}\,.\]
Since $D_1$ and $D_2$ have disjoint alphabets $\Gamma_1$ and $\Gamma_2$, at least one of the two sums is equal to 0, and by induction hypothesis, the other one is no greater than $2$.

Moreover, if $D$ is $\tikzfig{neg-xs}$-free then similarly, for any $c\in\hv$,
\[\begin{disarray}{rcl}\sum_{p\in[n]}|w^D_{c,p}|_{a}=\sum_{p\in[n_1]}|w^{D_1}_{c,p}|_{a}+\sum_{p\in[n_2]}|w^{D_2}_{c,p}|_{a}\,.\end{disarray}\]
Since $D_1$ and $D_2$ have disjoint alphabets, at least one of the two sums is equal to 0, and by induction hypothesis, the other one is no greater than $1$.

\item If $D=Tr(D')$ with $D':n+1$, then for any $c\in\hv$ and any $p \in [n]$,%
\footnote{The argument that follows applies to $n \ge 1$; for $n=0$ the sums are again empty (as in the case of $D=\tikzfig{diagrammevide-s}$), so that the result trivially holds.}
the couple $(c^D_{c,p},p^D_{c,p})$ is the unique couple such that there exists a sequence of arrows $(D',c,p)\xRightarrow{w_0\,}(c_1,n)$, $(D',c_1,n)\xRightarrow{w_1\,}(c_2,n),\ldots,(D',c_{k-1},n)\xRightarrow{w_{k-1}\,}(c_k,n)$, $(D',c_k,n)\xRightarrow{w_k\,}(c^D_{c,p},p^D_{c,p})$ (we additionally know that $k\leq2$). Given such a sequence, one has $|w^D_{c,p}|_{a}=|w^{D'}_{c,p}|_{a}+|w^{D'}_{c_1,n}|_{a}+\cdots+|w^{D'}_{c_k,n}|_{a}$.

Since the map $(c',p')\mapsto(c^{D'}_{c',p'},p^{D'}_{c',p'})$ is a bijection, a given couple $(c',p')$, now with $p'\in[n+1]$, cannot appear more than once on the left of an arrow (i.e., as a polarisation and position configuration entering the diagram $D'$) among the family of all possible such sequences. In particular for $p'=n$, it follows that the sum of all partial sums $|w^{D'}_{c_1,n}|_{a}+\cdots+|w^{D'}_{c_k,n}|_{a}$ above, for all possible sequences (i.e., for all starting configurations $c,p$), is upper-bounded by $\sum_{c\in\hv}|w^{D'}_{c,n}|_{a}$.
 Therefore,
\[\sum_{\begin{scriptarray}{c}\vspace{-1.75em}\\\changelargeursicentre{c\in\{\rightarrow,\uparrow\}}{}\\[-0.4em]p\in[n]\vspace{-0.2em}\end{scriptarray}}|w^D_{c,p}|_{a}\leq\sum_{\begin{scriptarray}{c}\vspace{-1.75em}\\\changelargeursicentre{c\in\{\rightarrow,\uparrow\}}{}\\[-0.4em]p\in[n]\vspace{-0.2em}\end{scriptarray}}|w^{D'}_{c,p}|_{a} + \sum_{c\in\hv}|w^{D'}_{c,n}|_{a} = \sum_{\begin{scriptarray}{c}\vspace{-1.75em}\\\changelargeursicentre{c\in\{\rightarrow,\uparrow\}}{}\\[-0.4em]p\in[n+1]\vspace{-0.2em}\end{scriptarray}}|w^{D'}_{c,p}|_{a}\]
which, by induction hypothesis, is no greater than $2$.

Moreover, if $D$ is $\tikzfig{neg-xs}$-free then since the polarisation cannot change, one can proceed in the same way for each of the two polarisations $\rightarrow$ and $\uparrow$ separately. We similarly get that for any $c\in\hv$,
\[\sum_{p\in[n]}|w^D_{c,p}|_{a}\leq\sum_{p\in[n+1]}|w^{D'}_{c,p}|_{a}\]
which, by induction hypothesis, is no greater than $1$.
\end{itemize}
\end{proof}

\subsection{Converse: proof of Proposition \ref{recip2fois}}
\label{app:recip2fois}

\proprecipdeuxfois*

\begin{proof}

We prove by induction on  $\sum_{c,p} |w_{c,p}|$ (where $|w|$ denotes the length of the word $w$) that there exists $D$ such that $(D,c,p)\xRightarrow{w^D_{c,p}} (c,p)$, which ensures the proposition.

We say that such a diagram realises the family $W=\{w_{c,p}\}_{(c,p)\in \{\rightarrow,\uparrow\} \times [n]}$.

\begin{itemize}

\item If $\sum_{c,p} |w_{c,p}|=0$, the ``identity'' diagram $I_n = \oplus^n(-)$ gives $(I_n,c,p)\xRightarrow{} (c,p)$, so that $I_n$ realises the family $W=\{w_{c,p} = \epsilon\}_{(c,p)\in \{\rightarrow,\uparrow\} \times [n]}$ (the only one satisfying $\sum_{c,p} |w_{c,p}|=0$). 

\item If  $W=\{w_{c,p}\}_{(c,p)\in \{\rightarrow,\uparrow\} \times [n]}$ is such that $w_{c_0,p_0}=a$ for some $(c_0,p_0)$ and some label $a$, and $w_{c,p}$ is the empty word otherwise (i.e., if $\sum_{c,p} |w_{c,p}|=1$),
then  the  following diagrams realise $W$ when $c_0=\uparrow$ and $c_0=\rightarrow$, respectively: 
\[\tikzfig{prop4cas0vertical}\qquad\qquad\qquad\tikzfig{prop4cas0horizontal}\] 

\item For any family $W=\{w_{c,p}\}_{(c,p)\in \{\rightarrow,\uparrow\} \times [n]}$ with at least one nonempty word (i.e., with $\sum_{c,p} |w_{c,p}| \ge 1$)
such that every letter appears at most twice in the whole family, consider a nonempty $w_{c_0,p_0}$. It can be written in the form $ua$ with $|a|=1$:

\begin{itemize}

\item  If $\sum_{c,p} |w_{c,p}|_a=1$, then composing a   diagram $D'$ realising $W'= \{w'_{c,p}\}_{(c,p)\in \{\rightarrow,\uparrow\} \times [n]}$  where  $w'_{c_0,p_0}=u$ and $w'_{c,p}= w_{c,p}$ otherwise (which exists by induction) with a diagram  $D_a$ realising (as in the previous case) $W''= \{w''_{c,p}\}_{(c,p)\in \{\rightarrow,\uparrow\} \times [n]}$ such that $w''_{c_0,p_0}=a$  and  $w''_{c,p}$ is the empty word otherwise  allows one to realise $W$.

\item  Otherwise,  there exists a second occurrence of $a$ in some $w_{c_1,p_1}$, that one can write in the form $w_{c_1,p_1}=vaw$ with $a\notin v$.

\begin{itemize}

\item If $p_1=p_0$ and $c_0=c_1$ then $\exists\tilde{w},\ w_{c_0,p_0}=va\tilde{w}a$.  Let $D'$ be a diagram on $n+1$ wires realising $w'_{c_0,p_0}=v$, $w'_{c_0,n}=\tilde{w}$,
$w'_{\neg c_0,n}=\epsilon$ (where $\neg(\rightarrow)=\uparrow, \neg(\uparrow)=\rightarrow$) and $w'_{ c,p}=w_{c,p}$ on the first $n$ wires otherwise. The following diagrams realise $W$ when $c_0=\uparrow$ and $c_0=\rightarrow$, respectively: 
\[\tikzfig{prop4cas2vertical}\qquad\qquad\tikzfig{prop4cas2horizontal}\]

\item  If $p_1=p_0$ and $c_0\neq c_1$ then $w_{c_0,p_0}=ua$ and $w_{c_1,p_0}=vaw$.
  Let $D'$ be a diagram on $n+1$ wires realising $w'_{c_0,p_0}=u$, $w'_{c_1,p_0}=v$, $w'_{c_0,n}=\epsilon$,
  $w'_{c_1,n}=w$ and $w'_{ c,p}=w_{c,p}$ on the first $n$ wires otherwise. The following diagram realises $W$:
\[\tikzfig{prop4cas1bis}\]

\item  If $p_1\neq p_0$ and $c_0=c_1$ then $w_{c_0,p_0}=ua$ and $w_{c_0,p_1}=vaw$. Let $D'$ be a diagram on $n+1$ wires realising $w'_{c_0,p_0}=u$, $w'_{c_0,n}=\epsilon$, $w'_{c_0,p_1}=v$, $w'_{\neg c_0,n}=w$, and $w'_{c,p}=w_{c,p}$ on the first $n$ wires otherwise. The following diagram realises $W$ when $c_0=\uparrow$:
\[\tikzfig{prop4cas3vertical}\]
and the following diagram realises $W$ when $c_0=\rightarrow$:
\[\tikzfig{prop4cas3horizontal}\]

\item  If $p_1\neq p_0$ and $c_0\neq c_1$,    let $D'$ be a diagram on $n+1$ wires realising $w'_{c_0,p_0}=u$, $w'_{c_0,n}=\epsilon$, $w'_{c_1,p_1}=v$, $w'_{c_1,n}=w$, and $w'_{ c,p}=w_{c,p}$ on the first $n$ wires otherwise. The following diagram realises $W$ with  $c_0=\uparrow$:
\[\tikzfig{prop4cas4vertical}\]
and the following diagram realises $W$ with  $c_0=\rightarrow$:
\[\tikzfig{prop4cas4horizontal}\]

\end{itemize}
\end{itemize}
\end{itemize}

Note that for the cases where $p_0\neq p_1$, although strictly speaking the last four pictures illustrate the case where $p_0<p_1$, they aim at representing the general case. If $p_1<p_0$, then one should include a swap between the two corresponding wires in order to connect them to the appropriate ports.

Note that this proof is constructive, although not deterministic. That is, by following the induction steps, one can build a diagram realising a given family $W$; although, depending on how one follows these steps (i.e., on which word $w_{c_0,p_0}$ one singles out at each step), one may end up with different possible diagrams. 
Moreover, the only cases where some $\tikzfig{neg-xs}$ are added are the cases where the letter $a$ under consideration appears twice for the same polarisation $c_0$. Therefore, if every letter appears at most once for each polarisation $c$, then any diagram built by unfolding the induction is $\tikzfig{neg-xs}$-free. This proves the second statement.

\end{proof}

\section{Circuit notations}
\label{appendix_circuit_notations}

In this paper, we further develop the graphical representation of coherent control by means of \textup{PBS}-diagrams, but we also use circuit-like notations when it is convenient to represent sequential and parallel compositions of linear transformations 
$\Hin \to  \Hout$ for some Hilbert spaces $\Hin$ and $\Hout$ (e.g., unitary operations, density matrices or matrices of the form $\ket i \bra j$) and linear maps $\Lin(\Hin )\to \Lin(\Hout)$ (i.e., superoperators).
We briefly review these circuit-like notations: given a linear transformation $U : \H_1\otimes \ldots \otimes \H_n \to  \H'_1\otimes \ldots \otimes \H'_k $, $$\tikzfig{circ_Unk}$$ is a circuit of type $ \H_1\otimes \ldots \otimes \H_n \to  \H'_1\otimes \ldots \otimes \H'_k 
$.
 Note that the Hilbert spaces on the wires are generally omitted when these are clear from the context. 

The identity operator on a Hilbert space is represented as a wire. Sequential composition consists in plugging two circuits (with the appropriate types) in a row, and tensor product consists in putting two circuits in parallel, e.g., for any linear maps $U: \H_0\to \H_1$, $V: \H_1\to \H_2$, $W:\H_2\to \H_3$:\\
 \centerline{$\tikzfig{circ_UVT2}~=~\tikzfig{circ_UVT1} \qquad\qquad \tikzfig{circ_UtensVT2L}~=~\tikzfig{circ_UtensVT2R}$}

The associativity of both $\circ$ and $\otimes$, and the mixed-product property ($(U'\otimes V')\circ (U\otimes V) = (U'\circ U)\otimes (V'\circ V)$ for some $U: \H_0\to \H_1$, $U': \H_1\to \H_2$, $V: \H_3\to \H_4$, $V': \H_4\to \H_5$) guarantee the nonambiguity of the circuit-like notations. 
Quantum states (resp. their adjoints) can be added to input (resp. output) wires, e.g., $\tikzfig{circ_Uinout} = \bra \psi U \ket \varphi$. 

We extend these notations to represent partial trace:%
\footnote{Given a linear transformation $C\colon\mathcal A\otimes\mathcal B\to\mathcal A'\otimes\mathcal B$, its partial trace over $\mathcal B$ is defined as $\textup{Tr}_{\mathcal B}(C)=\sum_i (I_{\mathcal A'}\otimes\bra{i}_{\mathcal B}) C (I_{\mathcal A}\otimes\ket{i}_{\mathcal B})$ where $\{\ket{i}_{\mathcal B}\}_i$ is an orthonormal basis of $\mathcal B$.}
 given a linear transformation $C:\H_1\otimes \ldots \otimes \H_n\otimes\H \to  \H'_1\otimes \ldots \otimes \H'_k\otimes\H$:
\[\tikzfig{circ_CnktraceH}\ =\ \tikzfig{circ_TrCnk}\]

With this trace and the swap $\tikzfig{swapH1H2}=\ket{\varphi_1}\otimes\ket{\varphi_2}\mapsto\ket{\varphi_2}\otimes\ket{\varphi_1}$, and with quantum states $\ket\varphi\in\H$ (resp. their adjoints $\bra\psi\in\H^\dag$) seen as linear transformations $\CC\to\H$ (resp. $\H\to\CC$), circuits form a traced strict symmetric monoidal category.
That is, in addition to the fact that the notation is not ambiguous, circuits can be deformed at will (as long as their topology is preserved) without changing the transformation that is represented.

Following~\cite{CP10-env,CJPV-ICALP19}, we further extend these notations to represent linear maps $\Lin(\Hin )\to \Lin(\Hout)$, using the ``ground'' symbol $\tikzfig{ground}$. Given a ``pure'' (i.e., $\tikzfig{ground}~$-free) circuit, plugging one (or several) $\tikzfig{ground}$ in its output wire(s) corresponds essentially to tracing out the corresponding systems---or more precisely, to defining the map that takes an operator (typically, a density matrix, $\rho$) acting on the input Hilbert spaces, applies the linear map defined by the circuit (as in $\rho \mapsto U\rho U^\dagger$), and traces out the systems to which the ground symbol is attached, e.g., $$
\begin{array}{rcccl}
\tikzfig{circ_trace0}~ &=&~ \rho \mapsto \textup{Tr}_{\H_2'\otimes \H_3'}(U\rho  U^\dagger) &=& \rho \mapsto \tikzfig{circ_trace0-developpe}\\&&\\
 \tikzfig{circ_trace1}~ &=&~ \rho \mapsto \textup{Tr}_{\E}(V(\rho\otimes \ket{\varphi}\bra{\varphi}) V^\dagger) &=& \rho \mapsto \tikzfig{circ_trace1-developpe}\\
 \end{array}$$
where the top example defines a map $\Lin(\H_1\otimes\H_2\otimes\H_3)\to\Lin(\H_1')$, and the bottom example defines a map $\Lin(\H_0)\to\Lin(\H_1)$.
We say that such circuits are of type $\Lin(\Hin )\to \Lin(\Hout)$.

\begin{remark} With these definitions, for a circuit with input Hilbert spaces $\H_1, \ldots, \H_n$ and output Hilbert spaces $\H'_1, \ldots, \H'_k$ to represent a linear map $\Lin( \H_1\otimes \ldots \otimes \H_n)\to \Lin(   \H'_1\otimes \ldots \otimes \H'_k )$, it  must contain at least one $\tikzfig{ground}~$ symbol. 
As a consequence the CPTP map $\rho \mapsto U\rho U^\dagger$ cannot be represented as $\tikzfig{circ_U1}$ (which is a ``pure'' circuit) but for instance as $\tikzfig{circ_pure_ground}$. 
\end{remark}

Note that one can consider $\tikzfig{groundH}$ as a generator $\L(\H)\to\L(\CC)=\CC$ and place it anywhere in the circuit. Because of the traced strict symmetric monoidal structure of $\tikzfig{ground}~$-free circuits and the fact that $\tikzfig{swapgroundbas}=\tikzfig{groundsurfil}$, this does not create ambiguity since all ways of pulling the $\tikzfig{ground}$ symbols to the right give the same linear map. Moreover, circuits with this additional generator still form a traced strict symmetric monoidal category.

\begin{remark}
When quantum states are attached to all input wires of a circuit, the circuit represents a linear map $\CC\to\Lin(\H)$, of the form $\lambda\mapsto\lambda\rho$ for some mixed state $\rho\in\Lin(\H)$. By a slight abuse of notation, we identify this linear map with the state $\rho$ itself.
\end{remark}

\section{Observational equivalence of purified channels}
\label{appendix_obs_equiv}

\subsection{Using PBS-free contexts: proof of Theorem \ref{thm:cptpchannels}} \label{appendix:PBSfree}

\thmcptpchannels*

\bigskip

\begin{proof}
By considering the trivial context $\tikzfig{QC-Ueps-context}$, if $[U,\ket{\varepsilon},\E] \approx_0 [U',\ket{\varepsilon'},\E']$ then in particular, $\interp{\tikzfig{QC-Ueps}}=\interp{\tikzfig{QC-Uepsprimes}}$, hence, $\S^{(1)}_{[U,\ket{\varepsilon},\E]}=\S^{(1)}_{[U',\ket{\varepsilon'},\E']}$.

Conversely, let us assume that $\S^{(1)}_{[U,\ket{\varepsilon},\E]}=\S^{(1)}_{[U',\ket{\varepsilon'},\E']}$. Let $C[\cdot]\in\C_0$. By deformation of diagrams one can write it in one of the following two forms:
\begin{itemize}
\item $C'[\cdot]\oplus D$, with $D:\H^{(0)}$ and $C'[\cdot]$ of the form 
\[\tikzfig{contexteC0general}\]
for some purified channels $[V_i,\ket{\eta_i},\mathcal V_i], [W_j,\ket{\zeta_j},\mathcal Z_j] \in \mathfrak{C}({\mathcal H})$, and where $\tikzfig{negpotentielserie-s}$ denotes any sequence of $\tikzfig{neg-xs}$, possibly of length $0$;
\item $D\oplus C'[\cdot]$, with $D:\H^{(1)}$ and $C'[\cdot]:\H^{(0)}$.
\end{itemize}

In the latter case, the semantics does not depend on what is plugged in the hole, so that $\interp{C[U,\ket{\varepsilon},\E]}=\interp{C[U',\ket{\varepsilon'},\E']}$. In the former case, 
\[\begin{array}{r@{\,}c@{\,}l}\interp{C[U,\ket{\varepsilon},\E]}&=&
\mathcal{X}^* \otimes {\mathcal I}_\CC \otimes \left(\S_{[W_\ell,\ket{\zeta_\ell},\mathcal Z_\ell]}^{(1)}\circ\cdots\circ\S_{[W_1,\ket{\zeta_1},\mathcal Z_1]}^{(1)}\circ \S_{[U,\ket \varepsilon,\E]}^{(1)}\circ \S_{[V_k,\ket{\eta_k},\mathcal V_k]}^{(1)}\circ\cdots\circ\S_{[V_1,\ket{\eta_1},\mathcal V_1]}^{(1)}\right)\smallskip\\
&=&
\mathcal{X}^* \otimes {\mathcal I}_\CC \otimes \left(\S_{[W_\ell,\ket{\zeta_\ell},\mathcal Z_\ell]}^{(1)}\circ\cdots\circ\S_{[W_1,\ket{\zeta_1},\mathcal Z_1]}^{(1)}\circ \S_{[U',\ket \varepsilon',\E']}^{(1)}\circ \S_{[V_k,\ket{\eta_k},\mathcal V_k]}^{(1)}\circ\cdots\circ\S_{[V_1,\ket{\eta_1},\mathcal V_1]}^{(1)}\right)\smallskip\\
&=&\interp{C[U',\ket{\varepsilon'},\E']}\end{array}\] 
where $\mathcal{X}^*$ is either the identity map over $\L(\CC^{\hv})$ if the total number of $\tikzfig{neg-xs}$ in $C'[\cdot]$ is even, or the linear map $\ket{c}\bra{c'}\mapsto\ket{\neg {c}}\bra{\neg {c'}}$ if the total number of $\tikzfig{neg-xs}$ in $C'[\cdot]$ is odd, and $\mathcal I_\CC$ is the identity map over $\CC$. Hence, $[U,\ket{\varepsilon},\E] \approx_0 [U',\ket{\varepsilon'},\E']$.
\end{proof}

\subsection{Using negation-free contexts: proof of Theorem~\ref{caracobsequivnegfree}}
\label{appendix_obs_equiv_negfree}

\thmcaracobsequivnegfree*
\bigskip

We are going to prove at the same time \cref{caracobsequivnegfree} and the fact that allowing multiple input/output wires does not increase the power of $\tikzfig{neg-xs}$-free contexts, stated as the following proposition:
\begin{proposition}\label{multiwirecontextsnegfree}
Given two purified $\H$-channels $[U,\ket{\varepsilon},\E]$ and $[U',\ket{\varepsilon'},\E']$, one has $[U,\ket{\varepsilon},\E]\approx_1[U',\ket{\varepsilon'},\E']$ (that is, for any $\tikzfig{neg-xs}$-free context $C[\cdot]:\H^{(1)}$, $\interp{C[U,\ket{\varepsilon},\E]}=\interp{C[U',\ket{\varepsilon'},\E']}$) if and only if for any $\tikzfig{neg-xs}$-free context $C[\cdot]:\H^{(n)}$, $\interp{C[U,\ket{\varepsilon},\E]}=\interp{C[U',\ket{\varepsilon'},\E']}$.
\end{proposition}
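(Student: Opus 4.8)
The plan is to prove this proposition simultaneously with \cref{caracobsequivnegfree}, by showing that the two conditions of that theorem --- same first-level superoperator~\eqref{SP1} and same first-level transformation matrix~\eqref{TM1_v0} --- sit in between the single-wire and the multi-wire statements. Concretely, I would establish: (i) single-wire $\approx_1$-equivalence already forces \eqref{SP1} and \eqref{TM1_v0}; and (ii) conversely, \eqref{SP1} together with \eqref{TM1_v0} guarantee $\interp{C[U,\ket\varepsilon,\E]}=\interp{C[U',\ket{\varepsilon'},\E']}$ for \emph{every} $\tikzfig{neg-xs}$-free context $C[\cdot]:\H^{(n)}$, with $n$ arbitrary. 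The ``if'' direction of the proposition is then immediate (single-wire contexts are the case $n=1$), and the nontrivial content lies in the fact that the sufficiency argument in (ii) never uses the number of wires, so one computation settles both \cref{caracobsequivnegfree} and \cref{multiwirecontextsnegfree} at once.

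For step (i), which only needs single-wire contexts, I would recover \eqref{SP1} from the trivial context $\tikzfig{QC-Ueps-context}$, exactly as in the proof of \cref{thm:cptpchannels}, and recover \eqref{TM1_v0} from the loop context $\tikzfig{boucletraversetrou}$ discussed right after \cref{caracobsequivnegfree}: plugging in a channel and reading off the $\ket\uparrow\bra\rightarrow$ polarisation-coherence block of the output isolates precisely $T^{(1)}_{[U,\ket\varepsilon,\E]}$, so equality of semantics forces equality of first-level transformation matrices.

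Step (ii) is the main work. Given a $\tikzfig{neg-xs}$-free context $C[\cdot]:\H^{(n)}$, I would first deform it into the normal form $C[\cdot]=\tikzfig{contextedeformetrou}$ with $D$ a $\tikzfig{neg-xs}$-free extended PBS-diagram carrying a single hole gate $x$; pick an underlying bare diagram $\Gamma\vdash D':n$ and surrounding channels $\mathcal G$. The structural key is \cref{pasplusdedeuxfois}: since $D'$ is $\tikzfig{neg-xs}$-free, for each polarisation $c\in\hv$ the hole label $x$ occurs at most once in $\{w^{D'}_{c,p}\}_{p\in[n]}$. As negations are the only generators that flip the polarisation, $U^{\mathcal G}_{D'}$ is block-diagonal in the polarisation register, $U^{\mathcal G}_{D'}=\sum_{c\in\hv}\ket c\bra c\otimes M_c$ with $M_c=\sum_{p}\ket{p^{D'}_{c,p}}\bra p\otimes V^{\mathcal G}_{w^{D'}_{c,p}}$, and each $M_c$ contains the factor $U_x$ acting on the hole environment $\E_x$ in at most one of its position-terms. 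Using the product form $\ket{\varepsilon_{\mathcal G}}=\ket{\varepsilon_x}\otimes\ket{\varepsilon_{\textup{rest}}}$, I would trace out $\E_{\textup{rest}}$ first (it commutes with $U_x$, which only touches $\E_x$ among the environments) and then trace out $\E_x$. Expanding the resulting double sum $\textup{Tr}_{\E_x}[M_c(\cdots)M_{c'}^\dagger]$, every term falls into exactly one of three patterns according to whether $U_x$ is present on the ket side, the bra side, or both: both sides carry $U_x$ gives $\S^{(1)}_{[U_x,\ket{\varepsilon_x},\E_x]}$ (since $\textup{Tr}_{\E_x}[U_x(\sigma\otimes\ket{\varepsilon_x}\bra{\varepsilon_x})U_x^\dagger]=\S^{(1)}(\sigma)$), exactly one side carries $U_x$ gives $T^{(1)}_{[U_x,\ket{\varepsilon_x},\E_x]}$ (since $\textup{Tr}_{\E_x}[U_x(\ket\psi\otimes\ket{\varepsilon_x})(\bra\phi\otimes\bra{\varepsilon_x})]=(T^{(1)}\ket\psi)\bra\phi$), and neither side carries $U_x$ contributes the channel-independent factor $\langle\varepsilon_x|\varepsilon_x\rangle=1$. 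In all cases the dependence of $\interp{C[U_x,\ket{\varepsilon_x},\E_x]}$ on the plugged channel enters \emph{only} through $\S^{(1)}$ and $T^{(1)}$, all remaining data (the position permutations, the surrounding $V_a$ for $a\neq x$, and their environments) being common to both channels; hence \eqref{SP1} and \eqref{TM1_v0} force the two semantics to coincide for every $n$.

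The step I expect to be the main obstacle is the bookkeeping inside (ii): one must commute the global partial trace past the polarisation block structure, isolate the unique occurrence of $U_x$ within each $M_c$, and verify carefully that the residual trace over $\E_x$ collapses to $\S^{(1)}$ or $T^{(1)}$ according to the three patterns above --- crucially relying on $\E_x$ being initially uncorrelated with the rest, which is what lets the trace factorise. Once (i) and (ii) are in place, single-wire $\approx_1$-equivalence, the conjunction of \eqref{SP1} and \eqref{TM1_v0}, and multi-wire $\tikzfig{neg-xs}$-free equivalence are all equivalent, proving both \cref{caracobsequivnegfree} and \cref{multiwirecontextsnegfree}.
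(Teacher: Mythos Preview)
Your proposal is correct and follows essentially the same route as the paper: both prove the three-way equivalence \{single-wire $\approx_1$\} $\Leftrightarrow$ \{\eqref{SP1} and \eqref{TM1_v0}\} $\Leftrightarrow$ \{multi-wire $\tikzfig{neg-xs}$-free equivalence\}, with necessity via the trivial and loop contexts, and sufficiency via the case analysis on $|w^{D'}_{c,p}|_x,|w^{D'}_{c',p'}|_x\in\{0,1\}$ (using \cref{pasplusdedeuxfois}) that collapses the partial trace over $\E_x$ to $\S^{(1)}$, $T^{(1)}$, or $1$. The only cosmetic difference is that you make the polarisation block-diagonality of $U^{\mathcal G}_{D'}$ explicit, whereas the paper works directly term-by-term on $\ket{c,p}\bra{c',p'}\otimes\rho$ without naming the blocks $M_c$; and the paper does not pass through the deformed normal form $\tikzfig{contextedeformetrou}$, working instead directly with any underlying bare diagram.
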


Namely, what we are going to prove is the following lemma:

\begin{lemma}\label{caracobsequivnegfreestrong}
Given two purified $\H$-channels $[U,\ket{\varepsilon},\E]$ and $[U',\ket{\varepsilon'},\E']$, the following three statements are equivalent:
\begin{enumerate}[(I)]
\item\label{contextequiv1} $[U,\ket{\varepsilon},\E]\approx_1[U',\ket{\varepsilon'},\E']$, that is, for any $\tikzfig{neg-xs}$-free context $C[\cdot]:\H^{(1)}$, $\interp{C[U,\ket{\varepsilon},\E]}=\interp{C[U',\ket{\varepsilon'},\E']}$\bigskip
\item\label{contextequivn} for any $\tikzfig{neg-xs}$-free context $C[\cdot]:\H^{(n)}$, $\interp{C[U,\ket{\varepsilon},\E]}=\interp{C[U',\ket{\varepsilon'},\E']}$\bigskip
\item\label{sameS1andT1} $\S^{(1)}_{[U,\ket{\varepsilon},\E]}=\S^{(1)}_{[U',\ket{\varepsilon'},\E']}$ and $T^{(1)}_{[U,\ket{\varepsilon},\E]}=T^{(1)}_{[U',\ket{\varepsilon'},\E']}$
\end{enumerate}
\end{lemma}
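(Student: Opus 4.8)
The plan is to establish the cycle of implications $(\ref{contextequivn})\Rightarrow(\ref{contextequiv1})\Rightarrow(\ref{sameS1andT1})\Rightarrow(\ref{contextequivn})$, which yields the three-way equivalence. The implication $(\ref{contextequivn})\Rightarrow(\ref{contextequiv1})$ is immediate, since a single-wire context is the special case $n=1$ of an $n$-wire context, so the universal statement $(\ref{contextequivn})$ specialises directly to $(\ref{contextequiv1})$.

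For $(\ref{contextequiv1})\Rightarrow(\ref{sameS1andT1})$ I would exhibit two particular $\tikzfig{neg-xs}$-free single-wire contexts that read off the two invariants. The equality $\S^{(1)}_{[U,\ket{\varepsilon},\E]}=\S^{(1)}_{[U',\ket{\varepsilon'},\E']}$ follows from the trivial context $\tikzfig{QC-Ueps-context}$, whose semantics is by definition the first-level superoperator. For the transformation matrix I would use the PBS-loop context $\tikzfig{boucletraversetrou}$ already discussed in the main text: plugging in $[U,\ket{\varepsilon},\E]$ produces, in the $\ket{\uparrow}\!\bra{\rightarrow}$ polarisation block of the output, the term $\tfrac12\ket{\uparrow}\!\bra{\rightarrow}\otimes T^{(1)}_{[U,\ket{\varepsilon},\E]}\rho$ for a data input $\rho$; reading off this off-diagonal block for all pure $\rho$ forces $T^{(1)}_{[U,\ket{\varepsilon},\E]}=T^{(1)}_{[U',\ket{\varepsilon'},\E']}$. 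Both contexts lie in $\mathcal C_1$, so $(\ref{contextequiv1})$ indeed entails $(\ref{sameS1andT1})$.

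The substance of the lemma is $(\ref{sameS1andT1})\Rightarrow(\ref{contextequivn})$. I would fix an arbitrary $\tikzfig{neg-xs}$-free context $C[\cdot]:\H^{(n)}$, choose an underlying bare diagram $D'$ with hole label $x$ together with a fixed family $\G$ for the remaining gates, and exploit that $D'$ is $\tikzfig{neg-xs}$-free. Two features then hold: the polarisation is never flipped, so the ket and bra polarisation sectors are preserved; and, by Proposition~\ref{pasplusdedeuxfois}, for each polarisation $c$ the hole letter $x$ occurs at most once in the family $\{w^{D'}_{c,p}\}_{p}$, i.e. at most one trajectory per polarisation crosses the hole, and does so exactly once. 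Consequently $V^{\G}_{w^{D'}_{c,p}}$ contains the hole unitary $U\otimes I$ at most once, so I can write it either as $B_{c,p}\,(U\otimes I)\,A_{c,p}$ (hole crossed) or as a product of context unitaries only (hole not crossed), where $A_{c,p}$ and $B_{c,p}$ involve only the fixed environments $\E_{\mathcal G}\!\setminus\!\E$. I would then substitute this into the quantum semantics $\rho\mapsto\mathrm{Tr}_{\E_{\G}}\!\big(U^{\G}_{D'}(\rho\otimes\ket{\varepsilon_{\G}}\!\bra{\varepsilon_{\G}})\,{U^{\G}_{D'}}^{\dagger}\big)$ and perform the partial trace over the hole environment $\E$ first (the two partial traces commute, as $\E$ and $\E_{\mathcal G}\!\setminus\!\E$ are distinct tensor factors), leaving the trace over the remaining environments, which produces the fixed superoperators of the context gates and is independent of the plugged channel. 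Since the hole is crossed at most once on each of the ket and bra sides, exactly four cases arise for each pair of a ket trajectory and a bra trajectory, and the trace over $\E$ collapses to $\S^{(1)}_{[U,\ket{\varepsilon},\E]}$ when both sides cross, to $T^{(1)}_{[U,\ket{\varepsilon},\E]}$ when only the ket crosses, to ${T^{(1)}_{[U,\ket{\varepsilon},\E]}}^{\dagger}$ when only the bra crosses, and to the identity when neither crosses. Hence $\interp{C[U,\ket{\varepsilon},\E]}$ is a fixed expression in $\S^{(1)}_{[U,\ket{\varepsilon},\E]}$, $T^{(1)}_{[U,\ket{\varepsilon},\E]}$, its adjoint, and context-only data; under hypothesis $(\ref{sameS1andT1})$ this expression is unchanged upon replacing $[U,\ket{\varepsilon},\E]$ by $[U',\ket{\varepsilon'},\E']$, giving the desired equality of semantics.

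The main obstacle is the bookkeeping in this last implication: organising the sum over input and output positions so that the single hole environment $\E$ couples the ket and bra sides correctly, and verifying that the three nonidentity cases genuinely reduce to exactly $\S^{(1)}$, $T^{(1)}$ and ${T^{(1)}}^{\dagger}$ with no residual dependence on the chosen purification $(U,\ket{\varepsilon},\E)$ beyond these invariants. The conceptual reason only first-level invariants appear here—in contrast with the general $\mathcal C_2$ case, where the second-level objects of Definition~\ref{def:S2T2} are needed—is precisely that $\tikzfig{neg-xs}$-freeness caps the number of hole crossings per side at one, making the semantics \emph{first order} in the plugged channel.
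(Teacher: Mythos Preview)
Your proposal is correct and follows essentially the same route as the paper: the same cycle of implications, the same two test contexts for $(\ref{contextequiv1})\Rightarrow(\ref{sameS1andT1})$ (the trivial context for $\S^{(1)}$ and the PBS-loop context $\tikzfig{boucletraversetrou}$ for $T^{(1)}$), and the same four-case analysis for $(\ref{sameS1andT1})\Rightarrow(\ref{contextequivn})$ based on the $\tikzfig{neg-xs}$-free part of Proposition~\ref{pasplusdedeuxfois}. The paper carries out the case analysis by evaluating $\interp{C[\cdot]}$ on basis elements $\ket{c,p}\!\bra{c',p'}\otimes\rho$ and tracing out $\E$ via explicit circuit manipulations, which is exactly the bookkeeping you anticipate; your observation that only first-level invariants appear because $\tikzfig{neg-xs}$-freeness caps the hole crossings at one per side is precisely the mechanism at work.
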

\bigskip

\begin{proof}[Proof of \cref{caracobsequivnegfree,multiwirecontextsnegfree}]
\Cref{caracobsequivnegfree} is exactly $\eqref{contextequiv1}\Leftrightarrow\eqref{sameS1andT1}$, while \cref{multiwirecontextsnegfree} is $\eqref{contextequiv1}\Leftrightarrow\eqref{contextequivn}$.
\end{proof}

\begin{proof}[Proof of \cref{caracobsequivnegfreestrong}]
It is clear that $\eqref{contextequivn}\Rightarrow\eqref{contextequiv1}$. Therefore, what one has to prove is that $\eqref{sameS1andT1}\Rightarrow\eqref{contextequivn}$ (that is, the conditions given by \cref{caracobsequivnegfree} are sufficient even with contexts with mutiple input/output wires) and that $\eqref{contextequiv1}\Rightarrow\eqref{sameS1andT1}$ (or equivalently $\neg\eqref{sameS1andT1}\Rightarrow\neg\eqref{contextequiv1}$, that is, these conditions are necessary).

\paragraph*{Proof of strong sufficiency ($\eqref{sameS1andT1}\Rightarrow\eqref{contextequivn}$)}

Let us assume \eqref{sameS1andT1}. Let $C[\cdot]:\H^{(n)}$ be any $\tikzfig{neg-xs}$-free context. Let $\Gamma\vdash D:n$ be an underlying bare diagram of both $C[U,\ket{\varepsilon},\E]$ and $C[U',\ket{\varepsilon'},\E']$. Let $\G=([U_x,\ket{\varepsilon_x},\E_x])_{x\in\Gamma}$ and $\G'=([U'_x,\ket{\varepsilon'_x},\E'_x])_{x\in\Gamma}$ be such that $[U_a,\ket{\varepsilon_a},\E_a]=[U,\ket{\varepsilon},\E]$ and $[U'_a,\ket{\varepsilon'_a},\E'_a]=[U',\ket{\varepsilon'},\E']$ for some $a\in\Gamma$, while $[U_x,\ket{\varepsilon_x},\E_x]=[U'_x,\ket{\varepsilon'_x},\E'_x]$ for all $x\in\Gamma\backslash\{a\}$; and let $\F=([U_x,\ket{\varepsilon_x},\E_x])_{x\in\Gamma\backslash\{a\}}$.

Let $c,c'\in\hv$ and $p,p'\in[n]$. By Proposition \ref{pasplusdedeuxfois} one has $|w^D_{c,p}|_{a}\leq1$ and $|w^D_{c'\!,p'}|_{a}\leq1$, so that there are four cases:
\begin{itemize}
\item If $|w^D_{c,p}|_{a}=|w^D_{c'\!,p'}|_{a}=1$, then one can write $w^D_{c,p}=uav$ and $w^D_{c'\!,p'}=u'av'$ with $u,v,u',v'\in(\Gamma\backslash\{a\})^*$. Then with the shorthand notation $\ket{c,p} = \ket{c}\otimes\ket{p}$, for any $\rho\in\mathcal L(\H)$: 
\[
\interp{C[U,\ket{\varepsilon},\E]}(\ket{c,p}\!\bra{c'\!,p'}\otimes \rho) = \ket{c_{c,p}^D,p_{c,p}^D}\!\bra{c_{c'\!,p'}^D,p_{c'\!,p'}^D} \otimes \textup{Tr}_{\mathcal E_\G} \big( V^\G_{w^D_{c,p}} (\rho\otimes \ket{\varepsilon_\G}\!\bra{\varepsilon_\G}){V^\G_{w^D_{c'\!,p'}}}^{\changelargeur{\hspace{-1.2em}\dagger}{}} \big)
\]
with (using the circuit notations defined in Appendix~\ref{appendix_circuit_notations}, and noting for instance that $V^\G_u = V^\F_u \otimes I_\E$ and that $V^\G_a =U \otimes I_{\E_\F}$)
\begin{align*}
& \textup{Tr}_{\mathcal E_\G} ( V^\G_{w^D_{c,p}} (\rho\otimes \ket{\varepsilon_\G}\!\bra{\varepsilon_\G}){V^\G_{w^D_{c'\!,p'}}}^{\changelargeur{\hspace{-1.2em}\dagger}{}}) = \textup{Tr}_{\E_\F,\E} \big( V^\G_v V^\G_a V^\G_u (\rho\otimes \ket{\varepsilon_\F}\!\bra{\varepsilon_\F}\otimes \ket{\varepsilon}\!\bra{\varepsilon}){V^\G_{u'}}^\dagger {V^\G_a}^\dagger {V^\G_{v'}}^\dagger \big) \\[2mm]
 & \qquad = 
 \tikzfig{S1v4depart} 
 \\[2mm]
 & \qquad = 
 \tikzfig{S1v4arrivee} 
 \\[2mm]
 & \qquad = \textup{Tr}_{\mathcal E_\F} \left( V_v^\F \ \textup{Tr}_{\mathcal E}\big((U\otimes I_{\E_\F})(\sigma_{u,u'}\otimes \ket{\varepsilon}\!\bra{\varepsilon})(U^\dagger\otimes I_{\E_\F} \big) \, {V_{v'}^\F}^\dag \right)
 \\[2mm]
 & \qquad = \textup{Tr}_{\mathcal E_\F} \left( V_v^\F \big(\S^{(1)}_{[U,\ket{\varepsilon},\E]}\otimes{\cal I}_{\E_\F}\big)[\sigma_{u,u'}] {V_{v'}^\F}^\dag \right),
\end{align*}
where ${\cal I}_{\E_\F}$ is the identity map over $\L(\E_\F)$ and $\sigma_{u,u'} = \tikzfig{HECuprimerhou}$. 

Similarly,
\[\interp{C[U',\ket{\varepsilon'},\E']}(\ket{c,p}\!\bra{c'\!,p'}\otimes \rho) = \ket{c_{c,p}^D,p_{c,p}^D}\!\bra{c_{c'\!,p'}^D,p_{c'\!,p'}^D} \otimes\textup{Tr}_{\mathcal E_\F} \left( V_v^\F \big({\cal I}_{\E_\F}\otimes\S^{(1)}_{[U',\ket{\varepsilon'},\E']}\big)[\sigma_{u,u'}] {V_{v'}^\F}^\dag \right).\]
Since $\S^{(1)}_{[U,\ket \varepsilon,\E]}=\S^{(1)}_{[U',\ket{\varepsilon'},\E']}$, this is equal to $\interp{C[U,\ket{\varepsilon},\E]}(\ket{c,p}\!\bra{c'\!,p'}\otimes \rho)$.

\item If $|w^D_{c,p}|_{a}=1$ and $|w^D_{c'\!,p'}|_{a}=0$, then one can write $w^D_{c,p}=uav$ with $u,v\in(\Gamma\backslash\{a\})^*$. Then for any $\rho\in\mathcal L(\H)$:
\[
\interp{C[U,\ket{\varepsilon},\E]}(\ket{c,p}\!\bra{c'\!,p'}\otimes \rho) = \ket{c_{c,p}^D,p_{c,p}^D}\!\bra{c_{c'\!,p'}^D,p_{c'\!,p'}^D} \otimes \textup{Tr}_{\mathcal E_\G} \big( V^\G_{w^D_{c,p}} (\rho\otimes \ket{\varepsilon_\G}\!\bra{\varepsilon_\G}){V^\G_{w^D_{c'\!,p'}}}^{\changelargeur{\hspace{-1.2em}\dagger}{}} \big)
\]
with
\begin{align*}
& \textup{Tr}_{\mathcal E_\G} ( V^\G_{w^D_{c,p}} (\rho\otimes \ket{\varepsilon_\G}\!\bra{\varepsilon_\G}){V^\G_{w^D_{c'\!,p'}}}^{\changelargeur{\hspace{-1.2em}\dagger}{}}) = \textup{Tr}_{\E_\F,\E} \big( V^\G_v V^\G_a V^\G_u (\rho\otimes \ket{\varepsilon_\F}\!\bra{\varepsilon_\F}\otimes \ket{\varepsilon}\!\bra{\varepsilon}){V^\G_{w^D_{c'\!,p'}}}^{\changelargeur{\hspace{-1.2em}\dagger}{}} \big) \\[2mm]
 & \qquad = 
 \tikzfig{TM1v4depart} 
 \\[2mm]
 & \qquad = 
 \tikzfig{TM1v4arrivee} 
 \\[2mm]
 & \qquad = \textup{Tr}_{\mathcal E_\F} \left( V_v^\F \, (I_{\E_\F} \otimes T^{(1)}_{[U,\ket{\varepsilon},\E]}) \, \sigma_{u,c'\!,p'} \right),
 \end{align*}
 where $\sigma_{u,c'\!,p'} = \tikzfig{HECwcprimepprimerhou}$. 

Again, similarly, one has
\[\interp{C[U',\ket{\varepsilon'},\E']}(\ket{c,p}\!\bra{c'\!,p'}\otimes \rho) = \ket{c_{c,p}^D,p_{c,p}^D}\!\bra{c_{c'\!,p'}^D,p_{c'\!,p'}^D} \otimes\textup{Tr}_{\mathcal E_\F} \left( V_v^\F \, (I_{\E_\F} \otimes T^{(1)}_{[U',\ket{\varepsilon'},\E']}) \, \sigma_{u,c'\!,p'} \right).\]
Since $T^{(1)}_{[U,\ket \varepsilon,\E]}=T^{(1)}_{[U',\ket{\varepsilon'},\E']}$, this is equal to $\interp{C[U,\ket{\varepsilon},\E]}(\ket{c,p}\!\bra{c'\!,p'}\otimes \rho)$.

\item The case $|w^D_{c,p}|_{a}=0$ and $|w^D_{c'\!,p'}|_{a}=1$ is similar to the previous case.

\item If $|w^D_{c,p}|_{a}=|w^D_{c'\!,p'}|_{a}=0$, then for any $\rho\in\L(\H)$:
\begin{longtable}{RCL}
\interp{C[U,\ket{\varepsilon},\E]}(\ket{c,p}\!\bra{c'\!,p'}\otimes \rho) &=& \ket{c_{c,p}^D,p_{c,p}^D}\!\bra{c_{c'\!,p'}^D,p_{c'\!,p'}^D} \otimes \textup{Tr}_{\mathcal E_\G} \big( V^\G_{w^D_{c,p}} (\rho\otimes \ket{\varepsilon_\G}\!\bra{\varepsilon_\G}){V^\G_{w^D_{c'\!,p'}}}^{\changelargeur{\hspace{-1.2em}\dagger}{}} \big)\smallskip\\
&=&\ket{c_{c,p}^D,p_{c,p}^D}\!\bra{c_{c'\!,p'}^D,p_{c'\!,p'}^D} \otimes \textup{Tr}_{\mathcal E_\F} \big( V^\F_{w^D_{c,p}} (\rho\otimes \ket{\varepsilon_\F}\!\bra{\varepsilon_\F}){V^\F_{w^D_{c'\!,p'}}}^{\changelargeur{\hspace{-1.2em}\dagger}{}} \big)\smallskip\\
&=&\ket{c_{c,p}^D,p_{c,p}^D}\!\bra{c_{c'\!,p'}^D,p_{c'\!,p'}^D} \otimes \textup{Tr}_{\mathcal E_{\G'}} \big( V^{\G'}_{w^D_{c,p}} (\rho\otimes \ket{\varepsilon_{\G'}}\!\bra{\varepsilon_{\G'}}){V^{\G'}_{w^D_{c'\!,p'}}}^{\changelargeur{\hspace{-1.2em}\dagger}{}} \big)\smallskip\\
&=&\interp{C[U',\ket{\varepsilon'},\E']}(\ket{c,p}\!\bra{c'\!,p'}\otimes \rho).
\end{longtable}
\end{itemize}
We have thus proved that $\interp{C[U,\ket{\varepsilon},\E]}(\ket{c,p}\!\bra{c'\!,p'}\otimes \rho)=\interp{C[U',\ket{\varepsilon'},\E']}(\ket{c,p}\!\bra{c'\!,p'}\otimes \rho)$ for all $c,p,c'\!,p'\text{ and }\rho$, that is, $[U,\ket{\varepsilon},\E]\approx_1[U',\ket{\varepsilon'},\E']$.

\paragraph*{Proof of necessity ($\neg\eqref{sameS1andT1}\Rightarrow\neg\eqref{contextequiv1}$)}

\begin{itemize} 
\item If $\S^{(1)}_{[U,\ket \varepsilon,\E]} \neq \S^{(1)}_{[U',\ket{\varepsilon'},\E']}$, then 
already with the trivial context $\tikzfig{QC-Ueps-context}$ one can distinguish $[U,\ket{\varepsilon},\E]$ and $[U',\ket{\varepsilon'},\E']$. Indeed, one has $\interp{\tikzfig{QC-Ueps}}={\cal I}_{\CC^{\hv}\otimes\CC}\otimes\S^{(1)}_{[U,\ket \varepsilon,\E]}$, whereas $\interp{\tikzfig{QC-Uepsprimes}}={\cal I}_{\CC^{\hv}\otimes\CC}\otimes\S^{(1)}_{[U',\ket{\varepsilon'},\E']}$ (where ${\cal I}_{\CC^{\hv}\otimes\CC}$ is the identity map over $\L(\CC^{\hv}\otimes\CC)$).

\item If $T^{(1)}_{[U,\ket{\varepsilon},\E]} \neq T^{(1)}_{[U',\ket{\varepsilon'},\E']}$, then by considering the following context:
\[C[\cdot]=\tikzfig{boucletraversetrou}\]
one gets in particular
\[\interp{C[U,\ket{\varepsilon},\E]}(\ket{\uparrow,0}\!\bra{\rightarrow,0}\otimes I_{\H})=\ket{\uparrow,0}\!\bra{\rightarrow,0}\otimes\textup{Tr}_\E\big(U(I_{\H}\otimes\ket{\varepsilon}\!\bra{\varepsilon})\big)=\ket{\uparrow,0}\!\bra{\rightarrow,0}\otimes T^{(1)}_{[U,\ket{\varepsilon},\E]}\]
and similarly
\[\interp{C[U',\ket{\varepsilon'},\E']}(\ket{\uparrow,0}\!\bra{\rightarrow,0}\otimes I_{\H})
=\ket{\uparrow,0}\!\bra{\rightarrow,0}\otimes T^{(1)}_{[U',\ket{\varepsilon'},\E']}.\]

Since $T^{(1)}_{[U,\ket{\varepsilon},\E]}\neq T^{(1)}_{[U',\ket{\varepsilon'},\E']}$, this implies that $[U,\ket{\varepsilon},\E]\not\approx_1[U',\ket{\varepsilon'},\E']$.
\end{itemize}

\end{proof}

\subsection{Using general contexts: proof of Theorem~\ref{caracobsequivlevel2} and Remark~\ref{SP2impliesSP1}}
\label{appendix_obs_equiv_general}

\thmcaracobsequivleveltwo*

\remSPtwoimpliesSPone*
\begin{proof}[Proof of \cref{SP2impliesSP1}]~

\begin{longtable}{RRCL}
\eqref{SP2}\ \Leftrightarrow&\tikzfig{SP2_L-ground}&=&\tikzfig{SP2_R-ground}\\\\
\Rightarrow&\tikzfig{SP2_L-doubleground}&=&\tikzfig{SP2_R-doubleground}\\\\
\Leftrightarrow&\tikzfig{SP2_L-groundsousground}&=&\tikzfig{SP2_R-groundsousground}\\\\
\Leftrightarrow&\tikzfig{FU_L}&=&\tikzfig{FU_R} \qquad \Leftrightarrow \qquad \eqref{SP1} \\\\
\end{longtable}

\end{proof}

\subsubsection*{Proof of \cref{caracobsequivlevel2}}\label{sec:preuvecaracobsequivlevel2}

As we did for \cref{caracobsequivnegfree}, we are going to prove \cref{caracobsequivlevel2} at the same time as the fact that allowing multiple input/output wires in the contexts does not increase their power, stated as the following proposition:
\begin{proposition}\label{multiwirecontexts}
Given two purified $\H$-channels $[U,\ket{\varepsilon},\E]$ and $[U',\ket{\varepsilon'},\E']$, one has $[U,\ket{\varepsilon},\E]\approx_2[U',\ket{\varepsilon'},\E']$ (that is, for any context $C[\cdot]:\H^{(1)}$, $\interp{C[U,\ket{\varepsilon},\E]}=\interp{C[U',\ket{\varepsilon'},\E']}$) if and only if for any context $C[\cdot]:\H^{(n)}$, $\interp{C[U,\ket{\varepsilon},\E]}=\interp{C[U',\ket{\varepsilon'},\E']}$.
\end{proposition}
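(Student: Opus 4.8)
The plan is to mirror exactly the structure used for the negation-free case in \cref{caracobsequivnegfreestrong}. I would state and prove the equivalence of three conditions: (I) the single-wire $\approx_2$-equivalence; (II) its multi-wire analogue, for all contexts $C[\cdot]:\H^{(n)}$; and (III) the conjunction of \eqref{TM1}, \eqref{SP2} and \eqref{TM2}. The implication $(II)\Rightarrow(I)$ is immediate, since a single-wire context is a particular multi-wire one, so that \cref{caracobsequivlevel2} is $(I)\Leftrightarrow(III)$ and \cref{multiwirecontexts} is $(I)\Leftrightarrow(II)$; both then follow once one establishes strong sufficiency $(III)\Rightarrow(II)$ and necessity $(I)\Rightarrow(III)$.

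For strong sufficiency I would fix an arbitrary context $C[\cdot]:\H^{(n)}$, choose an underlying bare diagram $\Gamma\vdash D:n$ with hole labelled $a$, and evaluate $\interp{C[U,\ket\varepsilon,\E]}$ and $\interp{C[U',\ket{\varepsilon'},\E']}$ on basis operators $\ket{c,p}\!\bra{c',p'}\otimes\rho$ with $\rho\in\L(\H)$. As in the negation-free proof, the output factorises as $\ket{c^D_{c,p},p^D_{c,p}}\!\bra{c^D_{c',p'},p^D_{c',p'}}$ tensored with an environment-traced term that depends on $U$ (or $U'$) only through the words $w^D_{c,p}$ and $w^D_{c',p'}$. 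By \cref{pasplusdedeuxfois} one has $\sum_{c,p}|w^D_{c,p}|_a\le2$, so that, up to exchanging the two indices (which conjugate-transposes the term), the pair $(|w^D_{c,p}|_a,|w^D_{c',p'}|_a)$ can only be $(0,0)$, $(1,0)$, $(1,1)$, $(2,0)$ or $(2,2)$. The case $(0,0)$ does not involve the hole and is equal for both channels because all other gates coincide; the cases $(1,0)$ and $(1,1)$ are literally those of the negation-free proof, reducing to $T^{(1)}$ and $\S^{(1)}$ respectively (and $\S^{(1)}$ is controlled by \eqref{SP2} via \cref{SP2impliesSP1}). The genuinely new cases are $(2,0)$ and $(2,2)$, which can occur only when the particle crosses the hole twice with an intervening polarisation flip — hence the $\tikzfig{neg-xs}$.

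The main obstacle is precisely these double-crossing cases. For them I would write $w^D_{c,p}=u\,a\,\tilde w\,a\,t$ with $u,\tilde w,t\in(\Gamma\setminus\{a\})^*$, and observe that the intermediate word $\tilde w$ acts on the data register and on the other gates' environments $\E_\F$ but leaves the hole's environment $\E$ untouched (which is un-reinitialised between the two applications of $U$). The crux is to reorganise the resulting circuit — two copies of $U$ sharing the single environment $\E$, separated by $V^\F_{\tilde w}$ — so that its dependence on $U$ is packaged into $U^{(2)}=(I_\H\otimes U)(\mathfrak S\otimes I_\E)(I_\H\otimes U)$; the swap $\mathfrak S$ in Definition~\ref{def:S2T2} is exactly what threads the two data slots past the intermediate operations. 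After this reshuffling, the diagonal $(2,2)$ contribution reduces to $\S^{(2)}$ and the $(2,0)$ contribution to $T^{(2)}$, by circuit surgery analogous to the key steps of the negation-free sufficiency proof. Since \eqref{TM1}, \eqref{SP2}, \eqref{TM2} equate $T^{(1)}$, $\S^{(2)}$, $T^{(2)}$ (and hence $\S^{(1)}$) for the two channels, every case matches and (II) follows.

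For necessity I would exhibit, for each failing condition, an explicit single-wire context. If \eqref{TM1} fails, the loop context $\tikzfig{boucletraversetrou}$ already used in the negation-free proof separates the channels. If \eqref{SP2} or \eqref{TM2} fails, I would use the contexts sketched after the theorem statement, namely $\tikzfig{contexteV0V1V0V1variante}$ and $\tikzfig{boucletraversetrouV,1trou_shorter}$, instantiated with suitable auxiliary purified channels $[V_0,\ket{\eta_0},\H\otimes\CC^2]$, $[V_1,\ket{\eta_1},\H\otimes\CC^2]$ and $[V,1,\CC]$ whose role is to route the particle through the hole twice while encoding the second-level superoperator (resp.\ transformation matrix) into an accessible output. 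The last technical point to check is that these auxiliaries genuinely isolate $\S^{(2)}$ and $T^{(2)}$ — i.e.\ that a mismatch in \eqref{SP2} or \eqref{TM2} survives the final trace over all environments — which completes $\neg\eqref{TM1}\vee\neg\eqref{SP2}\vee\neg\eqref{TM2}\Rightarrow\neg\eqref{contextequiv1}$ and thus the whole equivalence.
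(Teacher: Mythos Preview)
Your proposal is correct and follows essentially the same approach as the paper: the three-way equivalence lemma, the case split on $(|w^D_{c,p}|_a,|w^D_{c',p'}|_a)$ via \cref{pasplusdedeuxfois}, the reorganisation of the double-crossing terms into $U^{(2)}$ so that the $(2,0)$ and $(2,2)$ cases reduce to $T^{(2)}$ and $\S^{(2)}$, and the explicit single-wire contexts for necessity all match the paper's argument. The only substantive work you defer to ``the last technical point to check'' is the choice of auxiliaries $V_0,V_1$ (resp.\ $V$) that actually witness a difference in $\S^{(2)}$ (resp.\ $T^{(2)}$); in the paper this requires picking $W_0$ to prepare a separating $\ket\varphi$ and arguing that matrices of the form $W_1^\dagger(\ket0\!\bra0\otimes I_\H)W_1$ span $\L(\H^{\otimes2})$, but your outline correctly isolates this as the remaining step.
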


Namely, what we are going to prove is the following lemma:
\begin{lemma}\label{caracobsequivlevel2strong}
Given two purified $\H$-channels $[U,\ket{\varepsilon},\E]$ and $[U',\ket{\varepsilon'},\E']$, the following three statements are equivalent:
\begin{enumerate}[(I)]
\item\label{contextequiv1level2} $[U,\ket{\varepsilon},\E]\approx_2[U',\ket{\varepsilon'},\E']$, that is, for any context $C[\cdot]:\H^{(1)}$, $\interp{C[U,\ket{\varepsilon},\E]}=\interp{C[U',\ket{\varepsilon'},\E']}$\bigskip
\item\label{contextequivnlevel2} for any context $C[\cdot]:\H^{(n)}$, $\interp{C[U,\ket{\varepsilon},\E]}=\interp{C[U',\ket{\varepsilon'},\E']}$\bigskip
\item\label{sameT1S2T2} $T^{(1)}_{[U,\ket{\varepsilon},\E]}=T^{(1)}_{[U',\ket{\varepsilon'},\E']}$, $\S^{(2)}_{[U,\ket{\varepsilon},\E]}=\S^{(2)}_{[U',\ket{\varepsilon'},\E']}$ and $T^{(2)}_{[U,\ket{\varepsilon},\E]}=T^{(2)}_{[U',\ket{\varepsilon'},\E']}$
\end{enumerate}
\end{lemma}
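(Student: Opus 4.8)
The plan is to follow verbatim the three-way structure used for \cref{caracobsequivnegfreestrong}: the implication $\eqref{contextequivnlevel2}\Rightarrow\eqref{contextequiv1level2}$ is immediate, since a context of type $\H^{(1)}$ is in particular a context of type $\H^{(n)}$. It therefore remains to establish \emph{strong sufficiency} $\eqref{sameT1S2T2}\Rightarrow\eqref{contextequivnlevel2}$ (the criteria are sufficient even with multiple input/output wires) and \emph{necessity}, which I would prove in its contrapositive form $\neg\eqref{sameT1S2T2}\Rightarrow\neg\eqref{contextequiv1level2}$. The only genuinely new feature compared to the $\tikzfig{neg-xs}$-free case is that, once negations are allowed, a path may now cross the hole \emph{twice}, which is exactly why the second-level objects $\S^{(2)}$ and $T^{(2)}$ enter.

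For strong sufficiency I would fix a context $C[\cdot]:\H^{(n)}$, pick an underlying bare diagram $\Gamma\vdash D:n$ with hole label $a$, and introduce families $\G$, $\G'$, $\F$ exactly as in the negation-free proof. I would then compute $\interp{C[U,\ket\varepsilon,\E]}(\ket{c,p}\!\bra{c'\!,p'}\otimes\rho)$ and split into cases according to the pair of occurrence numbers $(|w^D_{c,p}|_a,|w^D_{c'\!,p'}|_a)$. The key input is \cref{pasplusdedeuxfois}: each count now lies in $\{0,1,2\}$, but the family total is at most $2$, so for distinct $(c,p)\neq(c'\!,p')$ the two counts sum to at most $2$, while on the diagonal they are equal. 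The admissible pairs are thus $(0,0)$, $(1,0)/(0,1)$, $(1,1)$, $(2,0)/(0,2)$ and $(2,2)$, handled as follows:
\begin{itemize}
\item $(0,0)$: the evolution factors through $\F$ only, which is common to both channels;
\item $(1,0)/(0,1)$: a single $U$ acts on one side while the environment stays in $\ket\varepsilon$ on the other, so tracing out $\E$ produces $T^{(1)}$, and the two channels agree by \eqref{TM1};
\item $(1,1)$: one $U$ on the ket side and one $U^\dagger$ on the bra side yield $\S^{(1)}$, which coincides by \cref{SP2impliesSP1} (so \eqref{SP2} subsumes \eqref{SP1});
\item $(2,0)/(0,2)$: two applications of $U$ on one side, sandwiching a common in-between $\F$-operation $V_v^\F$, with the environment untouched on the other side, yield $T^{(2)}$, and equality follows from \eqref{TM2};
\item $(2,2)$ (diagonal only): two $U$'s and two $U^\dagger$'s, again sandwiching $V_v^\F$ on each side, yield $\S^{(2)}$, and equality follows from \eqref{SP2}.
\end{itemize}
The circuit manipulations in the double-crossing cases would mirror the \texttt{tikzfig} rewriting steps of the negation-free proof, the decisive point being that the composite $(U\otimes I_{\E_\F})\,V_v^\F\,(U\otimes I_{\E_\F})$ — two channel crossings enclosing the common map $V_v^\F$ — can be re-expressed, using precisely the swap $\mathfrak S$ in the definition $U^{(2)} = (I_\H\otimes U)(\mathfrak S\otimes I_\E)(I_\H\otimes U)$, as a single application of $U^{(2)}$ into whose second copy of $\H$ the in-between operation $V_v^\F$ is plugged. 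This is what packages the pair of crossings into $\S^{(2)}$ (resp.\ $T^{(2)}$) and lets one conclude from the hypotheses.

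For necessity I would exhibit a distinguishing context whenever one of the three equalities fails, using the contexts already displayed in \cref{subsec:obs_equiv_general}. If $T^{(1)}$ differs, the single-loop context $\tikzfig{boucletraversetrou}$ distinguishes the channels exactly as in the $\tikzfig{neg-xs}$-free proof. If $\S^{(2)}$ or $T^{(2)}$ differs, I would use contexts of the form $\tikzfig{contexteV0V1V0V1variante}$ and $\tikzfig{boucletraversetrouV,1trou_shorter}$ respectively, with auxiliary purified channels $[V_0,\ket{\eta_0},\H\otimes\CC^2]$, $[V_1,\ket{\eta_1},\H\otimes\CC^2]$ and $[V,1,\CC]$ chosen so that the flying particle is routed twice through the hole (on one side of the density matrix, with a controlled number of crossings on the other), and so that an arbitrary matrix element of the differing $\S^{(2)}$ (resp.\ $T^{(2)}$) is read off at the output. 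The main obstacle is the double-crossing analysis: getting the swap bookkeeping of $U^{(2)}$ exactly right so that the common in-between map is cleanly separated and the two $U$-applications package into the second-level operators, and — on the necessity side — verifying that the auxiliary channels $V_0,V_1,V$ indeed let the context probe the \emph{full} second-level superoperator and transformation matrix rather than some restriction of them.
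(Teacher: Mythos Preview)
Your proposal is correct and follows essentially the same approach as the paper: the same three-way structure, the same case analysis driven by \cref{pasplusdedeuxfois} (the paper groups the sub-cases with both counts $\le 1$ together and refers back to the negation-free proof, but this is cosmetic), the same key swap identity to package the double crossing into $U^{(2)}$, and the same distinguishing contexts for necessity. The ``obstacle'' you flag on the necessity side is exactly where the paper does a bit of work: for $\S^{(2)}$ it uses a spanning argument (matrices of the form $W_1^\dagger(\ket 0\!\bra 0\otimes I_\H)W_1$ span $\L(\H^{\otimes 2})$) to pick $W_0,W_1$, and for $T^{(2)}$ it first reformulates the equality via a small lemma (equality of $T^{(2)}$ iff equality of $(I_\H\otimes\bra\varepsilon)U(V\otimes I_\E)U(I_\H\otimes\ket\varepsilon)$ for all $V$) and then uses that unitaries span $\L(\H)$.
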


\begin{proof}[Proof of \cref{caracobsequivlevel2,multiwirecontexts}]
\Cref{caracobsequivlevel2} is exactly $\eqref{contextequiv1level2}\Leftrightarrow\eqref{sameT1S2T2}$, while \cref{multiwirecontexts} is $\eqref{contextequiv1level2}\Leftrightarrow\eqref{contextequivnlevel2}$.
\end{proof}

\begin{proof}[Proof of \cref{caracobsequivlevel2strong}]
The structure of the proof is the same as for \cref{caracobsequivnegfree}. It is clear that $\eqref{contextequivnlevel2}\Rightarrow\eqref{contextequiv1level2}$. Therefore, what one has to prove is that $\eqref{sameT1S2T2}\Rightarrow\eqref{contextequivnlevel2}$ (that is, Conditions \eqref{TM1}, \eqref{SP2} and \eqref{TM2} are sufficient even with contexts with mutiple input/output wires) and that $\eqref{contextequiv1level2}\Rightarrow\eqref{sameT1S2T2}$ (or equivalently $\neg\eqref{sameT1S2T2}\Rightarrow\neg\eqref{contextequiv1level2}$, that is, the three conditions are necessary).

\paragraph*{Proof of strong sufficiency ($\eqref{sameT1S2T2}\Rightarrow\eqref{contextequivnlevel2}$)}

Let us assume \eqref{sameT1S2T2}. Let $C[\cdot]:\H^{(n)}$ be any context. Let $\Gamma\vdash D:n$ be an underlying bare diagram of both $C[U,\ket{\varepsilon},\E]$ and $C[U',\ket{\varepsilon'},\E']$. Let $\G=([U_x,\ket{\varepsilon_x},\E_x])_{x\in\Gamma}$ and $\G'=([U'_x,\ket{\varepsilon'_x},\E'_x])_{x\in\Gamma}$ be such that $[U_a,\ket{\varepsilon_a},\E_a]=[U,\ket{\varepsilon},\E]$ and $[U'_a,\ket{\varepsilon'_a},\E'_a]=[U',\ket{\varepsilon'},\E']$ for some $a\in\Gamma$, while $[U_x,\ket{\varepsilon_x},\E_x]=[U'_x,\ket{\varepsilon'_x},\E'_x]$ for all $x\in\Gamma\backslash\{a\}$; and let $\F=([U_x,\ket{\varepsilon_x},\E_x])_{x\in\Gamma\backslash\{a\}}$.

Let $c,c'\in\hv$ and $p,p'\in[n]$. 
By Proposition \ref{pasplusdedeuxfois}, the possible cases are the following:
\begin{itemize}
\item $|w^D_{c,p}|_{a}\leq 1$ and $|w^D_{c'\!,p'}|_{a}\leq 1$
\item $(c,p)\neq (c'\!,p')$, $|w^D_{c,p}|_{a}=2$ and $|w^D_{c'\!,p'}|_{a}=0$
\item $(c,p)\neq (c'\!,p')$, $|w^D_{c,p}|_{a}=0$ and $|w^D_{c'\!,p'}|_{a}=2$
\item $(c,p)=(c'\!,p')$ and $|w^D_{c,p}|_{a}=2$.
\end{itemize}
The first case can be treated exactly in the same way as in the proof of \cref{caracobsequivnegfreestrong}. To address the other three cases, one can first note that because of the strict symmetric monoidal structure of circuits, for any $V\in\L(\H\otimes\E_\F)$, \[\tikzfig{HEEFUVU}\ =\ \tikzfig{HEEFUUVtrace}.\]
\begin{itemize}
\item If $(c,p)\neq (c'\!,p')$, $|w^D_{c,p}|_{a}=2$ and $|w^D_{c'\!,p'}|_{a}=0$, then one can write $w^D_{c,p}=uavat$ with $u,v,t\in(\Gamma\backslash\{a\})^*$. Then for any $\rho\in\mathcal L(\H)$:
\[
\interp{C[U,\ket{\varepsilon},\E]}(\ket{c,p}\!\bra{c'\!,p'}\otimes \rho) = \ket{c_{c,p}^D,p_{c,p}^D}\!\bra{c_{c'\!,p'}^D,p_{c'\!,p'}^D} \otimes \textup{Tr}_{\mathcal E_\G} \big( V^\G_{w^D_{c,p}} (\rho\otimes \ket{\varepsilon_\G}\!\bra{\varepsilon_\G}){V^\G_{w^D_{c'\!,p'}}}^{\changelargeur{\hspace{-1.2em}\dagger}{}} \big)
\]
with
\begin{longtable}{L}
\textup{Tr}_{\mathcal E_\G} ( V^\G_{w^D_{c,p}} (\rho\otimes \ket{\varepsilon_\G}\!\bra{\varepsilon_\G}){V^\G_{w^D_{c'\!,p'}}}^{\changelargeur{\hspace{-1.2em}\dagger}{}}) = \textup{Tr}_{\E_\F,\E} \big( V^\G_t V^\G_a V^\G_v V^\G_a V^\G_u (\rho\otimes \ket{\varepsilon_\F}\!\bra{\varepsilon_\F}\otimes \ket{\varepsilon}\!\bra{\varepsilon}){V^\G_{w^D_{c'\!,p'}}}^{\changelargeur{\hspace{-1.2em}\dagger}{}} \big) 
\\\\
= \tikzfig{TM2v4depart} 
\\\\
= \tikzfig{TM2v4milieu} 
\\\\
= \tikzfig{TM2v4arrivee} 
\\\\
= \textup{Tr}_{\mathcal E_\F,\H} \left( \sigma_{v,t} \, (T^{(2)}_{[U,\ket{\varepsilon},\E]} \otimes I_{\E_\F}) \, \sigma_{u,c'\!,p'} \right), 
\end{longtable}
 where $\sigma_{v,t}=\tikzfig{HHEFvt}$\bigskip\\ and $\sigma_{u,c'\!,p'}=\tikzfig{HEFHwcprimepprimerhou}$ (and $\textup{Tr}_{\E_\F,\E}\coloneqq\textup{Tr}_{\E_\F}\circ\textup{Tr}_{\E}$, $\textup{Tr}_{\E_\F,\H}\coloneqq\textup{Tr}_{\E_\F}\circ\textup{Tr}_{\H}$; by convention we always take the partial trace over the last factor of the tensor product in both the input and output spaces, so that there is no ambiguity about which copy of $\H$ is traced out in the last formula).\bigskip
 
 Similarly,
 \[
\interp{C[U',\ket{\varepsilon'},\E']}(\ket{c,p}\!\bra{c'\!,p'}\otimes \rho) = \ket{c_{c,p}^D,p_{c,p}^D}\!\bra{c_{c'\!,p'}^D,p_{c'\!,p'}^D} \otimes \textup{Tr}_{\mathcal E_\F,\H} \left( \sigma_{v,t} \, (T^{(2)}_{[U',\ket{\varepsilon'},\E']} \otimes I_{\E_\F}) \, \sigma_{u,c'\!,p'} \right).
\]
Since $T^{(2)}_{[U,\ket{\varepsilon},\E]}=T^{(2)}_{[U',\ket{\varepsilon'},\E']}$, this is equal to $\interp{C[U,\ket{\varepsilon},\E]}(\ket{c,p}\!\bra{c'\!,p'}\otimes \rho)$.
\item The case $(c,p)\neq (c'\!,p')$, $|w^D_{c,p}|_{a}=0$ and $|w^D_{c'\!,p'}|_{a}=2$ is similar to the previous case.
\item If $(c,p)=(c'\!,p')$ and $|w^D_{c,p}|_{a}=2$, then one can again write $w^D_{c,p}(=w^D_{c',p'})=uavat$ with $u,v,t\in(\Gamma\backslash\{a\})^*$. Then for any $\rho\in\mathcal L(\H)$:
\[
\interp{C[U,\ket{\varepsilon},\E]}(\ket{c,p}\!\bra{c,p}\otimes \rho) = \ket{c_{c,p}^D,p_{c,p}^D}\!\bra{c_{c,p}^D,p_{c,p}^D} \otimes \textup{Tr}_{\mathcal E_\G} \big( V^\G_{w^D_{c,p}} (\rho\otimes \ket{\varepsilon_\G}\!\bra{\varepsilon_\G}){V^\G_{w^D_{c,p}}}^{\changelargeur{\hspace{-1.2em}\negphantom{w^D_{c,p}}\phantom{w^D_{c'\!,p'}}\dagger}{}} \big)
\]
with
\begin{longtable}{L}
\textup{Tr}_{\mathcal E_\G} ( V^\G_{w^D_{c,p}} (\rho\otimes \ket{\varepsilon_\G}\!\bra{\varepsilon_\G}){V^\G_{w^D_{c,p}}}^{\changelargeur{\hspace{-1.2em}\negphantom{w^D_{c,p}}\phantom{w^D_{c'\!,p'}}\dagger}{}}) = \textup{Tr}_{\E_\F,\E} \big( V^\G_t V^\G_a V^\G_v V^\G_a V^\G_u (\rho\otimes \ket{\varepsilon_\F}\!\bra{\varepsilon_\F}\otimes \ket{\varepsilon}\!\bra{\varepsilon}) {V^\G_{u}}^\dagger {V^\G_a}^\dagger {V^\G_{v}}^\dagger {V^\G_a}^\dagger {V^\G_{t}}^\dagger \big) \\\\
= \tikzfig{S2v4depart}
 \\\\
= \tikzfig{S2v4milieusigma}
 \\\\
= \tikzfig{S2v4arriveesigma}
 \\\\
\qquad = \textup{Tr}_{\mathcal E_\F,\H,\H} \!\Big(\! (\sigma_{v,t}\!\otimes\! I_\H) \textup{Tr}_\E\! \big[ (U^{(2)}\!\!\otimes\! I_{\E_\F\otimes\H}) (\sigma'_{u,u}\!\otimes\!\ket{\varepsilon}\!\bra{\varepsilon}) (U^{(2)}\!\!\otimes\! I_{\E_\F\otimes\H})^\dagger \big] (\sigma_{v,t}\!\otimes\! I_\H)^\dag(I_{\H\otimes\E_\F}\!\otimes\!\mathfrak S
)\! \Big) \\\\
\qquad = \textup{Tr}_{\mathcal E_\F,\H,\H} \left( (\sigma_{v,t}\otimes I_\H) \, (\S^{(2)}_{[U,\ket{\varepsilon},\E]} \otimes 
{\cal I}_{\E_\F\otimes\H})[\sigma'_{u,u}] \, (\sigma_{v,t}\otimes I_\H)^\dag(I_{\H\otimes\E_\F}\otimes\mathfrak S) \right),
\end{longtable}
 where ${\cal I}_{\E_\F\otimes\H}$ is the identity map over $\L(\E_\F\otimes\H)$, ${\mathfrak S}=\ket{\psi_1}\otimes \ket{\psi_2}\mapsto\ket{\psi_2}\otimes \ket{\psi_1}$ is the swap operator (here acting on $\H\otimes\H$), $\sigma_{u,u}=\tikzfig{HECurhou}$, $\sigma'_{u,u}=\tikzfig{HHEFHsigmauu}$, $\sigma_{v,t}=\tikzfig{HHEFvt}$, and $U^{(2)} = \tikzfig{U2_larger}$ as in Definition~\ref{def:S2T2}.

Again, similarly,
\begin{align*}
&\interp{C[U',\ket{\varepsilon'},\E']}(\ket{c,p}\!\bra{c'\!,p'}\otimes \rho) = \ket{c_{c,p}^D,p_{c,p}^D}\!\bra{c_{c'\!,p'}^D,p_{c'\!,p'}^D} \otimes \\[2mm]
& \textup{Tr}_{\mathcal E_\F,\H,\H} \left( (\sigma_{v,t}\otimes I_\H) \, (\S^{(2)}_{[U',\ket{\varepsilon'},\E']} \otimes {\cal I}_{\E_\F\otimes\H})[\sigma'_{u,u}] \, (\sigma_{v,t}\otimes I_\H)^\dag(I_{\H\otimes\E_\F}\otimes\mathfrak S_{\H,\H}) \right).
\end{align*}
Since $\S^{(2)}_{[U,\ket \varepsilon,\E]}=\S^{(2)}_{[U',\ket{\varepsilon'},\E']}$, this is equal to $\interp{C[U,\ket{\varepsilon},\E]}(\ket{c,p}\!\bra{c'\!,p'}\otimes \rho)$.

\end{itemize}

\paragraph*{Proof of necessity ($\neg\eqref{sameT1S2T2}\Rightarrow\neg\eqref{contextequiv1level2}$)}

\begin{itemize}

\item If $T^{(1)}_{[U,\ket{\varepsilon},\E]}\neq T^{(1)}_{[U',\ket{\varepsilon}',\E']}$, then by \cref{caracobsequivnegfree}, $[U,\ket{\varepsilon},\E]$ and $[U',\ket{\varepsilon}',\E']$ can be distinguished using a $\tikzfig{neg-xs}$-free context $C[\cdot]:\H^{(1)}$, so in particular, $[U,\ket{\varepsilon},\E]\not\approx_2[U',\ket{\varepsilon}',\E']$.

\item If $\S^{(2)}_{[U,\ket{\varepsilon},\E]}\neq \S^{(2)}_{[U',\ket{\varepsilon'},\E']}$, then one can distinguish $[U,\ket{\varepsilon},\E]$ and $[U',\ket{\varepsilon}',\E']$ as follows.

By assumption, there exists $\ket \varphi\in\H\otimes\H$ s.t. $\rho\neq \rho'$, where $\rho,\rho'\in\Lin(\H\otimes\H)$ are defined as follows: 
\[\rho=\ \tikzfig{SP2_L-ground-phi}\qquad\qquad\rho'=\ \tikzfig{SP2_R-ground-phi}\]
Let then $W_0$ be a unitary in $\L(\H^{\otimes 2})$ such that $W_0\ket{00} = \ket\varphi$. 

Note, on the other hand, that matrices of the form $W_1^\dagger(\ket{0}\!\bra{0}\otimes I_\H)W_1$, for all unitaries $W_1\in\L(\H^{\otimes 2})$, span the whole space $\L(\H^{\otimes 2})$.%
\footnote{This can be seen for instance explicitly by noting that any $\ket{\phi}\!\bra{\phi} = V_\phi \ket{00}\!\bra{00}V_\phi^\dagger$ (which themselves span the whole space, for some unitaries $V_\phi$) can be decomposed onto vectors of the form $W_1^\dagger(\ket{0}\!\bra{0}\otimes I_\H)W_1$, as $V_\phi \ket{00}\!\bra{00}V_\phi^\dagger = V_\phi [ (\ket{0}\!\bra{0}\otimes I_\H) - \frac{1}{d}\sum_{i=0}^{d-1} V_i (\ket{0}\!\bra{0}\otimes I_\H) V_i^\dagger + \frac{1}{d} \mathfrak{S} (\ket{0}\!\bra{0}\otimes I_\H) \mathfrak{S}^\dagger ] V_\phi^\dagger$, where $d$ is the dimension of $\H$, with the unitaries $V_i = I_{\H^{\otimes 2}} - (\ket{0}-\ket{i})(\bra{0}-\bra{i})\otimes \ket{0}\!\bra{0}$ (such that $V_i (\ket{0}\!\bra{0}\otimes I_\H) V_i^\dagger = \ket{0}\!\bra{0}\otimes I_\H + (\ket{i}\!\bra{i}-\ket{0}\!\bra{0})\otimes\ket{0}\!\bra{0}$) and the swap operator $\mathfrak{S}$.}
It follows that for $\rho \neq \rho'$ defined above, there exists a unitary $W_1$ such that $\tikzfig{rhoaccoladeW1terre} \neq \tikzfig{rhoprimeaccoladeW1terre}$.%
\footnote{Indeed: assume, by contradiction, that $\tikzfig{rhoaccoladeW1terre} = \tikzfig{rhoprimeaccoladeW1terre}$ for all unitaries $W_1$. Then in particular (by projecting the output wire onto $\ket{0}$) one has $\textup{Tr}[\rho W_1^\dagger(\ket{0}\!\bra{0}\otimes I_\H)W_1] = \textup{Tr}[\rho' W_1^\dagger(\ket{0}\!\bra{0}\otimes I_\H)W_1]$ for all $W_1$. Given, as just noted, that the matrices $W_1^\dagger(\ket{0}\!\bra{0}\otimes I_\H)W_1$ span the whole space $\L(\H^{\otimes 2})$, one concludes that $\rho=\rho'$---in contradiction with the fact that $\rho\neq\rho'$.}

\bigskip

In order to distinguish the two purified channels $[U,\ket\varepsilon,\E]$ and $[U',\ket{\varepsilon'},\E']$, we then consider the following context:
\[C[\cdot]=\tikzfig{contexteV0V1V0V1variante}\]
\noindent where $V_0 = \tikzfig{contrW0puisswapsurX-white}$, $V_1 =\tikzfig{swapsurXpuiscontrW1-white}$, with $W_0, W_1$ just introduced and $\ket{\eta_0}=\ket{\eta_1} = \ket0\otimes \ket0\in\H\otimes\CC^2$.\footnote{Where, given $W\in\Lin(\mathcal K)$, the controlled linear operation $\tikzfig{contrW-white}$ is defined as $W\otimes\ket0\!\bra0+I_{\mathcal K}\otimes\ket1\bra1$, and where $X=\begin{psmallmatrix}0&1\\[1mm]1&0\end{psmallmatrix}$.} \bigskip

One then has

$$\interp{C[U,\ket{\varepsilon},\E]}(\ket{\uparrow}\!\bra{\uparrow}\otimes \ket {0}\!\bra{0}) = \tikzfig{interp2}$$

\[=\ \tikzfig{interp2expanded-white_W0W1}\]

\[=\ \tikzfig{interp2expanded2_W0W1}\]

\[=\ \tikzfig{interp2expanded3_W0W1}\]

\[=\ \tikzfig{rhoaccoladeW1terre} \qquad \neq \qquad \tikzfig{rhoprimeaccoladeW1terre} \quad = \quad \interp{C[U',\ket{\varepsilon'},\E']}(\ket{\uparrow}\!\bra{\uparrow}\otimes \ket {0}\!\bra{0}).\]

Hence $\interp{C[U,\ket{\varepsilon},\E]} \neq \interp{C[U',\ket{\varepsilon'},\E']}$, and therefore $[U,\ket{\varepsilon},\E]\not\approx_2[U',\ket{\varepsilon}',\E']$. 

\item If $T^{(2)}_{[U,\ket{\varepsilon},\E]}\neq T^{(2)}_{[U',\ket{\varepsilon}',\E']}$, then let us first introduce the following lemma:
\begin{lemma}\label{lemma:T2v2}
Given two purified channels $[U,\ket{\varepsilon},\E]$ and $[U',\ket{\varepsilon}',\E']$, $T^{(2)}_{[U,\ket{\varepsilon},\E]}= T^{(2)}_{[U',\ket{\varepsilon}',\E']}$ if and only if for any $V\in\L(\H)$,
\[\tikzfig{TM2_L-V}\ =\ \tikzfig{TM2_R-V}.\]
\end{lemma}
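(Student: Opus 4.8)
The plan is to compute both sides of the claimed equation in a fixed orthonormal basis $\{\ket k\}_k$ of $\E$ and then recognise the resulting identity as an instance of the standard process--state (reshuffling) duality. First I would introduce the two families of operators $K_k := (I_\H\otimes\bra k)\,U\,(I_\H\otimes\ket\varepsilon)\in\L(\H)$ and $\tilde K_k := (I_\H\otimes\bra\varepsilon)\,U\,(I_\H\otimes\ket k)\in\L(\H)$ (the $K_k$ are Kraus operators of $\S^{(1)}_{[U,\ket\varepsilon,\E]}$, and the $\tilde K_k$ the analogous ``right'' transfer operators). Expanding $U(I_\H\otimes\ket\varepsilon)=\sum_k K_k\otimes\ket k$ and $(I_\H\otimes\bra\varepsilon)U=\sum_k \tilde K_k\otimes\bra k$, a direct calculation shows that the left-hand picture $\tikzfig{TM2_L-V}$ unfolds to $\sum_k \tilde K_k\, V\, K_k$, while pushing the swap $\mathfrak S$ of Definition~\ref{def:S2T2} through the second copy of $U$ gives $T^{(2)}_{[U,\ket\varepsilon,\E]}=\big(\sum_k K_k\otimes\tilde K_k\big)\,\mathfrak S$.

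Writing $\Omega_{[U,\ket\varepsilon,\E]}:=\sum_k K_k\otimes\tilde K_k = T^{(2)}_{[U,\ket\varepsilon,\E]}\,\mathfrak S\in\L(\H^{\otimes 2})$ and $\Phi_{[U,\ket\varepsilon,\E]}\colon V\mapsto\sum_k\tilde K_k V K_k$, the lemma becomes the statement that $\Phi_{[U,\ket\varepsilon,\E]}=\Phi_{[U',\ket{\varepsilon'},\E']}$ iff $\Omega_{[U,\ket\varepsilon,\E]}=\Omega_{[U',\ket{\varepsilon'},\E']}$ (the equivalence with $T^{(2)}=T^{(2)}{}'$ being immediate since $\mathfrak S$ is invertible, $\mathfrak S^2=I_{\H^{\otimes 2}}$). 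The key step is that $\Omega\mapsto\Phi_\Omega$ is a linear bijection between $\L(\H^{\otimes 2})$ and the space of linear maps $\L(\H)\to\L(\H)$; both spaces have dimension $(\dim\H)^4$, so it suffices to check injectivity. Testing on matrix units $V=\ket m\bra n$, one reads off $\big[\Phi_{[U,\ket\varepsilon,\E]}(\ket m\bra n)\big]_{pq}=\sum_k (\tilde K_k)_{pm}(K_k)_{nq}=(\bra n\otimes\bra p)\,\Omega_{[U,\ket\varepsilon,\E]}\,(\ket q\otimes\ket m)$, so every matrix entry of $\Omega$ is recovered from $\Phi$ (and conversely), proving the bijection.

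Both directions of the lemma then follow at once: if $T^{(2)}=T^{(2)}{}'$ then $\Omega=\Omega'$, hence $\Phi=\Phi'$, i.e.\ the two pictures agree for every $V$; conversely, agreement for every $V$ means $\Phi=\Phi'$, whence $\Omega=\Omega'$ and $T^{(2)}=T^{(2)}{}'$. I expect the only delicate point to be the bookkeeping in the first paragraph---correctly threading the shared environment through the two copies of $U$ and commuting the swap $\mathfrak S$ past the second copy to reach $T^{(2)}=\Omega\,\mathfrak S$. This is most transparent in the circuit notation of Appendix~\ref{appendix_circuit_notations}, where the passage from $\Omega$ to $\Phi_\Omega$ is simply the bending of the two $\H$-wires (a cup/cap manipulation), which is manifestly invertible and makes the required bijection a formality.
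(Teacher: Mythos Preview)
Your proof is correct and rests on the same idea as the paper's: the passage from $T^{(2)}$ to the map $V\mapsto(I_\H\otimes\bra\varepsilon)\,U\,(V\otimes I_\E)\,U\,(I_\H\otimes\ket\varepsilon)$ is a wire-bending (process--state) duality, which is a linear bijection. The paper carries this out purely diagrammatically in four steps---project the two free $\H$-legs of $T^{(2)}$ onto basis vectors $\ket i,\bra j$, slide the wires, recognise the resulting insertion as $\ket i\bra j$, and use that $\{\ket i\bra j\}$ spans $\L(\H)$---whereas you unfold the same manipulation algebraically via the Kraus-type operators $K_k,\tilde K_k$ and check injectivity on matrix units. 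Your own final paragraph already identifies the diagrammatic shortcut; the paper simply takes that route from the start, which makes the argument shorter but yours makes the bijection $\Omega\leftrightarrow\Phi_\Omega$ more explicit.
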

\begin{proof}~
\begin{longtable}{CC@{}CCC}
&&\tikzfig{TM2_L}&=&\tikzfig{TM2_R}\\\\
\Leftrightarrow&\forall i,j,&\tikzfig{TM2_L-ij}&=&\tikzfig{TM2_R-ij}\\\\
\Leftrightarrow&\forall i,j,&\tikzfig{TM2_L-ijfilstires}&=&\tikzfig{TM2_R-ijfilstires}\\\\
\Leftrightarrow&\forall i,j,&\tikzfig{TM2_L-ketibraj}&=&\tikzfig{TM2_R-ketibraj}\\\\
\Leftrightarrow&\forall V\in\L(\H),&\tikzfig{TM2_L-V}&=&\tikzfig{TM2_R-V}\\\\
\end{longtable}
\end{proof}
By this lemma, since unitary operators span the whole space ${\mathcal L}(\H)$, if $T^{(2)}_{[U,\ket{\varepsilon},\E]}\neq T^{(2)}_{[U',\ket{\varepsilon}',\E']}$ then there exists a unitary operator $V\in\L(\H)$ such that
\[\tikzfig{TM2_L-V}\ \neq\ \tikzfig{TM2_R-V}.\]
Then by considering the following context: \[C[\cdot]=\tikzfig{boucletraversetrouV,1trou}\]
one gets \[\interp{C[U,\ket{\varepsilon},\E]}(\ket{\uparrow,0}\!\bra{\rightarrow,0}\otimes I_\H)=\ket{\uparrow,0}\!\bra{\rightarrow,0}\otimes\tikzfig{TM2_L-V}\] whereas \[\interp{C[U',\ket{\varepsilon'},\E']}(\ket{\uparrow,0}\!\bra{\rightarrow,0}\otimes I_\H)=\ket{\uparrow,0}\!\bra{\rightarrow,0}\otimes\tikzfig{TM2_R-V}.\] Hence $\interp{C[U,\ket{\varepsilon},\E]} \neq \interp{C[U',\ket{\varepsilon'},\E']}$, which proves that $[U,\ket{\varepsilon},\E]\not\approx_2[U',\ket{\varepsilon}',\E']$.

\end{itemize}

\end{proof}

\subsection{Proof of Proposition~\ref{prop:diff}}
\label{appendix:equiv}

\propdiff*

\begin{proof}
{[$\approx_{iso} ~\subseteq~ \approx_2$]} 
Since $ \approx_2$ is an equivalence relation it is enough to show that $\triangleleft_{iso}\subseteq \approx_2$. If $[U,\ket{\varepsilon},\E] \triangleleft_{iso} [U',\ket{\varepsilon'},\E']$, then the three conditions of Theorem \ref{caracobsequivlevel2} are satisfied, implying  $[U,\ket{\varepsilon},\E] \approx_2 [U',\ket{\varepsilon'},\E']$.
\\
{[$\approx_{2} ~\neq~ \approx_{iso}$]}
We consider the  following two purified $\mathbb C$-channels: $[X,\ket{0},\mathbb C^3]$ and $[XN,\ket{0},\mathbb C^3]$ where $X=\ket{x}\mapsto \ket{x{-}1\bmod 3}$ and $N = \ket x \mapsto (-1)^x \ket x$ are two (qutrit) unitary transformations. The two purified channels are $\approx_2$-equivalent as they satisfy the conditions of Theorem~\ref{caracobsequivlevel2}. In order to show that they are not iso-equivalent, note that if two purified $\mathbb C$-channels $[U,\ket{\varepsilon},\E]$ and $[U',\ket{\varepsilon'},\E']$ are iso-equivalent then for any $k\ge 0$ one has $\bra{\varepsilon}U^k\ket{\varepsilon} = \bra{\varepsilon'}WU^k\ket{\varepsilon}= \bra{\varepsilon'}U'^kW\ket{\varepsilon} = \bra{\varepsilon'}U'^k\ket{\varepsilon'}$. Since $\bra 0 X^3\ket 0 =1 \neq -1 = \bra 0 (XN)^3\ket 0$, it follows that $[X,\ket{0},\mathbb C]$ and $[XN,\ket{0},\mathbb C]$ are indeed not iso-equivalent.
\\
{[$\approx_2 ~\subsetneq~ \approx_1~\subsetneq~ \approx_0$]} The inclusions are clear from the characterisations of Theorems~\ref{thm:cptpchannels}, \ref{caracobsequivnegfree} and~\ref{caracobsequivlevel2}, together with Remark~\ref{SP2impliesSP1}. The fact that the inclusions are strict follows from the observations that the various conditions appearing in these theorems are non-redundant.
\end{proof}

\end{document}